\definecolor{darkred}{RGB}{150,50,50}
\definecolor{brown}{RGB}{250,100,100}
\definecolor{green}{RGB}{000,150,100}
\definecolor{purple}{RGB}{250,000,180}
\def\blue{\color{black}}
\def\hbar{\bar{h}}
\definecolor{red2}{rgb}{0.7, 0, 0.1}
\def\biggiven{\,\big{|}\,}
\def\bv{\mathbf{v}}
\def\argmindum{\mathop{\mbox{argmin}}}
\def\argmin#1{\argmindum_{#1}}
\definecolor{purple}{rgb}{0.84, 0.17, 0.89}
\definecolor{darkblue}{rgb}{0,0,0.6}
\definecolor{darkblue1}{rgb}{0,0,0.7}
\definecolor{darkred}{rgb}{0.6,0,0}
\definecolor{grey}{rgb}{0.6,0.6,0.5}
\def\bx{\mathbf{x}}
\def\Uscr{\mathscr{U}}
\def\Fscr{\mathscr{F}}
\def\Lscr{\mathscr{L}}
\def\btheta{\boldsymbol{\theta}}
\def\suponetrans{^{\scriptscriptstyle \sf (1) T}}
\def\supmtrans{^{\scriptscriptstyle \sf {\tiny (}m{\tiny)} T}}
\def\supone{^{\scriptscriptstyle \sf (1)}}
\def\supm{^{\scriptscriptstyle \sf \text{\tiny(}m\text{\tiny)}}}
\def\independenT#1#2{\mathrel{\rlap{$#1#2$}\mkern2mu{#1#2}}}
\newcommand{\indp}{\protect\mathpalette{\protect\independenT}{\perp}}
\def\bX{\mathbf{X}}
\def\subsash{_{\scriptscriptstyle \sf SASH}}
\def\subipd{_{\scriptscriptstyle \sf IPD}}
\def\bone{\mathbf{1}}
\def\bzero{\mathbf{0}}
\newtheorem{assumption}{Assumption}
\newcommand{\beq}{\begin{equation}}
\newcommand{\eeq}{\end{equation}}
\newcommand{\beas}{\begin{eqnarray*}}
\newcommand{\eeas}{\end{eqnarray*}}
\newcommand{\bea}{\begin{eqnarray}}
\newcommand{\eea}{\end{eqnarray}}
\newcommand{\bet}{\begin{theorem}}
\newcommand{\eet}{\end{theorem}} 
\newcommand{\bel}{\begin{lemma}}
\newcommand{\eel}{\end{lemma}}
\newcommand{\bep}{\begin{proposition}}
\newcommand{\eep}{\end{proposition}}
\newcommand{\bed}{\begin{definition}}
\newcommand{\eed}{\end{definition}}
\newcommand{\bec}{\begin{corollary}}
\newcommand{\eec}{\end{corollary}}
\newcommand{\bex}{\begin{example}}
\newcommand{\eex}{\end{example}}
\newcommand{\bei}{\begin{itemize}}
\newcommand{\eei}{\end{itemize}}
\newcommand{\ben}{\begin{enumerate}}
\newcommand{\een}{\end{enumerate}}
\def\0{\boldsymbol{0}}
\def\bX{\mathbf{x}}
\def\bX{\boldsymbol{X}}
\def\R{\mathbb{R}}
\def\bX{\boldsymbol{x}}
\def\Bcal{\mathcal{B}}
\def\d{\mathrm{d}}
\newcommand{\Rbb}{\mathbb{R}}
\newbox\TempBox \newbox\TempBoxA
\def\trans{^{\scriptscriptstyle \sf T}}
\def\text#1{\mbox{\sf #1}}
\def\bX{\mathbf{X}}
\def\bbeta{\boldsymbol{\beta}}
\def\boeta{\boldsymbol{\boeta}}
\def\bbetahat{\widehat{\bbeta}}
\def\Lscr{\mathscr{L}}
\def\betahat{\widehat{\beta}}
\def\bep{\boldsymbol{\epsilon}}
\def\bgamma{\boldsymbol{\gamma}}
\def\bgammahat{\widehat{\bgamma}}
\def\bgammatilde{\widetilde{\bgamma}}
\def\Xihat{\widehat{\Xi}}
\def\bv{\boldsymbol{v}}
\def\bthetahat{\widehat{\btheta}}
\def\bthetahat{\widehat{\btheta}}
\def\subsup{_{\scriptscriptstyle \sf sup}}
\def\bOmegahat{\widehat{\boldsymbol{\Omega}}}
\def\fhat{\widehat{f}}
\def\sumiNm{\sum_{i=1}^{N_m}}
\def\betahat_1{\widehat{\beta}_1}
\def\alphahat{\widehat{\alpha}}
\def\Pbb{\mathbb{P}}
\def\Ebb{\mathbb{E}}
\def\E{\mathbb{E}}
\def\df{{\rm df}}
\def\bW{\mathbb{W}}
\def\Pbbhat{\widehat{\Pbb}}
\def\x{\boldsymbol{x}}
\newcommand{\citeg}[1]{\citep[e.g.,][]{#1}}
\begin{document}

\title{Efficient Modeling of Surrogates to Improve Multi-source High-dimensional Integrative Regression}

\author{\name Yue Liu$^*$ \email yueliu@fas.harvard.edu \\
       \addr Department of Statistics\\
       Harvard University\\
       Cambridge, MA 02138, USA
       \AND
       \name Molei Liu$^*$ \email moleiliu@bjmu.edu.cn \\
       \addr Peking University Health Science Center;\\
       Beijing International Center for Mathematical Research;\\
       Center for Statistical Science\\
       Peking University \\
       Haidian District, Beijing 100084, China
       \AND
       \name Zijian Guo \email zijguo@zju.edu.cn\\
       \addr Center for Data Science\\
       Zhejiang University\\
       Xixi District, Hangzhou 310058, China
       \AND
       \name Tianxi Cai \email tcai.hsph@gmail.com \\
       \addr Department of Biostatistics\\
       Harvard Chan School of Public Health\\
       Boston, MA 02115, USA
       }

\editor{ }

  \maketitle

\footnotetext[1]{Yue Liu and Molei Liu contributed equally. Correspondence to: Molei Liu (moleiliu@bjmu.edu.cn).}

\begin{abstract}

\noindent Surrogate variables play an important role in various fields due to the scarcity or absence of gold-standard labels. We develop a novel approach named SASH for {\bf S}urrogate-{\bf A}ssisted and data-{\bf S}hielding {\bf H}igh-dimensional integrative regression. It is a semi-supervised approach that efficiently leverages sizable unlabeled samples with error-prone surrogate outcomes from multiple local sites to improve model estimation using the small gold-labeled sample. To facilitate stable and efficient knowledge extraction from the surrogates, our method first obtains a preliminary supervised estimator, and then uses it to assist in training a regularized single-index model (SIM) for the surrogates. Interestingly, through a chain of convex and properly penalized sparse regressions that approximate the SIM loss using bias correction, our method avoids the problem of local minima in the SIM, and fully eliminates the impact of the preliminary estimator's excessive error. In addition, it protects individual-level information through the aggregation of summary statistics from local sites, leveraging a similar idea of bias-corrected approximation. Through simulation studies, we demonstrate that our method outperforms existing approaches. Finally, we apply our method to develop a genetic risk model for type 2 diabetes using large-scale data sets from UK and Mass General Brigham biobanks, where only a small fraction of subjects in one site are labeled through manual chart review.

\end{abstract}

\begin{keywords}
Electronic health records, Surrogate-assisted semi-supervised learning, Single index model, Federated learning, Sparse regression, One-step bias correction.

\end{keywords}

\section{Introduction}

\subsection{Background}

Electronic health records (EHRs) have become a crucial resource for a growing number of data-driven biomedical studies including disease diagnosis \citep{beaulieu2020examining,zhang2020maximum}, clinical knowledge extraction \citep{harnoune2021bert,hong2021clinical}, treatment response evaluation \citep{franklin2021real}, and genotype-phenotype association characterization when linked with biobank data sets \citep{kohane2011using,cai2018association,maiorino2023phenomics}. However, realizing the potential of EHRs in data-driven research is challenging. This is because surrogate variables, such as diagnostic codes, are widely available but often noisy, failing to accurately reflect the true disease status and introducing bias \citep{zhang2019high,geva2021high}. Manual chart review, while accurate, is labor-intensive and yields only a small number of gold-standard labels, exacerbating challenges in analyzing high-dimensional data. This creates a trade-off: surrogates reduce variance due to their large sample sizes but introduce bias, whereas gold labels are unbiased but inflate variance due to their limited availability. To balance this trade-off, robust semi-supervised methods are needed to integrate large, noisy surrogate data sets with small samples with gold-standard labels efficiently.

{ 
In addition to EHR studies, surrogate variables also play an important role in diverse application fields. For example, in time-series forecasting, surrogates often represent intermediate data points collected between the observed present and the predicted future. They are useful for improving predictions in social and economic contexts by bridging the gap between initial and final observations \citep{box2015time}. In addition, various types of surrogates have been introduced for long-term user experience characterization in recommendation systems \citep{wang2022surrogate}, automated vehicle safety modeling \citep{wang2021review}, and purchasing behavior prediction on e-commerce platforms \citep{wu2018turning}.}

When combining data across institutions, privacy constraints further complicate matters, as individual-level data cannot be directly shared. Frameworks like DataSHIELD \citep{wolfson2010datashield} allow only the transfer of summary statistics, which, along with high-dimensional data, intensifies the issue of noisiness and label scarcity \citeg{jones2012datashield,doiron2013data,gaye2014datashield}. Our real-world study aims to address the combined challenges of label scarcity, high-dimensionality, and privacy constraints.


\subsection{Problem Setup}\label{sec:problem}

Suppose individual-level data are stored at $M$ local sites indexed by $1,2,\ldots, M$. In each site $m\in\{1,2,\ldots, M\}$, there are $N_m$ independent and identically distributed subjects with underlying full data $\Fscr\supm = \{(Y_i\supm, S_i\supm, \bX_i\supmtrans)\trans:i = 1,2,\ldots, N_m\}$, where $\bX_i\supm$ is the $p$-dimensional covariate vector of subject $i$ in site $m$, $Y_i\supm$ denotes the binary true outcome of interest, and $S_i\supm$ is some imperfect surrogate or proxy of $Y_i\supm$. 

We assume that the true label $Y_i\supm$ is only observed for $m=1$ and $i=1,2, ...,n$ with $n \ll \min(N_1,p)$ and denote the labeled subset of subjects in site $1$ by $\Lscr\supone = \{(Y_i\supone, S_i\supone, \bX_i\suponetrans)\trans:i = 1,2,\ldots, n\}$, while for the remaining subjects in site $1$ and all subjects from sites $2,\ldots,M$, we only observe their covariates and surrogates, and let $\Uscr\supm = \{(S_i\supm, \bX_i\supmtrans)\trans: i = 1,2,\ldots, N_m\}$ denote the observed unlabeled data for $m=1,2,\ldots,M$. For the true outcome $Y_i\supm$, we are interested in estimating the coefficients in the logistic risk model
\begin{equation}
\Pbb(Y_i\supm=1 \mid \bX_{i}\supm)=g({ \alpha_{0}\supm}+\bbeta_0\trans\bX_i\supm) \ \mbox{with $g(a)=e^a/(1+e^a)$},\quad \mbox{for}\  1\leq m\leq M,
\label{equ:asu:1}
\end{equation}
where $\alpha_0\supm$ is the intercept for site $m$ and $\bbeta_0$ represents the high-dimensional sparse model coefficients shared by the local sites. One could simply use $\ell_1$-regularized logistic regression on the labeled data $\Lscr\supone$ to obtain a supervised estimator of $\bbeta_0$. However, this may not perform well due to the small size of $\Lscr\supone$ and the high-dimensionality.

To estimate and infer $\bbeta_0$ more efficiently, we leverage the large unlabeled samples from all sites with the surrogate $S_i\supm$ that is predictive of $Y_i\supm$ but imperfect and error-prone. { Our main model assumption for the surrogate outcome $S\supm_i$ is that it relates to $\bX_i\supm$ through a single index model (SIM):
\begin{equation}
\Ebb(S\supm_{i}\mid \bX\supm_{i} )=\breve f_m(\bbeta_0\trans \bX\supm_i)= f_{m}(\bgamma_0\trans \bX\supm_i ) \quad \text{for} \quad 1\leq m\leq M,
\label{eqn:sim}
\end{equation} 
where $\bgamma_0$ is a scaled vector with the {\em same direction} as $\bbeta_0$ and $\breve f_m$ and $f_m$ are some unknown and nuisance link functions varying across different sites.

The SIM assumption (\ref{eqn:sim}) holds in various practical scenarios such as the post hoc surrogate illustrated in diagram (i) on the right panel of Figure \ref{fig:dag}, which is frequently used in practice \citeg{hong2019semi}. In this case, $S_i\supm$ is a proxy coming directly from the disease outcome $Y_i\supm$ appearing between $S_i\supm$ and the baseline risk factors $\bX_i\supm$. For example, in our real-world study, $\bX_i\supm$ is genetic variants and $S_i\supm$ is the total count of diagnostic codes for the disease $Y_i\supm$. The diagnostic counts $S_i\supm$ are related to the disease-causing genetic variants only through the disease status, leading to the conditional independence:
\begin{equation}
S\supm_i\indp \bX_i\supm\mid Y\supm_i\quad \mbox{for} \quad m=1,2,\ldots,M.
\label{eqn:asu:indp}
\end{equation}
In Proposition \ref{prop:1} (justified in Appendix \ref{sec:proof:prop}), we show that our SIM assumption (\ref{eqn:sim}) holds under the logistic model assumption (\ref{equ:asu:1}) and the conditional independence condition (\ref{eqn:asu:indp}) that is reasonable for post hoc surrogates.
\begin{proposition}
The surrogate SIM assumption (\ref{eqn:sim}) holds under conditions (\ref{equ:asu:1}) and (\ref{eqn:asu:indp}).
\label{prop:1}
\end{proposition}}

Assumption (\ref{eqn:sim}) also holds in other contexts such as the early-endpoint surrogate described by diagram (ii) in Figure \ref{fig:dag} and the example (iii). Early-endpoint surrogates in (ii) are frequently encountered in contemporary clinical and biomedical studies \citep{prentice1989surrogate,vanderweele2013surrogate}. They encompass factors that can be rapidly assessed and used as an indicator of the long-term outcome $Y$. For instance, the tumor response rate could be used as a surrogate for overall survival in clinical studies. Similar to Proposition \ref{prop:1}, it is not hard to justify that such types of surrogates in (ii) and (iii) also follow the SIM (\ref{eqn:sim}) under the outcome model (\ref{equ:asu:1}).

Note that $\bbeta_0$ can be identified from \eqref{eqn:sim} only up to scalar multiples. To ensure identifiability, we assume that $\beta_{01}\neq 0$ and take $\bgamma_0=\bbeta_0/\beta_{01}$, consequently $\gamma_{01}=1$. For training stability, one may set the first covariate in $\bX$ as the risk factor believed to have a strong association with the outcome so that $\beta_{01}$ is away from zero. This requires moderate prior knowledge of the model. For example, in our T2D genetic risk study, we set $X_1$ as the ethnicity variable since it has been found and confirmed that the risk of T2D is significantly associated with ethnicity \citep{abate2003impact}.

Under the framework introduced above, we aim to derive an efficient estimator for $\bbeta_0$ through semi-supervised learning assisted by the surrogate while overcoming the DataSHIELD constraint \citep{wolfson2010datashield} that only summary statistics like mean vectors and covariance matrices are allowed to be transferred across the local sites. In our method, we will only transfer summary-level data between site $1$ (with a few samples of $Y$) and the other sites. We illustrate this data-sharing scheme as well as the structure of our observed data sets in the left panel of Figure \ref{fig:dag}.

\begin{figure}[htb!]
\centering
\includegraphics[width=0.52\textwidth]{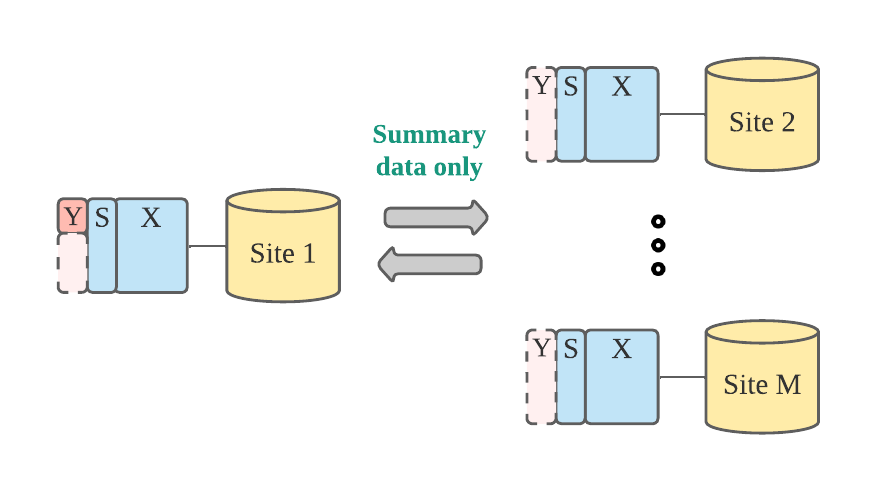}
\includegraphics[width=0.42\textwidth]{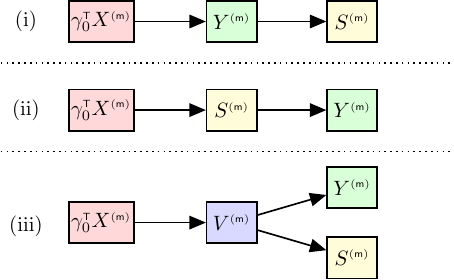}
\caption{\label{fig:dag} {Left panel: an illustration of the structure of the observed data and the scheme of communication under DataSHIELD. Right panel: directed acyclic graphs (DAGs) of data generation examples. (i) Post hoc surrogates; (ii) Early-endpoint surrogates; (iii) An illustrative example in which the surrogate is neither post hoc nor early-endpoint but our SIM assumption (\ref{eqn:sim}) still holds. Here $V\supm$ is some key latent factor mediating the effects of $X\supm$ on both $S\supm$ and $Y\supm$.}}
\end{figure}

\subsection{Related Literature}\label{sec:intro:lit}

{
Surrogates play an important role in various application fields where the collection of the primary or true outcome of interest is expensive, e.g., requiring intensive human efforts or long-term follow-up. With a partially observed true outcome $Y$ and fully observed surrogate $S$, surrogate-assisted semi-supervised (SAS) learning methods have been proposed to leverage unlabeled data to improve the learning accuracy of the true outcome model in the paucity or even absence of the gold-standard label $Y$. The key is to model and incorporate $S$ in a way that could reduce the variance without introducing bias. For example, \cite{huang2018pie}, \cite{zhang2020maximum}, and \cite{hong2019semi} used a maximum likelihood framework to address error-prone or anchor-positive labels. \cite{kallus2020role} studied the semiparametric efficiency bounds for estimating the average treatment effects (ATE) in the presence of surrogates, to demonstrate the usefulness of surrogates in causal inference. \cite{athey2019surrogate} and \cite{athey2020combining} proposed methods that leverage observational data and secondary surrogates to mitigate the missingness of the long-term primary outcome $Y$ in experimental data. \cite{lin2021learning} addressed the noisiness of surrogate outcomes in deep learning through a novel algorithm curbing memorization of the noisy instances in neural networks.

For high-dimensional data, \cite{zhang2022prior} proposed an adaptive SAS sparse regression approach that effectively leverages valid and informative surrogates while maintaining robustness to the surrogates of poor quality. They directly modeled $S$ against a linear function of $\bX$, which is valid for the SIM only when $\bX$ satisfies a restrictive Gaussian condition. They also aim to model $Y$ against both $S$ and $\bX$ for the purpose of EHR phenotyping. In contrast, we are interested in $Y\sim \bX$ only, which is for risk prediction with the baseline factors $\bX$, as well as identifying key risk factors. Recently, \cite{hou2021efficient} introduced a working imputation model for $Y$ including both $\bX$ and $S$ as predictors and deployed it on unlabeled data to estimate the high-dimensional sparse $Y\sim \bX$. Their method is guaranteed to improve statistical accuracy when the imputation model is sparser than $Y\sim \bX$. To the best of our knowledge, there is no existing SAS method that is effective under a flexible SIM in (\ref{eqn:sim}) with a non-Gaussian design $\bX$, let alone one that integrates multi-source SIMs under the DataSHIELD constraint. As will be discussed later, this methodological gap is due to the statistical and computational challenges of sparse SIM under non-Gaussian designs.

}

With the promise of more generalizable research findings through integrative analysis of multi-institutional data, much research in recent years has focused on federated learning that achieves information integration without sharing individual-level data. 
For example, \cite{he2016sparse} utilized this strategy for sparse meta-analysis assuming that the model coefficients from local sites have similar supports. Under high-dimensional settings, \cite{lee2017communication} and \cite{battey2018distributed} conducted communication-efficient distributed learning for the high-dimensional model by averaging and truncating the local debiased Lasso estimators \citeg{van2014asymptotically}. \cite{cai2021individual} proposed a federated learning method that accommodates model heterogeneity across local sites and complies with the DataSHIELD constraint by aggregating bias-corrected summary statistics. \cite{jordan2018communication} proposed an approximate likelihood
approach for communication-efficient aggregation of homogeneous data under the DataSHIELD constraint and \cite{duan2022heterogeneity} extended their framework to address distributional heterogeneity. 
Nevertheless, all these methods require observing the outcome $Y$ in all local sites. In our current set-up with only a few gold labels available in one site and abundant samples with the error-prone surrogate $S$ in all sites, directly implementing existing methods for $Y$ on the error-prone $S$ can result in severe bias due to the error of $S$ and the heterogeneity of $Y\mid S$ across the sites.


Technically, our work is also relevant to the estimation and inference of high-dimensional sparse SIM. Due to the unknown link function, it is much more challenging to study sparse SIM under high-dimensional settings. A common strategy is to employ surrogate loss functions. Specifically, \cite{neykov2016l1} and \cite{eftekhari2021inference} used $\ell_1$-regularized linear regression to learn the sparse coefficients in SIM. The validity of such a surrogate-loss approach heavily relies on the Gaussian (elliptical) distribution assumption of the design, which is often violated in practice. Under SIM with a non-Gaussian design, \cite{radchenko2015high} used splines to model the unknown link function and gradient-based procedures to estimate coefficients. To the best of our knowledge, no existing work ensures convergence of the optimization algorithm for sparse SIM with a non-Gaussian design matrix. 

{ Semiparametric efficient score functions have been frequently used to correct regularization bias in various inference problems such as those of high-dimensional generalized linear models (GLM) \citep{zhang2014confidence,van2014asymptotically,javanmard2014confidence}, treatment and structural effects \citeg{chernozhukov2016double}, and deep neural networks \citeg{farrell2021deep}. Recently, \cite{wu2021model} developed a debiasing method to infer the coefficient of the high-dimensional sparse SIM, which is motivated by the semiparametric efficient score for SIM coefficients derived in \cite{liang2010estimation}. 

More complicated than earlier work, the debiasing form of \cite{wu2021model} can simultaneously remove the first-order regularization bias of the parametric part and the smoothing bias of the nonparametric link function. This is because the unknown link in SIM is a nuisance model for the target coefficients. Our work essentially leverages this property to construct the one-step approximation for the non-convex SIM loss, which enables both the iterative updates towards the global solution and efficient data aggregation under the privacy constraints. A similar intrinsic relationship between debiased inference and efficient one-step approximation has been discussed in the context of high-dimensional GLMs \citep{cai2021individual}, but not studied for the SIM with a larger technical complication.

}

\subsection{Our Contribution}
To address limitations of existing approaches for sparse SIM and effectively leverage the surrogates to improve the learning efficiency on the outcome model in a multi-source setting, we propose a novel {\bf S}urrogate-{\bf A}ssisted and data-{\bf S}hielding {\bf H}igh-dimensional integrative regression (SASH) method. {First, it extracts a preliminary estimator with the small labeled samples only, to initiate the training of high-dimensional SIM (\ref{eqn:sim}) at each local site. Interestingly, our use of the preliminary supervised estimator plays an important role in ensuring the desirable convergence of the sparse SIM regression while the impact of its error can be fully removed through proper iteration and regularization. Then SASH derives proper bias-corrected summary statistics to approximate the SIM loss in local sites, and aggregates them to obtain an accurate estimator for the direction coefficients of $\bbeta$. Finally, it refits to the labeled data with the estimated direction to obtain the estimator for $\bbeta$.

We show that under mild and reasonable assumptions, our SASH estimator attains the optimal convergence rate, outperforms the supervised and local analyses, and incurs no first-order convergence-rate loss relative to the ideal IPD estimator obtained by pooling the individual data. We also demonstrate that our method achieves better performance than recent SAS methods in simulation and real-world studies. This is because the above-reviewed SAS learning methods \citeg{hou2021efficient,zhang2022prior} incorporate the surrogate $S$ through fully parametric regression or unstable SIM-training procedures that fail to characterize the semiparametric SIM (\ref{eqn:sim}) effectively and consequently cause efficiency loss. Technically, this limitation is mainly due to the convergence problem caused by local minima in single-index regression under non-Gaussian designs, which is even more pronounced in high-dimensional settings.

{ To overcome the convergence issue of the sparse SIM, we introduce a preliminary supervised estimator to guide the training, followed by two key innovative components: an efficient (bias-corrected) one-step approximation for the loss of SIM and an iterative updating algorithm with carefully tuned parameters. The resulting semi-supervised estimator achieves an error rate independent of the small labeled sample size $n$, instead leveraging the large unlabeled data size. However, the preliminary estimator driven by the small $n$ plays a central role in guiding our iterative algorithm and protecting it from falling into local minima at the very beginning. Importantly, we justify that an $\ell_2$-consistent preliminary estimator, guaranteed by the mild sparsity and regularity assumptions, is sufficient for our method to achieve the same convergence rate as the global optimum without reliance on the small $n$.} At first glance, this seems to be just a simple improvement of the Newton-type optimization. However, our strategy of addressing the nonparametric estimator in SIM via orthogonality, as well as adapting the regularization to control the excessive approximate errors, is actually more sophisticated than approaches in the existing literature and essential for eliminating the impact of the high-error preliminary estimator and ensuring the optimal convergence rate. Our work is also the first one to achieve an efficient multi-site aggregation for high-dimensional sparse SIM under the DataSHIELD constraint. This advancement is also based on our novel bias-corrected approximation procedure for SIM.

}


\section{Method}\label{sec:method}
\subsection{Outline of SASH}\label{sec:method:out}

Let $\Pbbhat\supm$ denote the empirical probability measure with observed data in $\Uscr\supm$ and $\Pbbhat_n\supone$ denote the empirical probability measure on the labeled data set $\Lscr\supone$. For a vector $\bv=(v_{1},v_2, \ldots, v_{d})\trans \in \mathbb{R}^{d}$ and $q>0$, we denote the $\ell_q$ norm of $\bv$ as $\|\bv\|_{q}=\big(\sum_{i=1}^{d}|v_{i}|^{q}\big)^{1 / q}$. For ease of notation, we drop the superscript in $\alpha\supone$ and denote it as $\alpha$ throughout the remainder of the paper since the intercept terms in sites $2,3,\ldots,M$ do not appear in our formulation. We write $a_n=O(b_n)$ or $a_n \lesssim b_n$ if $a_n \leq Cb_n$ for some constant $C>0$, 
write $a_n=o(b_n)$ if
$\lim_{n \rightarrow \infty} {a_n}/{b_n}=0$, 
and write $a_{n} \asymp b_{n}$ if $C \leq a_{n} / b_{n} \leq C^{\prime}$ for some constants $C, C^{\prime}>0$. Also, we write $a_n=O_p(b_n)$ and $a_n=o_p(b_n)$ respectively for $a_n=O(b_n)$ and $a_n=o(b_n)$ with probability approaching $1$.

\begin{algorithm}[htb!]
\caption{\label{alg:1} Outline of SASH.}

    {{\bf Input:} $\Lscr\supone$ and $\{\Uscr\supm, m=1,...,M\}$.}

\vspace{0.15cm}

    \textbf{Step I}: Obtain an initial estimator of the direction vector $\bgamma_0$ using $\Lscr\supone$ through supervised learning \eqref{eqn:step1}, and broadcast the direction estimator $\bgammatilde\subsup$ to all local sites. 
    
\vspace{0.15cm}

    \textbf{Step II}: In each local site, fit the SIM to estimate the direction $\bgamma_0$ using the preliminary direction estimator from site $1$ as shown in Algorithm \ref{alg:2}. Then derive summary statistics to approximate the individual-level loss function, and transfer them to site $1$.
    
    \vspace{0.15cm}
    
    \textbf{Step III}: Aggregate in site $1$ the summary statistics from the local sites to obtain the final estimator of $\bbeta_0$.
\end{algorithm}

In Algorithm \ref{alg:1}, we outline SASH, a {\em three-step} learning approach that effectively leverages the unlabeled data $\{\Uscr\supm , m = 1, ..., M\}$ and the labeled subset $\Lscr\supone$. Our identification strategy is to decompose the target $\bbeta_0$ into two parts, its direction $\bgamma_0$ and magnitude $\beta_{01}$. We first estimate $\bgamma_0$ precisely through fitting model (\ref{eqn:sim}), the SIM of $S\supm_i\sim\bX\supm_i$, with the large samples $\{\Uscr\supm , m = 1, ..., M\}$ via federated SIM learning. Crucially, this procedure is supported by a preliminary supervised estimator derived with the labeled data $\Lscr\supone$. Then for the one-dimensional magnitude $\beta_{01}$, we estimate it using $\Lscr\supone$ and the estimate of $\bgamma$. The details of the main steps are described in Section \ref{sec:method:main}.

\subsection{Details of the main steps}\label{sec:method:main}

\paragraph{Step I.} We first obtain an initial estimator by fitting a supervised penalized logistic model using the small labeled samples in site $1$:
\begin{equation}\label{eqn:step1}
\{\widetilde{\alpha}\subsup,\widetilde{\bbeta}\subsup\} = \argmin{\alpha\in \R,\bbeta \in \R^{p}} 
\Pbbhat_n\supone \ell\{Y,g(\alpha+\bbeta\trans \bX)\}
+ \lambda\subsup \|\bbeta_{\text{-}1}\|_1, 
\end{equation}
where $\ell(y,x)=-\{y\log x + (1-y)\log(1-x)\}$ and $\lambda\subsup>0$ is a tuning parameter. Since we assume that $\beta_{01}\neq 0$, we impose the Lasso penalty only on $\bbeta_{\text{-}1}=(\beta_2,\ldots,\beta_p)\trans$ in (\ref{eqn:step1}), thereby avoiding direct shrinkage of the normalizing coefficient toward zero. We then broadcast $\widetilde{\bgamma}\subsup=\widetilde{\bbeta}\subsup/\widetilde{\beta}_{{\scriptscriptstyle \sf sup},1}$ to all sites for local training of the sparse SIM in the next step. Here, the normalization of $\widetilde{\bbeta}\subsup$ with $\widetilde{\beta}_{{\scriptscriptstyle \sf sup},1}$ is made corresponding to the identification condition $\gamma_1=1$.


\setcounter{theorem}{0}

\begin{remark}
{We assume that at least one of the risk factors is {\em known} to be strongly associated with $Y$ and specify it as $X_1$. Without such prior knowledge, it is plausible to modify our proposed algorithm by introducing a screening procedure to select a strong risk factor of $Y$ at the beginning. Existing variable selection methods such as Lasso \citep{tibshirani1996regression} and sure independence screening \citep{fan2008sure} could be applied to the labeled sample $\Lscr\supone$. Alternatively, surrogate-based screening could be applied to the combined $(S,\bX)$ data from the labeled sample and the unlabeled samples $\{\Uscr\supm, m=1,\ldots,M\}$. 
} 
\label{rem:id}
\end{remark}

\paragraph{Step II.} 

Based on Step I, we next model $S\supm_i\sim\bX\supm_i$ in each local site, to obtain a more accurate estimator for $\bgamma$ and use it to derive appropriate summary statistics for integrative analyses. For this purpose, we first define the loss of the SIM (\ref{eqn:sim}). Given any $\bgamma$ and subject $\bX\supm_i$, the unknown link function $f_{m}(\cdot)$ can be naturally estimated through a kernel smoothing estimator:
\begin{equation}
\fhat_m(\bX_i\supm;\bgamma)=\frac{\sum_{j=1,j\neq i}^{N_m} K_{h_m}(\bgamma\trans\bX_j\supm-\bgamma\trans\bX_i \supm)S\supm_j}{\sum_{j=1,j\neq i}^{N_m} K_{h_m}(\bgamma\trans\bX_j\supm-\bgamma\trans\bX_i\supm)}
\label{equ:kern}
\end{equation}
with $K_{h_m}(u)=K(u/h_m)/h_m$, where $K(u)$ represents a kernel function and $h_m>0$ is a bandwidth parameter specific for each $m$. Then the individual-level loss function for $S\supm$ can be formulated as
\begin{equation}
L\supm(\bgamma)=\Pbbhat\supm\widehat\phi_m(\bX;\bgamma)\{S-\fhat_m(\bX;\bgamma)\}^2,
\label{equ:idl}
\end{equation}
where $\widehat\phi_m(\bX;\bgamma)$ is a weighting function introduced to improve estimation efficiency and it may be data-driven. Typically, improper specification of $\phi_m$ has no impact on the convergence rate of the estimator but the finite-sample performance may be affected. In Appendix \ref{sec:app:method}, we provide practical suggestions on $\phi_m$ for binary, count, and continuous surrogates.

The objective (\ref{equ:idl}) is challenging to optimize since the presence of $\fhat_m(\bX;\bgamma)$ makes it non-convex and, thus, prone to the problem of local minima. Our solution is to initiate at the supervised estimator $\widetilde{\bgamma}\subsup$ and approximate (\ref{equ:idl}) with some quadratic function of $\bgamma\supm$. 
For the quadratic approximation with sufficiently small bias, the key procedure is to linearize $\fhat_m(\bX_i\supm;\bgamma)$ with respect to $\bgamma$. Heuristically, with some preliminary $\widetilde{\bgamma}$, e.g., $\widetilde{\bgamma}=\widetilde{\bgamma}\subsup$, we approximate $\widehat{f}_{m}(\bX_i\supm;\bgamma)$ by
\begin{equation}
\widehat{f}_{m}(\bX_i\supm;\bgamma)\approx\fhat_m(\bX_i\supm;\widetilde{\bgamma}) + (\bgamma-\widetilde{\bgamma})\trans\partial_{\bgamma}\fhat_m(\bX_i\supm;\widetilde{\bgamma})+O(\|\bgamma-\widetilde{\bgamma}\|_2^2),
\label{equ:linear:f}
\end{equation}
where $\partial_{\bgamma}\fhat_m(\bX;\bgamma)$ represents the partial derivative function of $\fhat_m(\bX;\bgamma)$ on $\bgamma$ with its explicit form given by equation (\ref{equ:form:pf}) in Appendix. To make the explicit form of $\partial_{\bgamma}\fhat_m(\bX;\bgamma)$ extractable, we further require $K(u)$ to be differentiable with derivative $K'(u)$ and $K'_h(u)=h^{-2}K'(u/h)$, as satisfied, for example, by a suitably normalized triweight kernel $K(u)\propto(1-u^2)^3\mathbf 1(|u|\le1)$. Based on (\ref{equ:linear:f}), we define the one-step approximation of the SIM loss $L\supm(\bgamma)$ around $\widetilde\bgamma$ as
\begin{equation}
Q\supm(\bgamma;\widetilde\bgamma)=\Pbbhat\supm\left\{S-\fhat_m(\bX;\widetilde\bgamma)-(\bgamma-\widetilde\bgamma)\trans\partial_{\bgamma}\fhat_m(\bX;\widetilde\bgamma)\right\}^2.
\label{equ:quad:app}
\end{equation}

As will be justified in Section \ref{sec:theory}, impact of the preliminary $\widetilde{\bgamma}$ on the solution of (\ref{equ:quad:app}) is up to the second order, in a similar spirit to the classic one-step approximation theory \citeg{bickel1975one}. However, when $n$ is much smaller than $N_m$, the second-order impact of $\widetilde{\bgamma}\subsup$ can still exceed the error of the desirable global solution to $L\supm(\bgamma)$ and, thus, dominate the error rate. Thus, instead of one-shot updating, we carry out multiple rounds of updates on $\bgamma$ with $Q\supm(\bgamma;\widetilde\bgamma)$ in our proposed Algorithm \ref{alg:2}.

\begin{algorithm}[htb!]

\caption{\label{alg:2} Sparse SIM regression with multi-round convex approximation.}
{{\bf Input:} The preliminary estimator $\widetilde{\bgamma}\subsup$, samples $\{\Uscr\supm, m =1, ...,M\}$ with surrogates, and the number of iterations $T_m$. }
\vspace{0.15cm}
  {{\bf For} site $m=1,2,\ldots,M$, }initiate with $\widehat\bgamma_{[0]}\supm=\widetilde{\bgamma}\subsup$,
    
    {\hspace{0.3in}{\bf For} $t=1,2,\ldots,T_m$,}
    \begin{equation}\label{eqn:step2}
\widehat\bgamma_{[t]}\supm=\argmin{\bgamma\in\mathbb{R}^p}Q\supm(\bgamma;\widehat\bgamma_{[t-1]}\supm)+\lambda\supm_t \|\bgamma\|_1,\quad{\rm s.t.}\quad\gamma_1=1. 
 \end{equation}
    \hspace{0.2in} Return in site $m$: $\widehat\bgamma\supm=\widehat\bgamma_{[T_m]}\supm$.
\end{algorithm}

Algorithm \ref{alg:2} is an iterative algorithm that starts at $\widehat\bgamma_{[0]}\supm=\widetilde{\bgamma}\subsup$, and updates from $\widehat\bgamma_{[t-1]}\supm$ to $\widehat\bgamma_{[t]}\supm$ with (\ref{eqn:step2}), an $\ell_1$-regularized quadratic objective with loss $Q\supm(\cdot)$. In (\ref{eqn:step2}), we force the first coefficient of $\widehat\bgamma_{[t]}\supm$ to be $1$ corresponding to our identification assumption. The penalty parameter $\lambda\supm_t$ in (\ref{eqn:step2}) is set to gradually decrease with $t$ so that the excessive error of the preliminary estimator can be properly controlled in early iterations, and the regularization error caused by $\lambda\supm_t$ can be controlled at its optimal rate around the final rounds; see Theorem \ref{thm:local}. This indicates an essential difference between our strategy and a standard Newton-type procedure for the $\ell_1$-penalized SIM with a fixed tuning parameter.

{
\begin{remark}
Due to the non-convexity of the original SIM loss in (\ref{equ:idl}), it is necessary to initiate with a sufficiently good $\widetilde{\bgamma}\subsup$ (i.e., being close enough to the true parameters) to ensure the convergence of Algorithm \ref{alg:2}. Intuitively, this can protect our iteratively updated optimizer from falling into some local minima at the very beginning. In Section \ref{sec:theory}, we prove that $\|\bgamma_0-\widetilde{\bgamma}\subsup\|_2=o_p(1)$ is a sufficient condition for convergence. This $\ell_2$-consistency property is justified in Lemma \ref{lem:1} under mild assumptions, in which the labeled sample size $n$ could be substantially smaller than the unlabeled size $N_m$. In addition, we show that the impact of the initial supervised estimator $\widetilde{\bgamma}\subsup$ will decay exponentially fast along the iteration of (\ref{eqn:step2}) and vanish in the end of Algorithm \ref{alg:2}. As a consequence, the number of iterations $T_m$ does not need to be larger than $\log\log N_m$, indicating that Algorithm \ref{alg:2} is not costly in computation; see Theorem \ref{thm:local}.  
\label{rem:imp:gamma}
\end{remark}
}


\begin{remark}
In (\ref{equ:linear:f}) and (\ref{equ:quad:app}), we use $\partial_{\bgamma}\fhat_m(\bX_i\supm;\bgamma)$, the derivative of the kernel estimator $\fhat_m(\bX_i\supm;\bgamma)$ on $\bgamma$ as the gradient to construct the linear expansion of $\widehat{f}_{m}(\bX_i\supm;\bgamma)$. Unlike the GLM, this term is not an estimator of $\bX f'_{m}(\bgamma\trans \bX\supm_i )$ where $f'_{m}$ is the derivative of the true link $f_{m}$. Actually, there is a subtle but essential difference between using $\partial_{\bgamma}\fhat_m(\bX_i\supm;\bgamma)$ and an estimate of $\bX f'_{m}(\bgamma\trans \bX\supm_i )$ in that the former is taken on the kernel estimator $\fhat_m$ while the latter treats $f_m$ as known. Consequently, using the latter would cause excessive bias from the kernel part. The term $\partial_{\bgamma}\fhat_m(\bX_i\supm;\bgamma)$ serves a similar role as the efficient score function of SIM $(\bX-\Ebb[\bX\mid\bgamma\trans \bX\supm_i]) f'_{m}(\bgamma\trans \bX\supm_i )$ used for bias correction with respect to the nuisance kernel estimator in the inference of SIM \citep{liang2010estimation,wu2021model}.
\label{rem:1}
\end{remark}


Based on $\widehat\bgamma\supm$, we derive appropriate summary statistics to approximate the individual data loss function $L\supm(\bgamma)$ in each site $m$. Again, inspired by the bias-corrected approximation used in 
(\ref{equ:linear:f}) and (\ref{equ:quad:app}), we consider the quadratic expansion
\begin{equation}
L\supm(\bgamma) \approx Q\supm(\bgamma;\widehat\bgamma\supm)= C\supm-2\bgamma\trans \bOmegahat_{x,s}\supm+\bgamma\trans \bOmegahat_{x,x}\supm \bgamma,
\label{equ:app:summary}
\end{equation}
where the second-order summary statistics $\bOmegahat\supm_{x,x}\in \R^{p\times p}$ and the first-order summary statistics $\bOmegahat\supm_{x,s}\in \R^{p}$ are formulated and computed as:
\begin{equation}\label{eqn:hess}
\begin{split}
\bOmegahat_{x,x}\supm &=  \Pbbhat\supm \widehat\phi_m(\bX;\widehat\bgamma\supm)\partial_{\bgamma}\fhat_m(\bX;\widehat\bgamma\supm)\Big\{\partial_{\bgamma}\fhat_m(\bX;\widehat\bgamma\supm)\Big\}\trans,\\
    \bOmegahat_{x,s}\supm &=\Pbbhat\supm\widehat\phi_m(\bX;\widehat\bgamma\supm)\Big\{S-\fhat_m(\bX;\widehat\bgamma\supm)+\widehat\bgamma\supmtrans\partial_{\bgamma}\fhat_m(\bX;\widehat\bgamma\supm)\Big\}\partial_{\bgamma}\fhat_m(\bX;\widehat\bgamma\supm),    
\end{split}
\end{equation}
and $C\supm=\Pbbhat\supm\widehat\phi_m(\bX;\widehat\bgamma\supm)\big\{S-\fhat_m(\bX;\widehat\bgamma\supm)+\widehat\bgamma\supmtrans\partial_{\bgamma}\fhat_m(\bX;\widehat\bgamma\supm)\big\}^2$ is a constant not depending on $\bgamma$. Motivated by the above discussion, $\bOmegahat_{x,x}\supm$ and $\bOmegahat_{x,s}\supm$ are sufficient for an effective quadratic approximation to the local individual loss $L\supm(\bgamma)$, while being free of individual-level information ruled out by DataSHIELD \citep{wolfson2010datashield}. These two statistics are then transferred to site 1 for an integrative sparse SIM regression.

\paragraph{Step III.}
Finally, we aggregate summary statistics $\bOmegahat_{x,s}\supm,\bOmegahat_{x,x}\supm$ from local sites and the small labeled samples at site $1$, to derive the final estimator for $\bbeta$. For simplicity, we drop the useless constant $C\supm(\widehat\bgamma\supm)$ in (\ref{equ:app:summary}) and still denote by $Q\supm(\bgamma;\widehat\bgamma\supm)= -2\bgamma\trans \bOmegahat_{x,s}\supm+\bgamma\trans \bOmegahat_{x,x}\supm \bgamma$. We first derive a temporary estimator for the direction coefficients $\bgamma$ through:
\begin{equation}\label{eqn:step3-1}
\widehat{\bgamma}=\argmin{\bgamma\in \R^{p}}\,
\frac{1}{N} \sum_{m=1}^{M}N_m Q\supm(\bgamma;\widehat\bgamma\supm)+ \lambda \|\bgamma\|_1\quad{\rm s.t.}\quad \gamma_1=1,
\end{equation}
where $N=\sum_{m=1}^{M}N_m$. Then, we plug $\widehat{\bgamma}$ in a low-dimensional logistic regression on the labeled data, and estimate the intercept $\alpha$ and the scalar $\beta_{1}$:
\begin{equation}\label{eqn:ahat}
(\widehat \alpha,\widehat \beta_1) = \argmin{\alpha, \beta_1} L_Y(\alpha, \beta_1,\widehat\bgamma),\quad\mbox{where}\quad L_Y(\alpha, \beta_1,\bgamma)=\Pbbhat_n\supone\ell\big\{Y,g(\alpha+\beta_1{\bgamma}\trans \bX)\big\}.
\end{equation}
Finally, we aggregate the summary data and the labeled samples based on $(\widehat \alpha,\widehat \beta_1)$ to derive:
\begin{equation}\label{eqn:step3-3}
\widehat{\bgamma}^{\dagger}={\argmin{\bgamma\in \R^{p}}}
\frac{1}{N^+}\Big[\sum_{m=1}^{M}N_m Q\supm(\bgamma;\widehat\bgamma\supm)+ n L_Y(\widehat\alpha, \widehat\beta_1,\bgamma)\Big]+
\lambda^\dagger\|\bgamma\|_1\quad{\rm s.t.}\quad \gamma_1=1,
\end{equation}
where $N^+=N+n$. {\blue The coefficient $n$ multiplying $L_Y$ is used consistently in the SASH objective, the corresponding BIC, and the IPD objective below.} Note that $\widehat{\bgamma}^{\dagger}$ solved from \eqref{eqn:step3-3} could be slightly more accurate than $\widehat{\bgamma}$ solved from \eqref{eqn:step3-1} due to the use of the small labeled sample. Our final SASH estimator is taken as $ \bbetahat\subsash=\widehat \beta_1\widehat\bgamma^{\dagger}$. 



\subsection{Tuning strategy}\label{sec:tuning}


Theoretically optimal rates for all tuning parameters will be given in Section \ref{sec:theory}. { In this section, we outline our tuning strategy in the numerical implementation that may be slightly different from the theory. Details of our tuning procedures can be found in Appendix \ref{sec:app:num:detail}.} For the penalty parameter $\lambda\subsup$ used in Step I, we select it using cross-validation (CV). For the bandwidth $h_m$, we fix $h_m= \{{s \log (p \vee N_m)}/{N_m} \}^{1/5}$ as suggested by Theorem \ref{thm:local}. Through simulation, we demonstrate that a moderate variation of $h_m$ does not substantially change the overall results in Section \ref{sec:sim:sens}. 
Thus, for simplicity and computation efficiency, we suggest using the fixed value for $h_m$ introduced above. For the unknown sparsity level $s$, since $h_m$ is proportional to $s^{1/5}$ that is typically small in practice, the value of $h_m$ is not sensitive to the choice of this unknown parameter. 

By Theorem \ref{thm:local}, setting the number of iterations in Algorithm \ref{alg:2} as $\log\log N_m$ is sufficient for desirable convergence. Since $\log\log N_m$ is typically small in practice, we fix $T_m=4$ in our numerical studies. Considering the unavailability of individual-level samples from sites $2,3,\ldots,M$ in Step III, we select the tuning parameters $\lambda$ and $\lambda^\dagger$ using the Bayesian information criterion (BIC) inspired by \cite{cai2021individual}. To this end, we additionally calculate $\{\widehat\sigma\supm\}^2=L\supm(\widehat\bgamma\supm)$ at site $m$ and transfer it to site $1$ together with $\bOmegahat_{x,x}\supm$ and $\bOmegahat_{x,s}\supm$. Then in Step III, we propose to choose the $\lambda$ that minimizes:
\begin{equation}
{\blue {\rm BIC}_1(\lambda)=\frac{1}{N}\left[\sum_{m=1}^{M}\frac{N_mQ\supm(\widehat\bgamma(\lambda);\widehat\bgamma\supm)}{\{\widehat{\sigma}\supm\}^2} +\df\{\widehat\bgamma(\lambda)\}\log(N)\right],}
\label{equ:bic}
\end{equation}
where $\widehat\bgamma(\lambda)$ is the solution of \eqref{eqn:step3-1} using $\lambda$ as the penalty parameter, and $\df\{\widehat\bgamma(\lambda)\}$ is the degrees of freedom, i.e., the number of nonzero coefficients in $\widehat\bgamma(\lambda)$. ${\rm BIC}_1(\lambda)$ is a commonly used tuning criterion and it depends on the data only through the summary statistics $\{\widehat\sigma\supm\}^2$, $\bOmegahat_{x,x}\supm$, and $\bOmegahat_{x,s}\supm$. Similarly, for $\lambda^\dagger$ in \eqref{eqn:step3-3} with its corresponding solution $\widehat\bgamma^\dagger(\lambda^\dagger)$, we choose the one minimizing:
\[
{\blue {\rm BIC}_2(\lambda^\dagger)=\frac{1}{N^+}\left[\sum_{m=1}^{M}\frac{N_mQ\supm(\widehat\bgamma^\dagger(\lambda^\dagger);\widehat\bgamma\supm)}{\{\widehat{\sigma}\supm\}^2}+ n L_Y(\widehat\alpha, \widehat\beta_1,\widehat\bgamma^\dagger(\lambda^\dagger))+\df\{\widehat\bgamma^\dagger(\lambda^\dagger)\}\log(N^+)\right].}
\]
Finally, the penalty parameter $\lambda\supm_t$ used in Algorithm \ref{alg:2} is also selected using the BIC similar to (\ref{equ:bic}), with the loss function calculated based on $\widehat\bgamma_{[t-1]}\supm$ derived in the previous iteration; see details in Appendix \ref{sec:app:num:detail}.

\subsection{Interval estimation}\label{sec:method:ci}

{\blue We introduce a method to construct a confidence interval (CI) for $\x\trans\bbeta_0$ with any given $\x$ based on the SASH estimator and the labeled samples at site $1$. Because $(\widehat\alpha,\widehat\beta_1)$ in \eqref{eqn:ahat} is fitted using $\widehat\bgamma$, the low-dimensional likelihood expansion additionally requires
\[
\sqrt n\,\|\widehat\bgamma-\bgamma_0\|_2=o_p(1).
\]
To replace $\bgamma_0$ by $\widehat\bgamma^\dagger$ in the final target, we also require
\[
\sqrt n\,\x\trans(\widehat\bgamma^{\dagger}-\bgamma_0)=o_p(1).
\]
For loadings with bounded $\ell_2$ norm, the rate condition $ns\log p/N\to0$ is sufficient for both requirements by Theorem \ref{thm:aggregate}; $N/n\to\infty$ alone is not sufficient. Under these conditions, $n^{1/2}\widehat\sigma_1^{-1}(\widehat\beta_1-\beta_{01})$ converges weakly to the standard normal distribution, where $\widehat\sigma_1^2$ is the $(2,2)$ entry of
\[
n\left\{(\bone_n,\bX_{1:n}\supone\widehat\bgamma)\trans
\widehat\bW
(\bone_n,\bX_{1:n}\supone\widehat\bgamma)\right\}^{-1},
\]
the second column corresponding to $\beta_1$, $\bX_{1:n}\supone=(\bX_1\supone,\ldots,\bX_n\supone)\trans$, and
\[
\widehat\bW = \operatorname{diag}\left\{\dot g\big(\widehat\alpha+\widehat\beta_1\bX_1\suponetrans\widehat\bgamma\big),\ldots,
\dot g\big(\widehat\alpha+\widehat\beta_1\bX_n\suponetrans\widehat\bgamma\big)\right\}.
\]
Thus, the $100(1-\delta)\%$ CI for $\x\trans\bbeta_0$ is
\[
\left[\widehat\beta_1\x\trans\widehat\bgamma^{\dagger}
-\frac{\widehat\sigma_1|\x\trans\widehat\bgamma^{\dagger}|}{\sqrt n}\Phi^{-1}(1-\delta/2),\quad
\widehat\beta_1\x\trans\widehat\bgamma^{\dagger}
+\frac{\widehat\sigma_1|\x\trans\widehat\bgamma^{\dagger}|}{\sqrt n}\Phi^{-1}(1-\delta/2)\right],
\]
where $\Phi$ is the standard normal distribution function.}

{\blue When either direction-estimation condition above fails, debiasing is needed to obtain a valid interval for $\x\trans\bbeta_0$.} Compared with supervised debiased inference based solely on the labeled sample \citep{cai2021optimal,guo2021inference}, the unlabeled covariates and the more precise direction estimator $\widehat\bgamma^{\dagger}$ may reduce the cost of debiasing. When $\x$ is dense, the supervised debiased estimator of \cite{cai2021optimal} can have asymptotic variance increasing with $\|\x\|_2^2$. If $\widehat\bgamma^{\dagger}$ accurately identifies the nonzero set $S_\beta$ of $\bbeta_0$, the variance may potentially be reduced from order $\|\x\|_2^2$ to $\|\x_{S_\beta}\|_2^2$ when $|S_\beta|\ll p$. A robust implementation of this extension is left for future research.

\section{Theoretical Results}\label{sec:theory}
\subsection{Notations and Assumptions}

Let $\|\bv\|_0$ be the total number of nonzero elements in $\bv$ and 
 $\|\bv\|_{\infty}=\max_{1 \leq i \leq d}|v_{i}|$. For a matrix $A=[A_{ij}]$, 
let $\ell_1$-norm $\|A\|_{1}=\max _{j} \sum_{i}|A_{ij}|$, $\ell_\infty$-norm $\|A\|_{\infty}=\max _{i} \sum_{j}|A_{i j}|$, and $\|A\|_{2}=\lambda_{\max }(A)$
where $\lambda_{\max }(A)$ represents the largest singular value of matrix $A$. We say that a random variable $X$ is sub-Gaussian if 
$\Pbb (|X|>t)\leq Ce^{-ct^{2}}$ for some constants $C, c$, and any $t > 0$, and a random vector $\bX$ is  sub-Gaussian if $\bv\trans \bX$ is sub-Gaussian for any $\bv \in \Rbb^d$ such that $\|\bv\|_2 = 1$. Recall that we assume the labeled sample size $n\lesssim N_m$ for $m\in\{1,2,\ldots,M\}$. 
For simplicity, we set $N=\sum_{m=1}^M N_m$. 
Define the set of nonzero coefficients as $\Bcal = \{j: \bbeta_{0j} \neq 0 \} = \{j: \bgamma_{0j} \neq 0 \}$ and its cardinality, i.e., the sparsity level as  $s = |\Bcal|$. {\blue We focus on the scenario that the weighting function $\widehat\phi_m(\cdot;\cdot)=1$ in the theoretical analysis.} {\blue Let $\widetilde\bX_i\supone=(1,\bX_i\suponetrans)\trans$ and denote the joint Fisher information for the unpenalized intercept and the regression coefficients by
\[
\mathbf{H}_{\rm joint}
=
\mathbb{E}\left[
\dot g(\alpha_0+\bbeta_0\trans\bX_i\supone)
\widetilde\bX_i\supone\widetilde\bX_i\suponetrans
\right].
\]}
To establish the theoretical properties of our SASH estimator, we introduce regularity, sparsity, and smoothness assumptions as below.

{
\begin{assumption}\label{ass:1}
{\blue It holds that
\[
\kappa^{-1}<\lambda_{\min}(\mathbf{H}_{\rm joint})
\leq\lambda_{\max}(\mathbf{H}_{\rm joint})<\kappa,
\qquad
\|\bbeta_0\|_2\leq\kappa,
\qquad
\|\bgamma_0\|_2\leq\kappa_\gamma,
\]
for fixed constants $\kappa,\kappa_\gamma>0$. The joint-information condition accounts for the unpenalized intercept and the unpenalized normalizing coefficient $\beta_1$, and excludes collinearity involving these coordinates.} The sparsity level satisfies $s=|\mathcal B|=o(n\beta_{01}^2/\log p)$ and $|\beta_{01}|=O(1)$.
\end{assumption}
}

\begin{assumption}\label{ass:fm}
{\blue
Let
\[
r_n=C_\Gamma\sqrt{\frac{s\log p}{n\beta_{01}^2}},
\qquad
R_n=4C_\Gamma s\sqrt{\frac{\log p}{n\beta_{01}^2}},
\]
for a sufficiently large fixed constant $C_\Gamma$, and define
\[
\Gamma=
\left\{\bgamma\in\mathbb R^p:
\gamma_1=1,\ 
\|\bgamma-\bgamma_0\|_2\le r_n,\ 
\|\bgamma-\bgamma_0\|_1\le R_n
\right\}.
\]
The following conditions hold for every $m=1,\ldots,M$.
\begin{enumerate}
\item[(a)] The link $f_m$ has bounded first three derivatives on $\mathbb R$, and
\[
\mathbb E\!\left[\{f'_m(\bgamma_0\trans\bX_i\supm)\}^2\right]=a_m^2>0,
\qquad
\max_i|f'_m(\bgamma_0\trans\bX_i\supm)|\le b,
\]
for fixed constants $a_m,b>0$.

\item[(b)] Let $g_{m,\bgamma}$ be the density of $\bgamma\trans\bX_i\supm$. Uniformly over $\bgamma\in\Gamma$, $g_{m,\bgamma}$ is twice differentiable, $g_{m,\bgamma}$ and its first two derivatives are bounded, and $g_{m,\bgamma}$ is bounded away from zero.

\item[(c)] For $r,q\in\{0,1,2\}$, let
\[
\mathcal K_{m,h}^{(r,q)}
=
\left\{
(\bx,\bx')\mapsto
K_h^{(r)}\{\bgamma\trans(\bx'-\bx)\}
(\bx'-\bx)^{\otimes q}:\bgamma\in\Gamma
\right\}.
\]
Let $\mathfrak F_{m,h}$ contain these classes and the corresponding coordinatewise score classes. There exist constants $A,C>0$ and envelopes $F_{\mathcal F}$ such that, uniformly over probability measures $Q$, $\mathcal F\in\mathfrak F_{m,h}$, and $0<\varepsilon\le1$,
\[
\log N\!\left(
\varepsilon\|F_{\mathcal F}\|_{Q,2},
\mathcal F,L_2(Q)
\right)
\le Cs\log\!\left(\frac{Ap}{\varepsilon h}\right),
\]
with the envelope moments required in the concentration bounds.
\end{enumerate}
}
\end{assumption}

\begin{assumption}\label{ass:k}
{\blue The kernel function $K(\cdot)$ is nonnegative, bounded, symmetric around zero, and twice differentiable.} $K(\cdot)$ and its derivatives $K^{\prime}(\cdot), K^{\prime \prime}(\cdot)$ are all Lipschitz on $\mathbb{R}$ and satisfy that
\[
\lim _{|\nu| \rightarrow \infty} K(\nu)=0,\quad\int_{-\infty}^{\infty} K(\nu) d \nu=1,\quad\mbox{and}\quad\int_{-\infty}^{\infty} \nu K^{\prime}(\nu) d \nu=-1.
\]
In addition, for $i=0,1,\ldots,4$, $\int\left|\nu^{i} K(\nu)\right| d \nu<\infty$, $\int\left|\nu^{i} K^{\prime}(\nu)\right| d \nu<\infty$; 
and for $i=0,1,2$, $\int\left|\nu^{i} K''(\nu)\right| d \nu<\infty$, $\int\left|\nu^{i} K'^2(\nu)\right| d \nu<\infty$.

\end{assumption}

\begin{assumption}\label{ass:x}
For each sample $i$ from site $m$, $\bX_i\supm$ and $\epsilon_i\supm=S\supm_i-f_{m}(\bgamma_0\trans \bX\supm_i )$ are both sub-Gaussian.
\end{assumption}

\begin{remark}
Assumption \ref{ass:1} is a standard joint-curvature condition for logistic regression with an unpenalized intercept and an unpenalized normalizing coefficient. Combined with Assumption \ref{ass:x}, it yields the $\ell_1$- and $\ell_2$-rates for the preliminary estimator in Lemma \ref{lem:1}. Assumption \ref{ass:fm}(a) imposes standard smoothness and informativeness conditions on $f_m$, while Assumption \ref{ass:fm}(b)--(c) imposes regularity conditions on the projected-index density and the associated kernel classes. Assumption \ref{ass:k} is satisfied by standard smooth kernels such as the Gaussian kernel. Assumption \ref{ass:x} controls the tail behavior of the covariates and residuals. The present theory assumes a continuously distributed index and sub-Gaussian residuals; the count-surrogate and discrete-genetic-index applications are evaluated empirically and extensions to sub-exponential residuals or discrete indices are left for future work.
\end{remark}


{
\begin{assumption}\label{ass:4}
Define the normalized tangent space
$
\mathbb{V}_0=\big\{\boldsymbol{v}\in\mathbb{R}^p:v_1=0,\ \|\boldsymbol{v}\|_2=1\big\}.
$
There exist positive constants $c$ and $C$ such that the following conditions hold for each sample $i$ from site $m$:
{\blue
\begin{align*}
&c \leq \inf_{\bv \in \mathbb{V}_0} \bv\trans \Ebb\left[\left\{f'_m(\bgamma_0\trans\bX_i\supm)\right\}^2\operatorname{Cov}\big(\bX_i\supm \mid \bgamma_0\trans\bX_i\supm \big)\right] \bv \\
&\hspace{1cm}\leq \sup_{\|\bv\|_2=1}\bv\trans \Ebb\left[\left\{f'_m(\bgamma_0\trans\bX_i\supm)\right\}^2\operatorname{Cov}\big(\bX_i\supm \mid \bgamma_0\trans\bX_i\supm \big)\right] \bv \leq C,\\
&\sup _{\bgamma \in \Gamma} \frac{1}{N_m} \sum_{i=1}^{N_m}\Big[\lambda_{\max }\big(\Ebb(\bX_i\supm \bX_i\supmtrans \mid \bX_i\supmtrans \bgamma)\big)\Big]^2 \leq  C,\\ 
&\max_{1 \leq i \leq N_m} \sup _{\bgamma \in \Gamma} \lambda_{\max }\big(\Ebb(\bX_i\supm \bX_i\supmtrans \mid \bX_i\supmtrans \bgamma)\big) \leq C \log (p \vee N_m).
\end{align*}
}
\end{assumption}
}

\begin{assumption}\label{ass:5}
For each site $m$, every $\bgamma\in\Gamma$, and $t\in\mathbb R$, the conditional mean $\mathbb E(\bX_i\supm\mid\bgamma\trans\bX_i\supm=t)$ is twice differentiable in $t$. Let $\mathbb E^{(1)}$ and $\mathbb E^{(2)}$ denote its first two derivatives. For every $\bv\in\mathbb R^p$,
{\blue
\begin{align*}
\max_i\sup_{\bgamma\in\Gamma}
\left|\mathbb E(\bv\trans\bX_i\supm\mid\bgamma\trans\bX_i\supm)\right|
&\le C\|\bv\|_2,\\
\max_i\sup_{\bgamma\in\Gamma}
\left|\mathbb E^{(1)}(\bv\trans\bX_i\supm\mid\bgamma\trans\bX_i\supm)\right|
&\le C\|\bv\|_2,\\
\sup_{|t|\le cs\sqrt{\log(p\vee N_m)}}\sup_{\bgamma\in\Gamma}
\left|\mathbb E^{(2)}(\bv\trans\bX_i\supm\mid\bgamma\trans\bX_i\supm=t)\right|
&\le C\|\bv\|_2,
\end{align*}
and
\[
\max_i\sup_{\bgamma\in\Gamma}
\left|\mathbb E^{(1)}\{(\bv\trans\bX_i\supm)^2\mid\bgamma\trans\bX_i\supm\}\right|
\le C\|\bv\|_2^2\sqrt{\log(p\vee N_m)}.
\]}
\end{assumption}



\begin{remark}
{{\blue Assumption \ref{ass:4} imposes restricted curvature on the derivative-weighted conditional covariance, which is the relevant population Hessian of the SIM risk.} The restriction $v_1=0$ reflects the normalization $\gamma_1=1$ and removes the intrinsically non-identifiable direction of the single-index model. Assumption \ref{ass:5} imposes smoothness conditions on the conditional moments of $\bX_i\supm$ given $\bgamma\trans\bX_i\supm$, which are used to control the kernel-based approximation uniformly over local neighborhoods of $\bgamma_0$. We provide a justification of Assumption \ref{ass:4} under a Gaussian design in Appendix \ref{sec:ass}; analogous conditions for more general sub-Gaussian designs may be justified using results such as \cite{zhou2009restricted} and \cite{rudelson2012reconstruction}. Similar assumptions have also been used to analyze high-dimensional sparse SIM by \cite{wu2021model}. Unlike that work, we do not impose sparsity assumptions on the inverse Hessian matrix of the SIM because SASH does not rely on local debiased estimators for aggregation.}
\end{remark}

\subsection{Convergence Rates}
In this section, we derive the estimation error rate for the SASH estimator. We first present the estimation error of the local direction estimators obtained via regularized SIM regression in Step II, with its proof provided in Appendix \ref{sec:pf:local}.

\setcounter{theorem}{0}

\renewcommand{\theHtheorem}{convergence.\arabic{theorem}}

{

\begin{lemma}[Consistency of the initial supervised estimator]
\label{lem:1} 
Under the logistic model assumption (\ref{equ:asu:1}) and Assumptions \ref{ass:1} and \ref{ass:x}, there exists $\lambda\subsup\asymp(\log p/n)^{1/2}$ such that, with probability approaching $1$,
{\blue
\[
|\widetilde\beta_{{\scriptscriptstyle\sf sup},1}|\ge \frac12|\beta_{01}|,
\qquad
\|\widetilde{\bgamma}\subsup-\bgamma_0\|_2
\lesssim \sqrt{\frac{s\log p}{n\beta_{01}^2}},
\qquad
\|\widetilde{\bgamma}\subsup-\bgamma_0\|_1
\lesssim s\sqrt{\frac{\log p}{n\beta_{01}^2}}.
\]
Consequently, $\widetilde\bgamma\subsup\in\Gamma$ for a sufficiently large $C_\Gamma$.}
\end{lemma}

Based on the consistency of $\widetilde{\bgamma}\subsup$ shown in Lemma \ref{lem:1}, we can justify the convergence of our iterative Algorithm \ref{alg:2} initialized with $\widetilde{\bgamma}\subsup$ by characterizing and controlling the impact of $\widetilde{\bgamma}\subsup$ along the iterations, and then establish the convergence of the local-site estimators $\widehat\bgamma\supm$ in Theorem \ref{thm:local}.} { It is important to note the complication of justifying Theorem \ref{thm:local} due to the excessive bias from the kernel estimator of $f_m$. As a key step in the proof, we showed that this bias can be reduced leveraging the orthogonality between $\partial_{\bgamma}\fhat_m(\bX\supm_i;\bgamma)$ and $\bX_i\supmtrans\bgamma$, i.e.,  $\psi(\bX_i\supmtrans\bgamma)\partial_{\bgamma}\fhat_m(\bX\supm_i;\bgamma)$ concentrate around zero for arbitrary $\bgamma$ and $\psi(\cdot)$. Remark \ref{rem:1} is intrinsically related to this technical property.}

\setcounter{theorem}{0}

\renewcommand{\theHtheorem}{local.\arabic{theorem}}

\begin{theorem}[Convergence rate for the local estimators] 
{ Under the logistic model assumption (\ref{equ:asu:1}), the SIM assumption (\ref{eqn:sim})}, Assumptions \ref{ass:1} -- \ref{ass:5} and the sparsity assumption $s=O({N_m^{1/6}}/{\log (p \vee N_m)})$ for all $m \in [M]$, there exist $\lambda\subsup\asymp(\log p/n)^{1/2}$, $h_m$ satisfying $
\{{s \log (p \vee N_m)}/{N_m} \}^{1/5}\lesssim h_m \lesssim N_m^{-1/6}$, 
$T_m \asymp \log(-\log h_m) - \log\{\log (n\beta_{01}^2/(s\log p)\}$, $ \lambda_{T_m}\supm \asymp(\log p/{N_m})^{1/2}$, and $
\{\lambda_t\supm:t<T_m\}$ as specified in \eqref{eqn:lambda} of Appendix, 
such that 
$$ 
\|\widehat\bgamma\supm - \bgamma_0\|_2 \lesssim  \sqrt{\frac{s\log p}{N_m}},\quad \|\widehat\bgamma\supm - \bgamma_0\|_1 \lesssim  s\sqrt{\frac{\log p}{N_m}}
$$ 
with probability approaching $1$.
 \label{thm:local}

\end{theorem}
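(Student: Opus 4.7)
The plan is to analyze Algorithm \ref{alg:2} as a sequence of $\ell_1$-penalized linear regressions whose iterates contract the estimation error nearly quadratically, until the optimal statistical floor $\sqrt{s\log p/N_m}$ is reached. The key observation is that \eqref{eqn:step2} is a constrained Lasso with pseudo-response $S-\fhat_m(\bX;\widehat\bgamma_{[t-1]}\supm)+\widehat\bgamma_{[t-1]}\supmtrans \partial_\bgamma\fhat_m(\bX;\widehat\bgamma_{[t-1]}\supm)$ and pseudo-covariates $\partial_\bgamma\fhat_m(\bX;\widehat\bgamma_{[t-1]}\supm)$ subject to $\gamma_1=1$. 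Hence a standard Lasso oracle inequality applies at every iteration, provided a restricted eigenvalue (RE) condition is established for the pseudo-design.

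First I would derive uniform deviation bounds for $\fhat_m(\bX;\bgamma)$ and $\partial_\bgamma\fhat_m(\bX;\bgamma)$ around their population analogues over the constraint set $\Gamma$ of Assumption \ref{ass:fm}. Under Assumptions \ref{ass:fm}--\ref{ass:x} together with the kernel regularity in Assumption \ref{ass:k}, standard kernel-smoothing arguments yield bias of order $h_m^2$ and stochastic fluctuation that deteriorates for the gradient; the bandwidth window $\{s\log(p\vee N_m)/N_m\}^{1/5}\lesssim h_m\lesssim N_m^{-1/6}$ is chosen to balance these contributions against the target $\sqrt{s\log p/N_m}$. Then, using Assumptions \ref{ass:4}--\ref{ass:5}, the population analogue of the one-step Hessian is restricted-positive-definite on sparse directions, which I would transfer to the empirical pseudo-design by combining the above concentration bounds with a sparse-eigenvalue perturbation argument.

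Next I would prove a one-step contraction. Given $\|\widehat\bgamma_{[t-1]}\supm-\bgamma_0\|_2\le r_{t-1}$ with $\widehat\bgamma_{[t-1]}\supm\in\Gamma$, the Taylor remainder $\fhat_m(\bX;\bgamma)-\fhat_m(\bX;\widehat\bgamma_{[t-1]}\supm)-(\bgamma-\widehat\bgamma_{[t-1]}\supm)\trans\partial_\bgamma\fhat_m(\bX;\widehat\bgamma_{[t-1]}\supm)$ is of order $\|\bgamma-\widehat\bgamma_{[t-1]}\supm\|_2^2$ and enters the Lasso objective as residual noise. Choosing $\lambda_t\supm \asymp r_{t-1}^2+\sqrt{\log p/N_m}$ and applying the RE-based oracle inequality, while handling the linear constraint $\gamma_1=1$ by reducing to a $(p-1)$-dimensional unconstrained Lasso (exploiting $\gamma_{01}=1$ and $\beta_{01}\neq 0$), I expect the recursion
\begin{equation*}
r_t \;\lesssim\; \sqrt{s}\, r_{t-1}^2 \;+\; \sqrt{s\log p/N_m}.
\end{equation*}
Since the supervised initializer satisfies $r_0\lesssim \sqrt{s\log p/(n\beta_{01}^2)}$ by standard analysis of $\ell_1$-penalized logistic regression \citep{negahban2012unified}, iterating yields quadratic convergence until the statistical floor. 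Solving $\log r_T\asymp 2^T\log r_0$ down to $\log\sqrt{s\log p/N_m}\asymp \log h_m$ produces the prescribed iteration count $T\asymp\log(-\log h_m)-\log\log(n\beta_{01}^2/(s\log p))$, and the $\ell_1$ bound then follows from the usual Lasso cone constraint.

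The hardest step will be verifying the RE property for the \emph{pseudo-design} $\partial_\bgamma\fhat_m(\bX;\widehat\bgamma_{[t-1]}\supm)$ at a data-dependent base point and uniformly over all iterates. Because $\partial_\bgamma\fhat_m$ is a ratio of leave-one-out kernel smoothers whose dependence on the direction $\bgamma$ is intricate, a careful empirical-process argument exploiting the smoothness in Assumption \ref{ass:k} and the bounded-density and sub-Gaussian conditions in Assumptions \ref{ass:fm}--\ref{ass:x} is required; in particular, the apparent mismatch between the tight RE on the population score $(\bX-\Ebb[\bX\mid \bgamma_0\trans\bX])f'_m(\bgamma_0\trans\bX)$ and the empirical kernel gradient must be bridged by controlling the uniform bias contributed by $h_m$. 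A secondary subtlety is reconciling the constraint $\gamma_1=1$ with the Lasso cone decomposition, which I plan to treat by offsetting $\gamma_1$ and running unpenalized-in-the-first-coordinate $\ell_1$ minimization over the remaining $p-1$ coordinates.
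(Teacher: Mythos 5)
Your overall architecture --- a per-iteration Lasso oracle inequality for the pseudo-design $\partial_\bgamma\fhat_m$, an RE condition transferred from the population efficient score, uniform kernel deviation bounds, and a contraction recursion started from the supervised initializer --- matches the paper's strategy. But there is a concrete gap in the contraction step. You assert that the linearization remainder $\fhat_m(\bX;\bgamma_0)-\fhat_m(\bX;\widehat\bgamma_{[t-1]}\supm)-(\bgamma_0-\widehat\bgamma_{[t-1]}\supm)\trans\partial_\bgamma\fhat_m(\bX;\widehat\bgamma_{[t-1]}\supm)$ is of order $\|\bgamma_0-\widehat\bgamma_{[t-1]}\supm\|_2^2$. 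That would hold if $\fhat_m(\bX;\cdot)$ had a second derivative bounded uniformly in $h_m$, but it does not: the kernel gradient $\partial_\bgamma\fhat_m$ only tracks the population score $f_m'(\bgamma\trans\bX)(\bX-\Ebb[\bX\mid\bgamma\trans\bX])$ up to an error of order $h_m$ (the paper's Lemma on $\partial_\bgamma\fhat_m$), and each further differentiation of the kernel costs a factor $h_m^{-1}$. The paper's Lemma bounding the term $A_4$ therefore controls the remainder contribution by $\|\Delta_t\supm\|_2(\|\Delta_{t-1}\supm\|_2 h_m + \|\Delta_{t-1}\supm\|_2^2)$, and the resulting recursion is $r_t\lesssim \sqrt{s\log p/N_m}+h_m r_{t-1}+r_{t-1}^2+\lambda_t\supm\sqrt{s}$, not $r_t\lesssim\sqrt{s}\,r_{t-1}^2+\sqrt{s\log p/N_m}$. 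This changes the analysis in two ways: (i) your penalty schedule $\lambda_t\supm\asymp r_{t-1}^2+\sqrt{\log p/N_m}$ is too small once $r_{t-1}<h_m$, so the cone condition (which requires $\lambda_t\supm$ to dominate the full noise-plus-bias level including the $h_m r_{t-1}$ term) fails and the RE-based oracle inequality no longer applies; (ii) the iteration actually has two phases --- quadratic contraction until $r_t\approx h_m$, then geometric contraction with ratio $h_m$ for a constant number of further steps --- which is precisely why the bandwidth is bounded below by $\{s\log(p\vee N_m)/N_m\}^{1/5}$ and why $\lambda_t\supm$ must vary with $t$ as in \eqref{eqn:lambda}. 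The stated $T$ happens to come out the same order, but the proof as you propose it would not close.

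A secondary issue: the factor $\sqrt{s}$ multiplying $r_{t-1}^2$ in your recursion (presumably from an $\ell_1$--$\ell_\infty$ H\"older bound on the remainder term) would require $s^2\log p=o(n\beta_{01}^2)$ for even the first iteration to contract, which is strictly stronger than the assumed $s=o(n\beta_{01}^2/\log p)$. The paper avoids this by bounding the cross term with Cauchy--Schwarz on empirical quadratic forms, so that the quadratic piece enters as $\|\Delta_t\supm\|_2\|\Delta_{t-1}\supm\|_2^2$ with no $\sqrt{s}$; you should adopt the same device. Your treatment of the constraint $\gamma_1=1$ by reduction to a $(p-1)$-dimensional problem is workable but unnecessary --- the paper simply compares the objective at $\widehat\bgamma_{[t]}\supm$ and at $\bgamma_0$, both of which are feasible.
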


\setcounter{theorem}{5}
\begin{remark}
In terms of the sparsity conditions in Theorem \ref{thm:local}, we impose them regarding both the small labeled sample size $n$ and the large $N_m$. For $n$, we require $s=o(n\beta^2_{01}/\log p)$ which just warrants $\ell_2$-consistency of the GLM Lasso estimator when $\beta_{01}\asymp 1$ and, thus, is regarded to be mild in existing literature \citeg{buhlmann2011statistics,negahban2012unified}. This condition also implies a rate constraint on the strong signal $\beta_{01}$ that $|\beta_{01}|$ needs to be larger than $(s\log p/n)^{1/2}$ in rate. For $N_m$, we assume $s=O({N_m^{1/6}}/{\log (p \vee N_m)})$, which is more restrictive than the GLM setting with the true outcome $Y$ of all the $N_m$ samples. Nevertheless, this assumption is still reasonable and mild in our EHR applications where $N_m$ is typically much larger than $n$ and $p$. 
\end{remark}

\begin{remark}
It is well-known that under Assumption \ref{ass:fm} and a low-dimensional setting, the optimal bandwidth for the kernel estimator of $f_m(\cdot)$ is $N_m^{-1/5}$ for a twice differentiable function \citeg{wand1994kernel}. The rate condition $\{{s \log (p \vee N_m)}/{N_m} \}^{1/5}\lesssim h_m \lesssim N_m^{-1/6}$ in Theorem \ref{thm:local} is inconsistent with this choice due to the presence of high-dimensional sparse $\bgamma$. In addition, the number of iterations $T_m$ is around the order $\log\{\log (N_m)\}$, indicating that Algorithm \ref{alg:2} can converge fast and be computationally efficient in practice. 
\end{remark}

The error rate of $\widehat\bgamma\supm$ provided in Theorem \ref{thm:local} is nearly minimax optimal in the sense that even the sparse GLM estimator extracted under known link functions cannot attain any better convergence rates than this up to a $\log N_m$ scale \citep{raskutti2011minimax}. To the best of our knowledge, except for Gaussian or elliptical design under which one could reduce the SIM fitting to a linear regression \citeg{neykov2016l1,eftekhari2021inference}, no existing sparse SIM approach has a similar optimal convergence guarantee for the non-Gaussian (general) design due to the non-convexity issue. From the methodological perspective, we leverage the supervised estimator and the one-step approximation of SIM introduced in Section \ref{sec:method} to fill this gap. In theory, justifying this convergence property is more challenging than analyzing the standard parametric sparse regression, because first, the impact of the nuisance kernel estimator $\widehat f_m$ needs to be properly reduced through orthogonality; second, an essential effort is needed to analyze our iterative Algorithm \ref{alg:2} with the error rates changing successively.

Based on Theorem \ref{thm:local}, we derive the convergence rate of our SASH estimator in Theorem \ref{thm:aggregate} that is proved in Appendix \ref{sec:pf:aggregate}. 

\setcounter{theorem}{1}
\begin{theorem}[Convergence rate of SASH]\label{thm:aggregate}
Under all assumptions stated in Theorem \ref{thm:local}, there exists $\lambda,\lambda^\dagger \asymp \sqrt{{\log p}/N}$ where $N=\sum_{m=1}^M N_m$, such that the estimators introduced in Step III satisfy $$\|\widehat\bgamma^{\dagger}-\bgamma_0\|_2 \lesssim  \sqrt{\frac{s\log p}{ N}},\quad\|\widehat\bgamma^{\dagger}-\bgamma_0\|_1 \lesssim  s\sqrt{\frac{\log p}{ N}},\quad
|\betahat_1-\beta_{01}| \lesssim 
\sqrt{\frac{1}{n}} + \sqrt{\frac{s\log p}{N}},$$
and, thus,
{\blue $$\|\bbetahat\subsash-\bbeta_0\|_2 \lesssim 
\sqrt{\frac{1}{n}} + \sqrt{\frac{s\log p}{N}},\quad\|\bbetahat\subsash-\bbeta_0\|_1 \lesssim 
\sqrt{\frac{s}{n}} + s\sqrt{\frac{\log p}{N}}.$$}
with probability approaching $1$.

\end{theorem}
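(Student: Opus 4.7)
The plan is to chain together three sparse-regression analyses in the order in which the estimators are produced in Step III. First I would derive the rate of the temporary direction estimator $\widehat\bgamma$ defined by (\ref{eqn:step3-1}); then use it to obtain the rate of the low-dimensional intercept/scale pair $(\widehat\alpha,\widehat\beta_1)$ from (\ref{eqn:ahat}); and finally combine both to analyse $\widehat\bgamma^\dagger$ from (\ref{eqn:step3-3}). The bounds for $\bbetahat\subsash$ then follow from the decomposition $\bbetahat\subsash-\bbeta_0=(\widehat\beta_1-\beta_{01})\bgamma_0+\widehat\beta_1(\widehat\bgamma^{\dagger}-\bgamma_0)$ and the boundedness of $\|\bgamma_0\|_2$.

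\paragraph{Step 1: the temporary direction $\widehat\bgamma$.} View (\ref{eqn:step3-1}) as an $\ell_1$-penalized quadratic problem with Hessian $\widehat H=N^{-1}\sum_m N_m\bOmegahat^{(m)}_{x,x}$ and linear term $-2N^{-1}\sum_m N_m\bOmegahat^{(m)}_{x,s}$. Two ingredients are needed. (i) A Restricted Eigenvalue condition for $\widehat H$: since $\partial_{\bgamma}\widehat f_m$ is essentially the efficient SIM score (Remark \ref{rem:1}), the population version of $\bOmegahat^{(m)}_{x,x}$ is a positive constant times $\Ebb[\operatorname{Cov}(\bX^{(m)}\mid\bgamma_0\trans\bX^{(m)})]$, which is RE by Assumption \ref{ass:4}; Assumptions \ref{ass:k}--\ref{ass:5} allow me to show that $\bOmegahat^{(m)}_{x,x}$ concentrates around this target uniformly in $m$, so that the averaged Hessian is RE on sparse cones. (ii) An $\ell_\infty$ bound on the "gradient" $\bOmegahat^{(m)}_{x,s}-\bOmegahat^{(m)}_{x,x}\bgamma_0$ at the truth, which I would decompose into a kernel-smoothing bias controlled by Assumptions \ref{ass:fm}(a)--\ref{ass:k}, a second-order remainder from expanding $\widehat f_m(\bX;\bgamma_0)$ around $\widehat\bgamma^{(m)}$ and controlled by Theorem \ref{thm:local} together with Assumption \ref{ass:5}, and a stochastic term $\Pbbhat^{(m)}\epsilon\,\partial_\bgamma\widehat f_m$ bounded by sub-Gaussian maximal inequalities using Assumption \ref{ass:x}. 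Choosing $\lambda\asymp\sqrt{\log p/N}$ and running the standard Lasso cone argument produces $\|\widehat\bgamma-\bgamma_0\|_2\lesssim\sqrt{s\log p/N}$ and $\|\widehat\bgamma-\bgamma_0\|_1\lesssim s\sqrt{\log p/N}$.

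\paragraph{Step 2: the scalars $(\widehat\alpha,\widehat\beta_1)$.} Because $\widehat\bgamma$ is consistent at rate $\sqrt{s\log p/N}=o(1)$ and $\beta_{01}$ is bounded away from $0$, problem (\ref{eqn:ahat}) is a two-parameter concave M-estimation with covariate $\widehat\bgamma\trans\bX$ differing from the oracle $\bgamma_0\trans\bX$ by a plug-in error controlled by Step~1. Writing the score equations at the estimator, a Taylor expansion around $(\alpha_0,\beta_{01})$ and the usual quadratic lower bound for the logistic population Hessian (under Assumption \ref{ass:x}) give $|\widehat\alpha-\alpha_0|+|\widehat\beta_1-\beta_{01}|\lesssim n^{-1/2}+\|\widehat\bgamma-\bgamma_0\|_2$, which is $\lesssim n^{-1/2}+\sqrt{s\log p/N}$; the $n^{-1/2}$ term dominates whenever $s\log p/N\lesssim n^{-1}$, which is the regime covered by the theorem.

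\paragraph{Step 3: $\widehat\bgamma^\dagger$ and the main obstacle.} The objective in (\ref{eqn:step3-3}) augments the quadratic used in Step~1 by $2n/(N+n)$ times the labeled logistic loss at the plug-ins $(\widehat\alpha,\widehat\beta_1)$. Since $n\ll N$, this addition contributes a Hessian perturbation of order $n/N$ and, after expanding the labeled logistic gradient around $\bgamma_0$, an extra $\ell_\infty$ error of order $n^{1/2}/N\cdot(\log p)^{1/2}+n/N\cdot(|\widehat\alpha-\alpha_0|+|\widehat\beta_1-\beta_{01}|)$, which is absorbed into $\lambda^\dagger\asymp\sqrt{\log p/N}$. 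Re-running the Lasso argument of Step 1 with this augmented objective yields the stated rates for $\widehat\bgamma^\dagger$. The final bounds on $\bbetahat\subsash$ follow by triangle inequality from the decomposition above. I expect the technically delicate part to be the uniform-in-$m$ control of the kernel-derivative terms appearing in $\bOmegahat^{(m)}_{x,s}$ and $\bOmegahat^{(m)}_{x,x}$, in particular the second-order remainder when expanding $\widehat f_m(\bX;\bgamma_0)$ around $\widehat\bgamma^{(m)}$: one must show that the curvature of $\widehat f_m$ in the sparse direction, combined with the local rate $\sqrt{s\log p/N_m}$ from Theorem \ref{thm:local}, yields a remainder of order $\sqrt{\log p/N}$ in $\ell_\infty$, since any slower rate would destroy the $\sqrt{s\log p/N}$ final bound.
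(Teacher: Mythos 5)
Your proposal is correct and follows essentially the same route as the paper's proof: a basic-inequality/cone argument for the penalized quadratic in $\widehat\bgamma$ with the gradient error split into a stochastic term, a kernel bias, and a Taylor remainder controlled via Theorem \ref{thm:local}, then a plug-in M-estimation analysis for $(\widehat\alpha,\widehat\beta_1)$, a second cone argument for $\widehat\bgamma^{\dagger}$, and the triangle-inequality decomposition of $\bbetahat\subsash-\bbeta_0$. The only cosmetic difference is that the paper controls the second-order remainder as a bilinear form bounded by $\|\widehat\bgamma-\bgamma_0\|_2\sqrt{s\log p/N}$ rather than through a coordinatewise $\ell_\infty$ bound of order $\sqrt{\log p/N}$, which is equivalent on the sparse cone.
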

{\blue One can see from Theorem \ref{thm:aggregate} that the $\ell_2$ estimation error of SASH has a parametric component $n^{-1/2}$ from estimating $(\alpha,\beta_1)$ and a direction-estimation component $\sqrt{s\log p/N}$. For the $\ell_1$ error, multiplication of the scalar error $|\widehat\beta_1-\beta_{01}|$ by $\|\bgamma_0\|_1\lesssim\sqrt{s}\|\bgamma_0\|_2$ yields the term $\sqrt{s/n}$; the direction component is $s\sqrt{\log p/N}$.} Notably, although Algorithm \ref{alg:2} initiates at the supervised estimator $\widetilde\bgamma
\subsup$ with its $\ell_2$-error being $\{s\log p/(n\beta_{01}^2)\}^{1/2}$, the error rate of our final estimator $\widehat\bgamma^{\dagger}$, as well as that of $\widehat\bgamma\supm$, turns out to be free of the small labeled sample size $n$ and dominated by the large $N$ and $N_m$. This is highly desirable but not readily achieved in the current literature of semi-supervised learning \citeg{zhang2020prior,hou2021efficient} and sparse SIM \citeg{radchenko2015high,wu2021model}. To achieve this result in Theorems \ref{thm:local} and \ref{thm:aggregate}, our main technical novelty can be summarized as (1) developing the one-step approximation of the SIM loss introduced in Section \ref{sec:method:main}; (2) designing and justifying Algorithm \ref{alg:2} with a proper number of iteration $T_m$ and sequence of tuning parameters $\{\lambda\supm_t:t=1,2,\ldots,T_m\}$ to ensure the optimal convergence rate of our SIM estimator. 


\subsection{Comparison with the IPD Estimator}\label{sec:thm:equi}

In this section, we compare SASH with the IPD pooling estimator obtained using the same samples $\Lscr\supone$ and $\{\Uscr\supm:m=1,\ldots,M\}$ but without the DataSHIELD constraint. The complete IPD construction and proof are given in Appendix~\ref{sec:pf:equivalence}. In brief, the local training in Algorithm~\ref{alg:2} is replaced by an integrative sparse SIM regression that pools individual-level samples from all sites, followed by the labeled-data refitting steps corresponding to \eqref{eqn:ahat} and \eqref{eqn:step3-3}. Theorem~\ref{thm:equivalence} establishes equality of the first-order convergence rates and a one-sided no-rate-loss comparison; it does not assert that $\|\bbetahat\subsash-\bbetahat\subipd\|_q$ is of smaller order than the common $\ell_q$ estimation rate.

\begin{theorem}[First-order rate comparison with IPD]\label{thm:equivalence}
Under all assumptions in Theorem~\ref{thm:local}, there exist temporary and final IPD penalty parameters
$\lambda_{{\scriptscriptstyle\sf IPD},0},\lambda_{\subipd}\asymp\sqrt{\log p/N}$ such that the IPD estimator satisfies
\[
\|\bbetahat\subipd-\bbeta_0\|_2
\lesssim
\sqrt{\frac1n}+\sqrt{\frac{s\log p}{N}},
\qquad
\|\bbetahat\subipd-\bbeta_0\|_1
\lesssim
\sqrt{\frac{s}{n}}+s\sqrt{\frac{\log p}{N}}
\]
with probability approaching one. Further, there exist positive sequences
$\lambda_{\delta,0}=o\{\lambda_{{\scriptscriptstyle\sf IPD},0}\}$ and
$\lambda_\delta=o(\lambda_{\subipd})$ such that, if SASH uses
\[
\lambda=\lambda_{{\scriptscriptstyle\sf IPD},0}+\lambda_{\delta,0},
\qquad
\lambda^\dagger=\lambda_{\subipd}+\lambda_\delta
\]
in Step III, then it incurs no first-order rate loss relative to IPD in the following one-sided sense:
\begin{equation}\label{eqn:ipd-comparison-main}
\begin{aligned}
\|\bbetahat\subsash-\bbeta_0\|_2
&\le
\|\bbetahat\subipd-\bbeta_0\|_2
+o_p\!\left(\sqrt{\frac1n}+\sqrt{\frac{s\log p}{N}}\right),\\
\|\bbetahat\subsash-\bbeta_0\|_1
&\le
\|\bbetahat\subipd-\bbeta_0\|_1
+o_p\!\left(\sqrt{\frac{s}{n}}+s\sqrt{\frac{\log p}{N}}\right).
\end{aligned}
\end{equation}
\end{theorem}

Theorem~\ref{thm:equivalence} shows that SASH matches the first-order $\ell_1$ and $\ell_2$ convergence rates of IPD and that its estimation error is no larger than the IPD error plus a smaller-order remainder. This is a rate comparison, not a direct estimator-difference assertion. As in \cite{cai2021individual}, the sparsity condition $s=O\{N_m^{1/6}/\log(p\vee N_m)\}$ makes the higher-order error from the summary-level approximation negligible.

\section{Simulation}\label{sec:sim}

\subsection{Data Generation Setup}

We conduct simulation studies to evaluate our proposed SASH estimator and compare it with existing methods. {\bf R} code for implementation can be downloaded from \url{https://github.com/moleibobliu/SASH}. Throughout, we let the number of sites $M=4$, the labeled sample (in site $1$) size $n=200$, dimension $p=300$, and the size of unlabeled sample $N_m=8000$ for each site $1 \leq m \leq M$. To mimic our real example of EHR data which consists of count variables, we generated $\widetilde\bX\supm_i$ from a multivariate Poisson distribution with mean $5$ and equal correlation being $0.25$, and we take covariates $\bX\supm_i$ as $\log(\widetilde\bX\supm_i + 1 )$. The binary outcome $Y\supm \in \R^{N_{m}}$ is only observed in site $1$ and it follows the logistic model: $\Pbb(Y_i\supm=1 \mid \bX_{i}\supm)=g(\bbeta_0\trans\bX_i\supm)$, with a sparse regression vector $\bbeta_0=(1, -1, 0.5, -0.5,0.25,-0.25,0.125,-0.125,\bzero_{p-8})\trans$. Since the EHR study consists of surrogates of different forms, we generate two categories of surrogate $S\supm$ as follows:
	\begin{equation*}
	\begin{aligned}
	& S_i\supm\mid Y_i\supm=1 \sim {\rm Pois}(\mu_1\supm) \quad {\rm and} \quad S_i\supm\mid Y_i\supm=0 \sim {\rm Pois}(\mu_0\supm) \quad {\rm for} \quad 1 \leq m \leq [M/2], \\\
	& S_i\supm\mid Y_i\supm=1 \sim {\rm Bern}(v_1\supm) \quad {\rm and}  \quad S_i\supm\mid Y_i\supm=0 \sim {\rm Bern}(v_0\supm) \quad {\rm for} \quad [M/2] < m \leq M. 
	\end{aligned}
	\end{equation*}
We consider two hyperparameter settings for the surrogate: (i) weak surrogates with $\mu_1\supm=3$, $\mu_0\supm=1$, 
$v_1\supm=0.75$, and $v_0\supm=0.25$; and (ii) strong surrogates with $\mu_1\supm=5$, $\mu_0\supm=1$, 
$v_1\supm=0.85$, and $v_0\supm=0.15$. {\blue Because Poisson residuals are sub-exponential rather than sub-Gaussian, the Poisson component of this simulation evaluates empirical robustness beyond the scope of Assumption \ref{ass:x}; the Bernoulli component is covered by the bounded-residual case.} Throughout, we summarize results based on $200$ replications in each configuration.

\subsection{Benchmark}\label{sec:sim:bench}

For each simulated data set, we obtain the estimator for $\bbeta_0$ using SASH as well as the following benchmark methods: 
{
\paragraph{IPD} The ideal IPD version of our framework introduced in Section \ref{sec:thm:equi}, with its complete details included in Appendix \ref{sec:pf:equivalence}. Although it is not realistic to pool all individual data together under the privacy constraints, IPD is still an important benchmark method helping evaluate how SASH loses efficiency due to aggregation based on summary statistics, which has been theoretically studied in Section \ref{sec:thm:equi}.

\paragraph{\bf SL} The supervised learning estimator obtained by \eqref{eqn:step1} only using labeled data $\Lscr\supone$.

\paragraph{\bf SS-uLasso} The semi-supervised (SS) version of unsupervised Lasso ({\bf uLasso}) proposed by \cite{chakrabortty2017surrogate}. They developed an unsupervised learning method for the SIM coefficient that applies Lasso to the subset of unlabeled data restricted to some extreme quantile of the surrogate. To adapt their method, we first select subjects with the top $2\%$ (or $5\%$) large or small $S\supm$ at each site $m$, and define $\widetilde Y\supm_i=1$ if $S\supm_i$ is extremely large, and $\widetilde Y\supm_i=0$ if $S\supm_i$ is extremely small. Then we perform distributed sparse logistic regression \citep{cai2021individual} for $\widetilde Y_i\supm\sim \bX\supm_i$ on the selected samples, to obtain an integrative uLasso estimator for $\bgamma_0$. At last, we combine the estimated $\bgamma$ with the labeled samples in the same way as Step III of SASH to derive the final SS-uLasso estimator of $\bbeta_0$.

\paragraph{SS-RMRCE} The SS version of the regularized maximum rank correlation estimator ({\bf RMRCE}) \citep{han2017provable} that maximizes the rank correlation between $S\supm_i$ and $\bX\supmtrans_i\bgamma$. We aggregate the local RMRCE by simple averaging, and, again, use Step III of SASH to obtain the SS version. Due to the prohibitively high computational cost of RMRCE under large $p$, we consider two versions: (i) the ``oracle'' version only including $\{X\supm_{ij}:j=1,\ldots,8\}$ to fit the SIM, as if we knew the true set of active predictors; (ii) the ``screening'' version first selecting top $8$ predictors of $S\supm$ at each local site according to the absolute correlation between each $X\supm_{ij}$ and $S\supm_i$, then combining the $M$ selected sets by majority voting, and finally fitting RMRCE on the selected predictors.

\paragraph{pLasso} The prior Lasso method \citep{jiang2016variable} that leverages some external knowledge of $\bbeta$ by adding to the logistic loss on the labeled data $\Lscr\supone$ a penalty of the discrepancy between the external knowledge and the target model. In our case, this external knowledge is naturally taken as the SS estimator obtained by calibrating some SIM estimator of $\bgamma$ with Step III of SASH. The SIM estimator is obtained using the adaptive Lasso \citep{zou2006adaptive} version of $\ell_1$-regularized least squares regression ({\bf L1LS}) \citep{neykov2016l1} prone to bias under non-Gaussian designs. Thanks to its shrinkage strategy, pLasso achieves some adaptivity to this bias but this strategy is still not optimal in terms of the SIM estimation. Note that one can directly implement L1LS under the privacy constraints since summary data $\Pbbhat\supm\bX\bX\trans$ and $\Pbbhat\supm S\bX$ are sufficient for L1LS. 

\paragraph{PASS} The prior adaptive semi-supervised learning method \citep{zhang2022prior} incorporating the SIM estimator by adaptively shrinking $\bbeta$ to its direction with the $\ell_1$-penalty when fitting logistic regression of $\Lscr\supone$. It uses the adaptive Lasso version of {\bf L1LS} to estimate $\bgamma$.

\vspace{0.25cm}

The RMRCE method is subject to the non-convexity issue of SIM and both uLasso and L1LS encounter bias for non-Gaussian designs. One could note for our descriptions that to ensure fair comparisons, all methods are adapted to the SS setting and utilize unlabeled data from all sources. Additional implementation details (e.g., tuning) are included in Appendix \ref{sec:app:num:detail}. Since the direction parameter $\bgamma$ is an important by-product, we also study SASH's estimation performance of $\bgamma$ and compare it with {\bf IPD}, {\bf L1LS}, {\bf RMRCE}, and {\bf uLasso}. As an ablation study, we include {\bf SASH (w/o Step I)}, the SASH estimator starting by guessing $\bgamma_1 = 1, \bgamma_{-1}=0$ in Algorithm \ref{alg:2} instead of using the supervised $\widetilde\bgamma\subsup$ obtained in Step I. This procedure does not take advantage of the labeled samples to fit the regularized SIM. By Remark \ref{rem:1} and Section \ref{sec:theory}, using $\widetilde\bgamma\subsup$ to initialize the SIM training is essential for the stability and efficiency of SASH. We compare SASH with SASH (w/o Step I) to further demonstrate this point. 
}

\subsection{Main Results}

\subsubsection{SASH shows desirable estimation performance}

In Table \ref{tab:1} and Figure \ref{fig:error:beta}, we present the average and boxplot of the $\ell_2$ estimation error $\|\widehat\bbeta-\bbeta_0\|_2$ of SASH and other competing methods. Both IPD and SASH show better performance under strong surrogacy than weak surrogacy since the former has more informative surrogates to $Y$. Among all the privacy-preserving methods, SASH achieves the closest accuracy to the ideal IPD, with around $20\%$ larger average estimation error than IPD under the weak surrogate setting and $10\%$ larger error under the strong surrogacy. Compared to other semi-supervised methods, SASH attains around $10\%$--$15\%$ smaller error than PASS and { SS-uLasso (two versions using $2\%$ or $5\%$ extreme $S\supm$), $19\%$--$25\%$ smaller error than SS-RMRCE (the oracle and screening versions)}, and $40\%$ smaller error than pLasso under weak surrogacy. Meanwhile, SASH shows even better performance than these existing methods under strong surrogacy. Lastly, since SL does not utilize the unlabeled data with surrogates, it performs the worst among all methods, with around $120\%$ and $200\%$ higher average estimation errors than SASH under the weak and strong surrogate settings, respectively. 


\begin{table}[htb!]
\centering
\begin{small}
\begin{tabular}{lcc} 
\hline
          & Weak surrogate & Strong surrogate  \\ 
\hline
IPD       & 0.735          & 0.600             \\
PASS      & 1.106          & 1.058             \\
pLasso    & 1.560          & 1.640             \\
SASH      & 0.949          & 0.661             \\
SL        & 1.834          & 1.834             \\
SS-RMRCE (screening)  & 1.299          & 1.276             \\
SS-RMRCE (oracle)  &    1.168  &  1.122    \\ 
SS-uLasso (2\%) & 1.096          & 1.060            \\
SS-uLasso (5\%) & 1.038    &   0.841      \\
\hline
\end{tabular}
\end{small}
\caption{\label{tab:1} Average $\ell_2$ estimation errors of $\bbeta$ under the weak surrogate and strong surrogate settings with $N=8000$, $M=4$, $n=200$, and $p=300$. The results are based on 200 simulation replications.}
\end{table}

\begin{figure}[htb!]
	\begin{center}
		\begin{tabular}{ccc}
	{Weak Surrogates} & {Strong Surrogates}\\
		[0pt]
\includegraphics[width=.49\textwidth,angle=0]{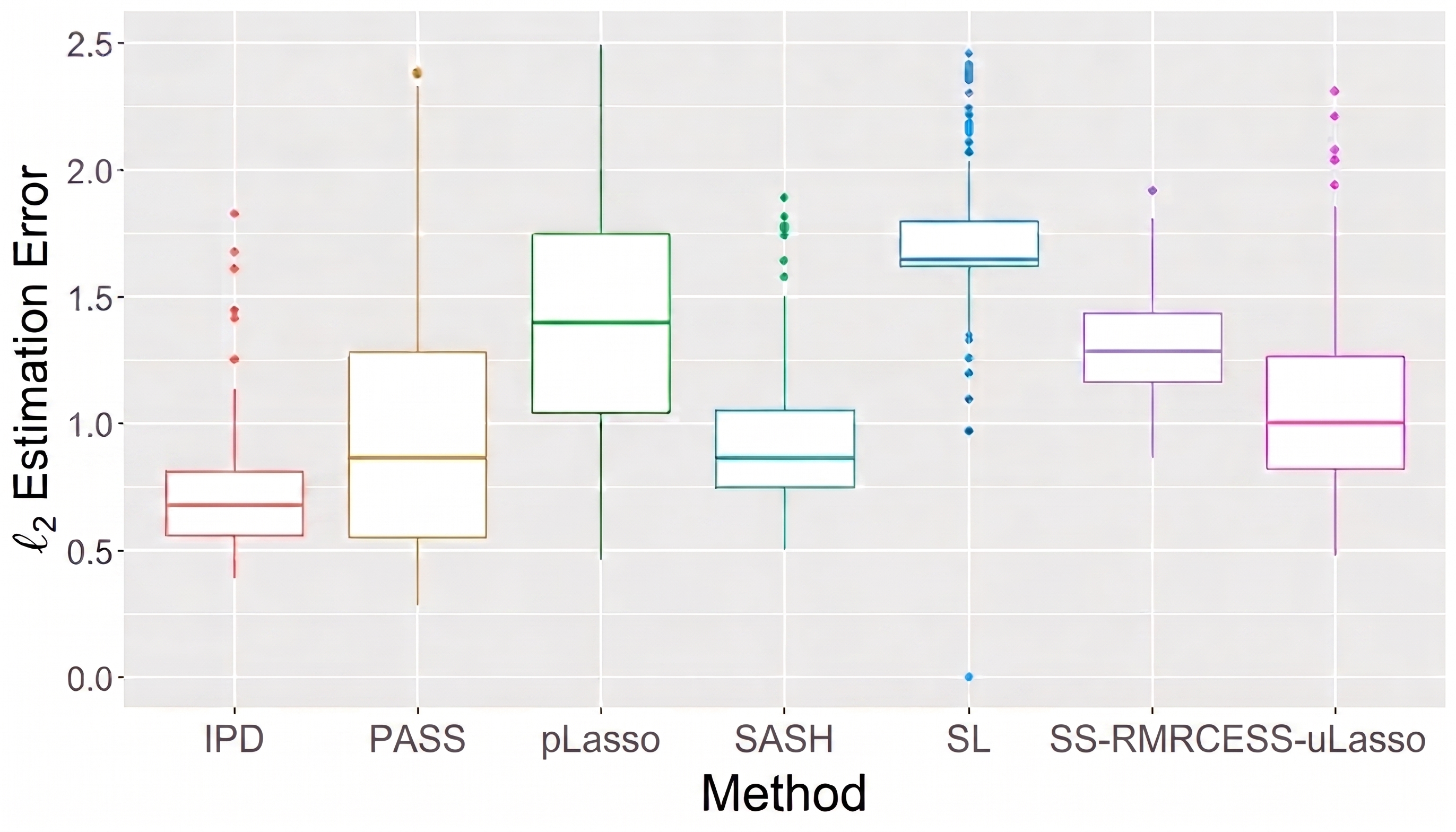} &
	\includegraphics[width=.49\textwidth,angle=0]{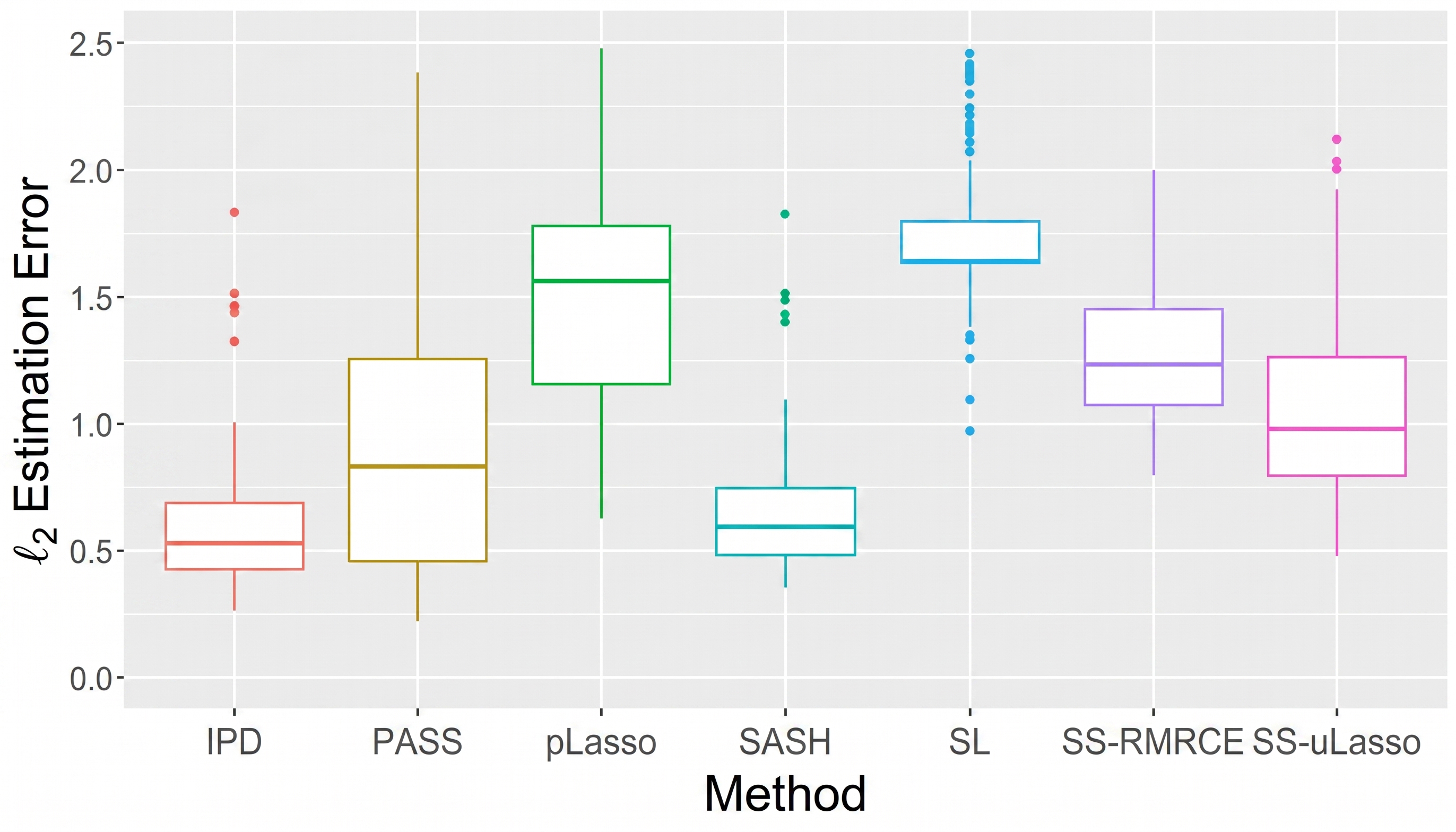} 
		\end{tabular}
	\end{center}
    \vspace{-15pt}
	\caption{Boxplots of the $\ell_2$ estimation errors of $\bbeta$ under the strong surrogate and weak surrogate settings with $N=8000$, $M=4$, $n=200$, and $p=300$. For simplicity, we present the version of SS-uLasso with $2\%$ extreme $S\supm$ and SS-RMRCE with empirically screened predictors, with the other versions of them presented in Table \ref{tab:1}. The methods under comparison are introduced in Section \ref{sec:sim:bench}.  The results are based on 200 simulation replications. SASH achieves the closest accuracy to the ideal IPD, with around $20\%$ larger average estimation error than IPD under the weak surrogate setting and $10\%$ larger error under the strong surrogate settings.} 
	\label{fig:error:beta}
\end{figure}

SASH outperforms existing semi-supervised methods in terms of estimating $\bbeta$ because our approach to learning $\bgamma$ utilizes the information from SIM of $S$ more effectively than existing methods. To further demonstrate this point, we evaluate and compare the estimation performance on $\bgamma$ of the SIM methods introduced in Section \ref{sec:sim:bench}; see Figure \ref{fig:error:gamma}. SASH attains a substantially smaller estimation error than L1LS, uLasso, and RMRCE, as well as a fairly close performance to the ideal IPD estimator. For example, in the strong surrogacy scenario, the estimator of $\bgamma$ by SASH has around $30\%$ smaller median error than uLasso and $40\%$ smaller than RMRCE. This confirms that our way of fitting the sparse SIM is more effective than the existing methods. In addition, compared with SASH utilizing $\widetilde\bgamma\subsup$, SASH w/o Step I has much higher estimation errors on $\bgamma$, i.e., around $55\%$ higher under weak surrogacy and $125\%$ higher under strong surrogacy, implying that the supervised estimator obtained in Step I does play an important role in improving the performance of SASH.

\begin{figure}[htb!]
	\begin{center}
		\begin{tabular}{ccc}
	{Weak Surrogates} & {Strong Surrogates} \\
		[0pt]
\includegraphics[width=.49\textwidth,angle=0]{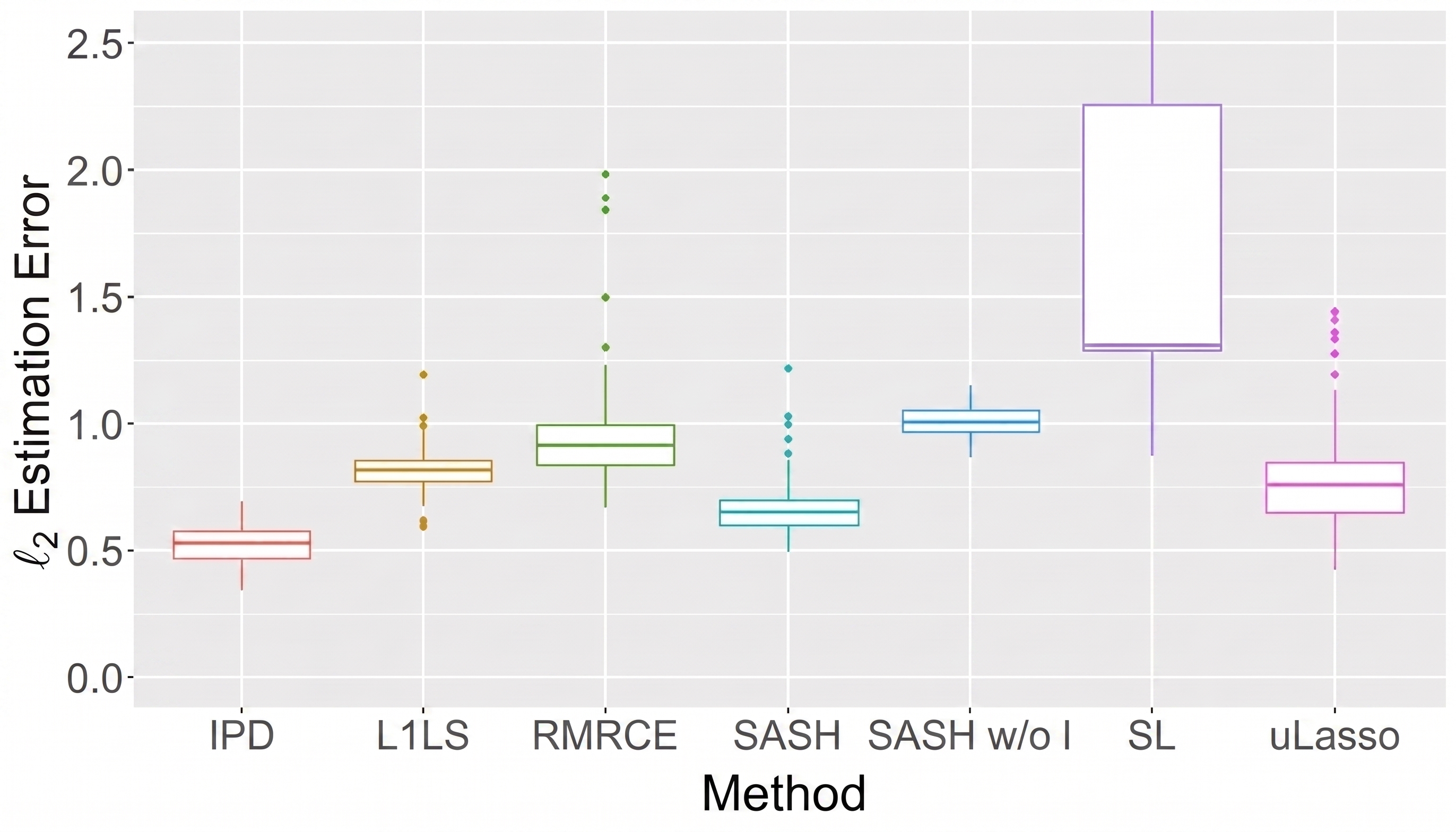} &
	\includegraphics[width=.49\textwidth,angle=0]{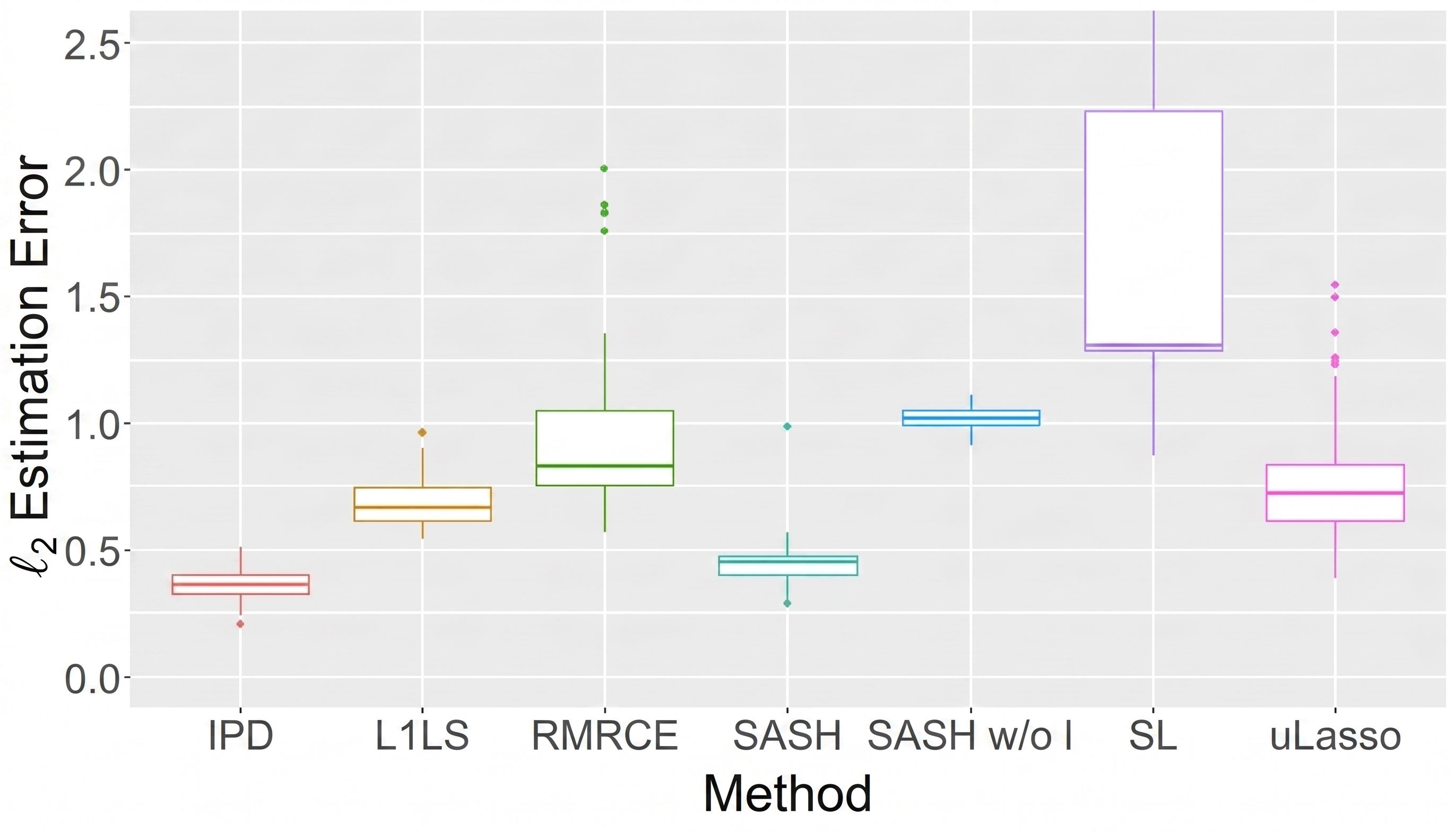} 
		\end{tabular}
	\end{center}
    \vspace{-15pt}
	\caption{Boxplots of the $\ell_2$ estimation errors of $\bgamma$ under the strong surrogate and weak surrogate settings with $N=8000$, $M=4$, $n=200$, and $p=300$. We present the version of uLasso with $2\%$ extreme $S\supm$ and RMRCE with empirically screened predictors. The methods under comparison are introduced in Section \ref{sec:sim:bench}. The results are based on 200 simulation replications. SASH attains  a fairly close performance to the ideal IPD estimator.  In addition, compared with SASH, SASH w/o Step I has much higher estimation errors on $\bgamma$, implying that the supervised estimator obtained in Step I does play an important role in improving the performance of SASH.} 
	\label{fig:error:gamma}
\end{figure}

\subsubsection{Additional communication can improve efficiency}

SASH is moderately outperformed by the ideal IPD estimator under strong surrogacy and the gap between their performance is even more pronounced with a weaker surrogate. This is due to the error caused by approximating the IPD loss with summary statistics and analyzed in Theorems \ref{thm:aggregate} and \ref{thm:equivalence}. To further mitigate this error in finite-sample studies, we investigate a modified SASH estimator, denoted as SASH+, obtained by performing one more round of communication between site $1$ and other sites. Specifically, we transfer the current SASH estimator of $\bgamma$ obtained in site 1 to all sites, use it to derive the summary statistics $\bOmegahat_{x,x}\supm$ and $\bOmegahat_{x,s}\supm$ as introduced in Step II, and transfer the updated statistics back to site 1 for the integrative regression in Step III. This strategy has been studied in existing distributed and federated learning methods \citep{jordan2018communication,duan2022heterogeneity} to reduce the error occurring in data aggregation. 

In Figure \ref{fig:error:sash} of Appendix \ref{sec:app:sashplus}, we present the simulated $\ell_2$ errors on $\bbeta$ and $\bgamma$ of IPD, SASH, and SASH+, under both the strong and weak surrogate settings. Interestingly, we find that benefiting from the one additional round of communication, SASH+ reduces the error gap on $\bbeta$ between SASH and IPD from $20\%$ to $10\%$ under weak surrogacy, and from $10\%$ to $7\%$ under strong surrogacy. These results demonstrate that more rounds of communication are promising to further improve SASH and make it closer to the ideal IPD.

\subsubsection{CI estimation could be further improved}

{

We also present CI coverage probabilities (CPs) for all nine truly nonzero coefficients produced based on IPD, SASH and SL in Table \ref{tab:cover}, constructed using the procedure introduced in Section \ref{sec:method:ci}. SASH attains a close coverage performance to IPD and is significantly better than plugging the SL estimator into the method introduced in Section \ref{sec:method:ci}. Specifically, both SASH and IPD maintain desirable coverage rate (around $95\%$) on more than half of the parameters in both scenarios while SL only shows good performance on two or three of them. This is due to the excessive regularization error of SL. 

However, SASH and IPD attain lower coverage rates than the nominal level for the coefficients indexed by $\{1,3,4\}$. This is because the error of the SIM estimator $\widehat\bgamma$ brings about some non-ignorable bias in finite-sample studies. Our proposed inference approach in Section \ref{sec:method:ci} is easy to implement but overlooks the regularization bias in $\widehat\bgamma$, subject to the condition that the unlabeled sample size $N$ is much larger than $n$. More accurately, for valid asymptotic inference, our method requires the error rate of the sparse SIM, i.e., $\{{s\log p}/{N}\}^{1/2}$ by Theorem \ref{thm:aggregate}, to be much smaller than $n^{-1/2}$. To overcome this issue and enhance our inference procedure in both the asymptotic regime and finite-sample study, one could incorporate the debiasing method of sparse SIMs \citep{wu2021model} to adjust for the regularization bias of $\widehat\bgamma$. This warrants future studies.

}



\begin{table}[htbp]
\centering
\begin{tabular}{llccccccccc}
\toprule
\multirow{2}{*}{Setting} & \multirow{2}{*}{Method} & \multicolumn{9}{c}{$\beta_j$} \\
\cmidrule(lr){3-11}
 & & 1 & 2 & 3 & 4 & 5 & 6 & 7 & 8 & 9\\
\midrule
\multirow{3}{*}{Weak Surrogates} 
 & IPD  & 0.85 & 0.95 & 0.89 & 0.91 & 0.94 & 0.94 & 0.95 & 0.96 & 0.92\\
 & SASH & 0.84 & 0.97 & 0.87 & 0.91 & 0.94 & 0.94 & 0.95 & 0.96 & 0.92\\
 & SL   & 0.66 & 0.84 & 0.79 & 0.84 & 0.89 & 0.94 & 0.92 & 0.97 & 0.91\\
\midrule
\multirow{3}{*}{Strong Surrogates} 
 & IPD  & 0.86 & 0.95 & 0.88 & 0.91 & 0.94 & 0.94 & 0.95 & 0.96 & 0.94\\
 & SASH & 0.86 & 0.97 & 0.87 & 0.93 & 0.96 & 0.95 & 0.95 & 0.96 & 0.93\\
 & SL   & 0.66 & 0.84 & 0.78 & 0.84 & 0.88 & 0.95 & 0.92 & 0.97 & 0.91\\
\bottomrule
\end{tabular}
\caption{Coverage probability of the CIs for $\bbeta$ under the weak surrogate and strong surrogate settings with $N=8000$, $M=4$, $n=200$, and $p=300$. CIs are constructed using the procedure introduced in Section \ref{sec:method:ci}. SASH and IPD maintain desirable coverage rate (around $95\%$) on more than half of the parameters in both scenarios, and are significantly better than plugging the SL estimator into the method introduced in Section \ref{sec:method:ci}.}
\label{tab:cover}
\end{table}

\subsubsection{Performance of SASH is not sensitive to choices of $h$}\label{sec:sim:sens}
We also conduct an additional simulation study to investigate the sensitivity of SASH to different choices of the kernel bandwidth parameter $h$ fixed as $h_0=\{{s \log (p \vee N_m)}/{N_m} \}^{1/5}$ in our main simulation studies. The descriptions and results of this study are presented in Appendix \ref{sec:analysis:h}. Figure \ref{fig:h} demonstrates that different versions of SASH with $h\in\{0.5h_0,h_0,2h_0\}$ have similar estimation performance. For example, the mean estimation error of SASH with $h=2h_0$ is $4.1\%$ smaller than that of $h_0$. Compared to the relative efficiency between SASH and other benchmark methods, this discrepancy caused by the changes of $h$ is substantially smaller, even with $h\in\{0.5h_0,h_0,2h_0\}$ varying in a wide enough range. This means that our method is not sensitive to the specification of $h$.

\section{Real example}\label{sec:realdata}

Due to the increasing linkage of EHR with bio-repositories, large-scale biobank data have emerged as an important and rich resource for risk prediction modeling \citep{kho2011electronic}. For example, both UK Biobank (UKB) and Massachusetts General and Brigham (MGB) Healthcare Biobank contain not only a wealth of medical and clinical records but also their linked genomic data, which enables us to derive genetic risk prediction models useful for personalized medicine \citep{lewis2020polygenic}. However, there exist statistical challenges impeding the effective use of EHR-linked biobank data in this application. Among them, the most prominent obstacle is the lack of accurate or gold-standard labels on disease status in EHR. For example, patients may receive diagnostic codes for type 2 diabetes (T2D) even if they do not actually have the disease. Precise information on T2D status requires human curation via manual chart review, which is not scalable and often infeasible due to the lack of access to clinical notes. Together with high-dimensional genetic features, it is challenging if not infeasible to derive reliable genetic risk prediction models based on labeled data sets. On the other hand, diagnostic codes and other patient-reported outcomes are readily available yet inaccurate. With data from multiple institutions and the availability of gold-standard labels for a small subset of patients, surrogate-assisted and federated approaches like SASH have the potential to address these challenges and produce robust and efficient genetic risk models using biobank data.

To illustrate the potential of these surrogate-assisted federated methods, we develop a T2D genetic risk prediction  model using four sets of biobank data from two healthcare systems, the UK Biobank (UKB) and the Mass General Brigham (MGB). Since the prevalence of T2D differs greatly by age and we only have gold-standard labels on T2D status on older patients, we further partitioned UKB and MGB data into ages $\geq 60$ and $< 60$. This results in $N_1=10{,}685$ and $N_2=7{,}653$ for MGB$_{\geq 60}$ and MGB$_{< 60}$; and $N_3=211{,}061$ and $N_4=275{,}730$ for  UKB$_{\geq 60}$ and UKB$_{< 60}$. Our target model includes gender, ethnicity, and $336$ T2D-associated single nucleotide polymorphisms (SNPs) reported in previous studies \citep{kozak2009ucp1,rodrigues2013genetic} in covariates $\bX$. For the surrogate $S$, we took it as the log-transformed total count of the ICD codes associated with T2D for the MGB subjects, and the binary response to the T2D screening question for those in UKB. The surrogate's distribution and its informativeness to $Y$ could differ greatly between MGB and UKB due to the difference in their definition. Within MGB$_{\geq 60}$, we have an additional $n=277$ patients with gold labels $Y$ on the true T2D status obtained via manual chart review. 

The area under the receiver operating characteristic curve (AUC) of the surrogate $S$ for classifying $Y$ on MGB$_{\geq 60}$ is around $0.85$. This indicates that the surrogate $S$ is error-prone and needs to be properly modeled although $P(Y=0 \mid S=0)$ is almost one, indicating that $S$ is a reasonable negative predictor. Although the sample sizes of the unlabeled data differ greatly among those four data sets, the number of potential cases with $S>0$ does not differ as substantially, with $3{,}081$, $875$, $14{,}992$, and $9{,}393$ in MGB$_{\geq 60}$, MGB$_{< 60}$, UKB$_{\geq 60}$ and UKB$_{< 60}$ respectively. Similar to Section \ref{sec:sim}, we implement SL, SASH, pLasso, PASS, SS-uLasso, and SS-RMRCE to construct the T2D genetic risk model (\ref{equ:asu:1}) while the IPD estimator is not feasible to construct due to the DataSHIELD constraint. For model evaluation, we consider several commonly used accuracy measures including the AUC, the Brier score (BS) defined as the mean squared prediction error of $Y$, and the logistic deviance of the estimator. These accuracy quantities are computed through $5$-fold CV on the labeled samples and presented in Table \ref{tab:EHR1}. As a result of effectively using the unlabeled surrogate samples, SASH performs better than all benchmark methods on all of our evaluation metrics. For example, compared with pLasso, our method attains a $20\%$ larger AUC, a $12\%$ smaller BS, as well as a $13\%$ smaller deviance. 


\begin{table}[htb!]
	\centering	
	\begin{tabular}{ccccccc} 
	\hline
	& PASS & pLasso & SASH  & SL & SS-RMRCE & SS-uLasso \\ 
	\hline
	AUC & 0.533 & 0.563 & {\bf 0.673} & 0.556 & 0.592 & 0.638 \\ 
	BS  & 0.146 & 0.142 & {\bf 0.125} & 0.130 & 0.133 & 0.128 \\ 
	deviance & 0.982 & 0.940 & {\bf 0.822} & 0.852 & 0.870 & 1.054 \\ 
	\hline
	\end{tabular}
	\caption{AUC, Brier score (BS), and deviance of the approaches under comparison in our real-world application evaluated via 5-fold CV.}
	\label{tab:EHR1}
\end{table}

We also present features assigned nonzero estimated coefficients by at least one of the compared methods in Table \ref{tab:EHR2} of Appendix \ref{sec:table}. Interestingly, SASH returns $46$ nonzero coefficients while the other methods select at most $16$ of them. Meanwhile, SASH shows less shrinkage to zero compared to the benchmark methods. Both the larger detection set and the less shrinkage of SASH are also benefits of using the unlabeled surrogate samples to learn the direction of $\bbeta$ more effectively and accurately. 

\section{Conclusion and Discussion}\label{sec:discussion}

{We developed SASH, a novel surrogate-assisted federated learning approach that leverages large unlabeled data with surrogates from multiple local sites to improve the efficiency of statistical learning with labeled data.} Our method is efficient in terms of communication and protects individual-level information by aggregating appropriate summary statistics. It is shown to match the first-order convergence rates of, and to be numerically close to, the IPD estimator, which is ideal but only available without the DataSHIELD constraints. In particular, our newly proposed Algorithm \ref{alg:2} ensures the convergence of the local sparse SIM estimator by using a sufficiently accurate supervised estimator and varying the penalty parameter adaptively during the iteration. In the existing literature on SIM, such a convergence property is not readily achieved for non-Gaussian design due to the non-convexity of the SIM loss function. This novel algorithm, as well as the private data aggregation (\ref{eqn:hess}), is based on our efficient one-step approximation theory of high-dimensional sparse SIM. As we remarked, this development is intrinsically relevant to the efficient score and debiased inference of SIM \citep{liang2010estimation,wu2021model}.

{ More generally speaking, any preliminary estimator consistent with $\bgamma_0$, e.g., prior knowledge of $\bgamma$ derived from some external source data, can be used as a warm start for Algorithm \ref{alg:2} to enable the desirable convergence. With the availability of such (decent) prior knowledge, our strategy of iteratively solving the (debiased) one-step approximation could be generalized to overcome the non-convex optimization issue of other semiparametric regression problems such as maximum rank correlation \citep{han1987non}, and sliced inverse regression \citep{li1991sliced}. Inspired by Remark \ref{rem:1}, this extension requires one to derive the forms of the bias-corrected quadratic approximation for those more complicated regression problems. 

}


By Remark \ref{rem:id}, our method requires prior knowledge of at least one strong enough risk factor with coefficient $\beta_{01}$ to ensure that $\bgamma_0=\bbeta_0/\beta_{01}$ can be well-estimated. Specifically, $|\beta_{01}|$ needs to be larger than $(s\log p/n)^{1/2}$ to guarantee the convergence of Algorithm \ref{alg:2}. Although data-driven feature screening methods \citeg{tibshirani1996regression,fan2008sure} could be used to address the absence of such prior knowledge, they may not guarantee a correct solution without additional regularity and signal strength assumptions. We also note that some previous work uses $\|\bbeta_0\|_2$ instead of $\beta_{01}$ to standardize $\bbeta_0$, i.e., taking $\bgamma_0=\bbeta_0/\|\bbeta_{0}\|_2$. This strategy can work without our assumption on $\beta_{01}$ but its corresponding constraint $\|\bgamma\|_2=1$ will make the optimization of SIM non-convex. Thus, we could modify our identifying condition on $\bgamma$ to remove the assumption on $\beta_{01}$ but we need to be more careful about the quality of SIM training. {More instance-dependent theoretical analysis could refine the convergence rate of SASH and relax some of our technical assumptions in certain cases. For example, we now assume in Assumption \ref{ass:fm} that the unknown link function $f_m$ and the probability density function of $\bgamma\trans\bX\supm_i$ are two or three times differentiable. When those functions have higher-order smoothness properties, the kernel approximation bias will become smaller and our other assumptions like the sparsity $s=O({N_m^{1/6}}/{\log (p \vee N_m)})$ could be potentially refined and relaxed. Also, the approximate sparsity (i.e., $\ell_r$-sparse for some $r\in(0,1)$) regime and the bounded design scenario can be studied to further extend or refine our convergence rate analyses.}

Our method accommodates heterogeneous link functions $f_m$ for $m=1,2,\ldots,M$, which allows the aggregation of different types of surrogates across local sites. This is in a similar spirit to some recent work of angle-based federated or transfer learning \citeg{gu2022robust,tian2023learning}. However, we do not consider the scenario with heterogeneous model coefficients. This problem has been addressed in the federated learning of GLM by \cite{cai2021individual} and others. It would be interesting to explore surrogate-assisted federated learning in this direction. Also, it warrants future research to enhance robustness to the violation of the SIM assumption (\ref{eqn:sim}), or more generally speaking, misleading surrogates. Some recent work of robust SAS \citep{hou2021efficient,zhang2022prior} could be inspiring and useful. In EHR and e-commerce applications, there is often a need to handle multiple or even high-dimensional surrogates to ensure statistical efficiency \citeg{hou2021efficient,cai2022semi}. SASH can be extended to the multi-surrogate problem through the ensemble of surrogates' SIM objective functions. When the surrogates have dependence structures and different strengths, finding the optimal ensemble strategy is important and warrants future studies.

\section*{Acknowledgment}
The authors declare no funding.
{\setlength{\bibsep}{0.0pt}
\bibliography{library}

@article{harnoune2021bert,
  title={BERT based clinical knowledge extraction for biomedical knowledge graph construction and analysis},
  author={Harnoune, Ayoub and Rhanoui, Maryem and Mikram, Mounia and Yousfi, Siham and Elkaimbillah, Zineb and El Asri, Bouchra},
  journal={Computer Methods and Programs in Biomedicine Update},
  volume={1},
  pages={100042},
  year={2021},
  publisher={Elsevier}
}

@article{franklin2021real,
  title={Real-world evidence to support regulatory decision making: new or expanded medical product indications},
  author={Franklin, Jessica M and Liaw, Kai-Li and Iyasu, Solomon and Critchlow, Cathy W and Dreyer, Nancy A},
  journal={Pharmacoepidemiology and Drug Safety},
  volume={30},
  number={6},
  pages={685--693},
  year={2021},
  publisher={Wiley Online Library}
}

@article{zhang2020maximum,
  title={A maximum likelihood approach to electronic health record phenotyping using positive and unlabeled patients},
  author={Zhang, Lingjiao and Ding, Xiruo and Ma, Yanyuan and Muthu, Naveen and Ajmal, Imran and Moore, Jason H and Herman, Daniel S and Chen, Jinbo},
  journal={Journal of the American Medical Informatics Association},
  volume={27},
  number={1},
  pages={119--126},
  year={2020},
  publisher={Oxford University Press}
}

@article{guo2021inference,
  title={Inference for the case probability in high-dimensional logistic regression},
  author={Guo, Zijian and Rakshit, Prabrisha and Herman, Daniel S and Chen, Jinbo},
  journal={The Journal of Machine Learning Research},
  volume={22},
  number={1},
  pages={11480--11533},
  year={2021},
  publisher={JMLRORG}
}

@article{huang2018pie,
  title={PIE: A prior knowledge guided integrated likelihood estimation method for bias reduction in association studies using electronic health records data},
  author={Huang, Jing and Duan, Rui and Hubbard, Rebecca A and Wu, Yonghui and Moore, Jason H and Xu, Hua and Chen, Yong},
  journal={Journal of the American Medical Informatics Association},
  volume={25},
  number={3},
  pages={345--352},
  year={2018},
  publisher={Oxford University Press}
}

@article{duan2022heterogeneity,
  title={Heterogeneity-aware and communication-efficient distributed statistical inference},
  author={Duan, Rui and Ning, Yang and Chen, Yong},
  journal={Biometrika},
  volume={109},
  number={1},
  pages={67--83},
  year={2022},
  publisher={Oxford University Press}
}

@article{kho2011electronic,
	Author = {Kho, Abel N and Pacheco, Jennifer A and Peissig, Peggy L and Rasmussen, Luke and Newton, Katherine M and Weston, Noah and Crane, Paul K and Pathak, Jyotishman and Chute, Christopher G and Bielinski, Suzette J and others},
	Date-Added = {2018-12-04 21:25:46 +0000},
	Date-Modified = {2018-12-04 21:25:46 +0000},
	Journal = {Science translational medicine},
	Number = {79},
	Pages = {79re1--79re1},
	Publisher = {American Association for the Advancement of Science},
	Title = {Electronic medical records for genetic research: results of the eMERGE consortium},
	Volume = {3},
	Year = {2011}}

@article{jordan2018communication,
	Author = {Jordan, Michael I and Lee, Jason D and Yang, Yun},
	Journal = {Journal of the American Statistical Association},
	Number = {526},
	Pages = {668--681},
	Publisher = {Taylor \& Francis},
	Title = {Communication-efficient distributed statistical inference},
	Volume = {114},
	Year = {2019}}

@article{prentice1989surrogate,
  title={Surrogate endpoints in clinical trials: definition and operational criteria},
  author={Prentice, Ross L},
  journal={Statistics in medicine},
  volume={8},
  number={4},
  pages={431--440},
  year={1989},
  publisher={Wiley Online Library}
}

@article{vanderweele2013surrogate,
  title={Surrogate measures and consistent surrogates},
  author={VanderWeele, Tyler J},
  journal={Biometrics},
  volume={69},
  number={3},
  pages={561--565},
  year={2013},
  publisher={Wiley Online Library}
}

@article{gu2022robust,
  title={Robust angle-based transfer learning in high dimensions},
  author={Gu, Tian and Han, Yi and Duan, Rui},
  journal={arXiv preprint arXiv:2210.12759},
  year={2022}
}

@article{tian2023learning,
  title={Learning from Similar Linear Representations: Adaptivity, Minimaxity, and Robustness},
  author={Tian, Ye and Gu, Yuqi and Feng, Yang},
  journal={arXiv preprint arXiv:2303.17765},
  year={2023}
}

@article{lee2017communication,
	Author = {Lee, Jason D and Liu, Qiang and Sun, Yuekai and Taylor, Jonathan E},
	Journal = {Journal of Machine Learning Research},
	Number = {5},
	Pages = {1--30},
	Title = {Communication-efficient sparse regression},
	Volume = {18},
	Year = {2017}}

@article{battey2018distributed,
	Author = {Battey, Heather and Fan, Jianqing and Liu, Han and Lu, Junwei and Zhu, Ziwei and others},
	Journal = {The Annals of Statistics},
	Number = {3},
	Pages = {1352--1382},
	Publisher = {Institute of Mathematical Statistics},
	Title = {Distributed testing and estimation under sparse high dimensional models},
	Volume = {46},
	Year = {2018}}

@article{lewis2020polygenic,
  title={Polygenic risk scores: from research tools to clinical instruments},
  author={Lewis, Cathryn M and Vassos, Evangelos},
  journal={Genome medicine},
  volume={12},
  number={1},
  pages={1--11},
  year={2020},
  publisher={BioMed Central}
}

@article{kozak2009ucp1,
	Author = {Kozak, LP and Anunciado-Koza, R},
	Journal = {International journal of obesity},
	Number = {S7},
	Pages = {S32},
	Publisher = {Nature Publishing Group},
	Title = {UCP1: its involvement and utility in obesity},
	Volume = {32},
	Year = {2009}}

@inproceedings{rudelson2012reconstruction,
  title={Reconstruction from anisotropic random measurements},
  author={Rudelson, Mark and Zhou, Shuheng},
  booktitle={Conference on Learning Theory},
  pages={10--1},
  year={2012}
}

@article{zhang2014confidence,
	Author = {Zhang, Cun-Hui and Zhang, Stephanie S},
	Journal = {Journal of the Royal Statistical Society: Series B (Statistical Methodology)},
	Number = {1},
	Pages = {217--242},
	Publisher = {Wiley Online Library},
	Title = {Confidence intervals for low dimensional parameters in high dimensional linear models},
	Volume = {76},
	Year = {2014}}

@article{van2014asymptotically,
	Author = {van de Geer, Sara and B{\"u}hlmann, Peter and Ritov, Ya'acov and Dezeure, Ruben and others},
	Journal = {The Annals of Statistics},
	Number = {3},
	Pages = {1166--1202},
	Publisher = {Institute of Mathematical Statistics},
	Title = {On asymptotically optimal confidence regions and tests for high-dimensional models},
	Volume = {42},
	Year = {2014}}

@article{javanmard2014confidence,
	Author = {Javanmard, Adel and Montanari, Andrea},
	Journal = {The Journal of Machine Learning Research},
	Number = {1},
	Pages = {2869--2909},
	Publisher = {JMLR. org},
	Title = {Confidence intervals and hypothesis testing for high-dimensional regression},
	Volume = {15},
	Year = {2014}}

@article{neykov2016l1,
  title={L1-regularized least squares for support recovery of high dimensional single index models with gaussian designs},
  author={Neykov, Matey and Liu, Jun S and Cai, Tianxi},
  journal={The Journal of Machine Learning Research},
  volume={17},
  number={1},
  pages={2976--3012},
  year={2016},
  publisher={JMLR. org}
}

@article{lin2021learning,
  title={Learning to combat noisy labels via classification margins},
  author={Lin, Jason Z and Bradic, Jelena},
  journal={arXiv preprint arXiv:2102.00751},
  year={2021}
}

@article{bickel1975one,
  title={One-step Huber estimates in the linear model},
  author={Bickel, Peter J},
  journal={Journal of the American Statistical Association},
  volume={70},
  number={350},
  pages={428--434},
  year={1975},
  publisher={Taylor \& Francis}
}

@techreport{athey2019surrogate,
  title={The surrogate index: Combining short-term proxies to estimate long-term treatment effects more rapidly and precisely},
  author={Athey, Susan and Chetty, Raj and Imbens, Guido W and Kang, Hyunseung},
  year={2019},
  institution={National Bureau of Economic Research}
}

@article{athey2020combining,
  title={Combining experimental and observational data to estimate treatment effects on long term outcomes},
  author={Athey, Susan and Chetty, Raj and Imbens, Guido},
  journal={arXiv preprint arXiv:2006.09676},
  year={2020}
}

@article{kallus2020role,
  title={On the role of surrogates in the efficient estimation of treatment effects with limited outcome data},
  author={Kallus, Nathan and Mao, Xiaojie},
  journal={arXiv preprint arXiv:2003.12408},
  year={2020}
}

@article{he2016sparse,
	Author = {He, Qianchuan and Zhang, Hao Helen and Avery, Christy L and Lin, DY},
	Journal = {Biostatistics},
	Number = {2},
	Pages = {205--220},
	Publisher = {Oxford University Press},
	Title = {Sparse meta-analysis with high-dimensional data},
	Volume = {17},
	Year = {2016}}

@article{fan2008sure,
	Author = {Fan, Jianqing and Lv, Jinchi},
	Journal = {Journal of the Royal Statistical Society: Series B (Statistical Methodology)},
	Number = {5},
	Pages = {849--911},
	Publisher = {Wiley Online Library},
	Title = {Sure independence screening for ultrahigh dimensional feature space},
	Volume = {70},
	Year = {2008}}

@article{han2017provable,
  title={A provable smoothing approach for high dimensional generalized regression with applications in genomics},
  author={Han, Fang and Ji, Hongkai and Ji, Zhicheng and Wang, Honglang},
  journal={Electronic Journal of Statistics},
  volume={11},
  number={2},
  pages={4347--4403},
  year={2017},
  publisher={Institute of Mathematical Statistics and Bernoulli Society}
}

@article{eftekhari2021inference,
  title={Inference In High-dimensional Single-Index Models Under Symmetric Designs.},
  author={Eftekhari, Hamid and Banerjee, Moulinath and Ritov, Yaacov},
  journal={J. Mach. Learn. Res.},
  volume={22},
  number={27},
  pages={1--63},
  year={2021}
}

@article{tibshirani1996regression,
  title={Regression shrinkage and selection via the lasso},
  author={Tibshirani, Robert},
  journal={Journal of the Royal Statistical Society: Series B (Methodological)},
  volume={58},
  number={1},
  pages={267--288},
  year={1996},
  publisher={Wiley Online Library}
}

@book{buhlmann2011statistics,
	Author = {B{\"u}hlmann, Peter and van de Geer, Sara},
	Publisher = {Springer Science \& Business Media},
	Title = {Statistics for high-dimensional data: methods, theory and applications},
	Year = {2011}}

@article{negahban2012unified,
	Author = {Negahban, Sahand N and Ravikumar, Pradeep and Wainwright, Martin J and Yu, Bin},
	Journal = {Statistical Science},
	Number = {4},
	Pages = {538--557},
	Publisher = {Institute of Mathematical Statistics},
	Title = {A unified framework for high-dimensional analysis of $ M $-estimators with decomposable regularizers},
	Volume = {27},
	Year = {2012}}

@article{raskutti2011minimax,
	Author = {Raskutti, Garvesh and Wainwright, Martin J and Yu, Bin},
	Journal = {IEEE transactions on information theory},
	Number = {10},
	Pages = {6976--6994},
	Publisher = {IEEE},
	Title = {Minimax rates of estimation for high-dimensional linear regression over $L_q$-balls},
	Volume = {57},
	Year = {2011}}

@article{zhou2009restricted,
	Author = {Zhou, Shuheng},
	Journal = {arXiv preprint arXiv:0912.4045},
	Title = {Restricted eigenvalue conditions on subgaussian random matrices},
	Year = {2009}}

@article{jones2012datashield,
	Author = {Jones, EM and Sheehan, NA and Masca, N and Wallace, SE and Murtagh, MJ and Burton, PR},
	Journal = {Norsk epidemiologi},
	Number = {2},
	Title = {{DataSHIELD}--shared individual-level analysis without sharing the data: a biostatistical perspective},
	Volume = {21},
	Year = {2012}}

@article{gaye2014datashield,
	Author = {Gaye, Amadou and Marcon, Yannick and Isaeva, Julia and LaFlamme, Philippe and Turner, Andrew and Jones, Elinor M and Minion, Joel and Boyd, Andrew W and Newby, Christopher J and Nuotio, Marja-Liisa and others},
	Journal = {International journal of epidemiology},
	Number = {6},
	Pages = {1929--1944},
	Publisher = {Oxford University Press},
	Title = {{DataSHIELD}: taking the analysis to the data, not the data to the analysis},
	Volume = {43},
	Year = {2014}}

@article{doiron2013data,
	Author = {Doiron, Dany and Burton, Paul and Marcon, Yannick and Gaye, Amadou and Wolffenbuttel, Bruce HR and Perola, Markus and Stolk, Ronald P and Foco, Luisa and Minelli, Cosetta and Waldenberger, Melanie and others},
	Journal = {Emerging themes in epidemiology},
	Number = {1},
	Pages = {12},
	Publisher = {BioMed Central},
	Title = {Data harmonization and federated analysis of population-based studies: the {BioSHaRE} project},
	Volume = {10},
	Year = {2013}}

@article{wolfson2010datashield,
	Author = {Wolfson, Michael and Wallace, Susan E and Masca, Nicholas and Rowe, Geoff and Sheehan, Nuala A and Ferretti, Vincent and LaFlamme, Philippe and Tobin, Martin D and Macleod, John and Little, Julian and others},
	Journal = {International journal of epidemiology},
	Number = {5},
	Pages = {1372--1382},
	Publisher = {Oxford University Press},
	Title = {{DataSHIELD}: resolving a conflict in contemporary bioscience performing a pooled analysis of individual-level data without sharing the data},
	Volume = {39},
	Year = {2010}}

@article{cai2021individual,
  title={Individual data protected integrative regression analysis of high-dimensional heterogeneous data},
  author={Cai, Tianxi and Liu, Molei and Xia, Yin},
  journal={Journal of the American Statistical Association},
  volume={117},
  number={540},
  pages={2105--2119},
  year={2022},
  publisher={Taylor \& Francis}
}

@techreport{chernozhukov2016double,
    author = {Chernozhukov, Victor and Chetverikov, Denis and Demirer, Mert and Duflo, Esther and Hansen, Christian and Newey, Whitney and Robins, James},
    title = "{Double/debiased machine learning for treatment and structural parameters}",
    journal = {The Econometrics Journal},
    volume = {21},
    number = {1},
    pages = {C1-C68},
    year = {2018},
    month = {01},
    issn = {1368-4221},
    doi = {10.1111/ectj.12097},
    url = {https://doi.org/10.1111/ectj.12097},
    eprint = {https://academic.oup.com/ectj/article-pdf/21/1/C1/27684918/ectj00c1.pdf},
}

@article{farrell2021deep,
  title={Deep neural networks for estimation and inference},
  author={Farrell, Max H and Liang, Tengyuan and Misra, Sanjog},
  journal={Econometrica},
  volume={89},
  number={1},
  pages={181--213},
  year={2021},
  publisher={Wiley Online Library}
}

@article{li1991sliced,
  title={Sliced inverse regression for dimension reduction},
  author={Li, Ker-Chau},
  journal={Journal of the American Statistical Association},
  volume={86},
  number={414},
  pages={316--327},
  year={1991},
  publisher={Taylor \& Francis}
}

@article{han1987non,
  title={Non-parametric analysis of a generalized regression model: the maximum rank correlation estimator},
  author={Han, Aaron K},
  journal={Journal of Econometrics},
  volume={35},
  number={2-3},
  pages={303--316},
  year={1987},
  publisher={Elsevier}
}

@article{liang2010estimation,
  title={Estimation and testing for partially linear single-index models},
  author={Liang, Hua and Liu, Xiang and Li, Runze and Tsai, Chih-Ling},
  journal={Annals of statistics},
  volume={38},
  number={6},
  pages={3811},
  year={2010},
  publisher={NIH Public Access}
}

@book{box2015time,
  title={Time series analysis: forecasting and control},
  author={Box, George EP and Jenkins, Gwilym M and Reinsel, Gregory C and Ljung, Greta M},
  year={2015},
  publisher={John Wiley \& Sons}
}

@article{radchenko2015high,
  title={High dimensional single index models},
  author={Radchenko, Peter},
  journal={Journal of Multivariate Analysis},
  volume={139},
  pages={266--282},
  year={2015},
  publisher={Elsevier}
}

@inproceedings{wu2018turning,
  title={Turning clicks into purchases: Revenue optimization for product search in e-commerce},
  author={Wu, Liang and Hu, Diane and Hong, Liangjie and Liu, Huan},
  booktitle={The 41st International ACM SIGIR Conference on Research \& Development in Information Retrieval},
  pages={365--374},
  year={2018}
}

@inproceedings{wang2022surrogate,
  title={Surrogate for long-term user experience in recommender systems},
  author={Wang, Yuyan and Sharma, Mohit and Xu, Can and Badam, Sriraj and Sun, Qian and Richardson, Lee and Chung, Lisa and Chi, Ed H and Chen, Minmin},
  booktitle={Proceedings of the 28th ACM SIGKDD conference on knowledge discovery and data mining},
  pages={4100--4109},
  year={2022}
}

@article{wang2021review,
  title={A review of surrogate safety measures and their applications in connected and automated vehicles safety modeling},
  author={Wang, Chen and Xie, Yuanchang and Huang, Helai and Liu, Pan},
  journal={Accident Analysis \& Prevention},
  volume={157},
  pages={106157},
  year={2021},
  publisher={Elsevier}
}

@article{rodrigues2013genetic,
  title={Genetic variants in genes related to lipid metabolism and atherosclerosis, dyslipidemia and atorvastatin response},
  author={Rodrigues, AC and Sobrino, B and Genvigir, FDV and Willrich, MAV and Arazi, SS and Dorea, EL and Bernik, MMS and Bertolami, M and Faludi, AA and Brion, MJ and others},
  journal={Clinica Chimica Acta},
  volume={417},
  pages={8--11},
  year={2013},
  publisher={Elsevier}
}

@article{wu2021model,
  title={Model-Assisted Uniformly Honest Inference for Optimal Treatment Regimes in High Dimension},
  author={Wu, Yunan and Wang, Lan and Fu, Haoda},
  journal={Journal of the American Statistical Association},
  volume={118},
  number={541},
  pages={305--314},
  year={2023},
  publisher={Taylor \& Francis}
}

@article{zhang2020prior,
  title={Prior Adaptive Semi-supervised Learning with Application to EHR Phenotyping},
  author={Zhang, Yichi and Liu, Molei and Neykov, Matey and Cai, Tianxi},
  journal={arXiv preprint arXiv:2003.11744},
  year={2020}
}

@article{chakrabortty2017surrogate,
  title={Surrogate aided unsupervised recovery of sparse signals in single index models for binary outcomes},
  author={Chakrabortty, Abhishek and Neykov, Matey and Carroll, Raymond and Cai, Tianxi},
  journal={arXiv preprint arXiv:1701.05230},
  year={2017}
}

@article{ning2017general,
  title={A general theory of hypothesis tests and confidence regions for sparse high dimensional models},
  author={Ning, Yang and Liu, Han},
  journal={The Annals of Statistics},
  volume={45},
  number={1},
  pages={158--195},
  year={2017},
  publisher={Institute of Mathematical Statistics}
}

@book{wand1994kernel,
  title={Kernel smoothing},
  author={Wand, Matt P and Jones, M Chris},
  year={1994},
  publisher={CRC press}
}

@article{abate2003impact,
  title={The impact of ethnicity on type 2 diabetes},
  author={Abate, Nicola and Chandalia, Manisha},
  journal={Journal of Diabetes and its Complications},
  volume={17},
  number={1},
  pages={39--58},
  year={2003},
  publisher={Elsevier}
}

@article{jiang2016variable,
  title={Variable selection with prior information for generalized linear models via the prior LASSO method},
  author={Jiang, Yuan and He, Yunxiao and Zhang, Heping},
  journal={Journal of the American Statistical Association},
  volume={111},
  number={513},
  pages={355--376},
  year={2016},
  publisher={Taylor \& Francis}
}

@article{cai2021optimal,
  title={Optimal statistical inference for individualized treatment effects in high-dimensional models},
  author={Cai, Tianxi and Tony Cai, T and Guo, Zijian},
  journal={Journal of the Royal Statistical Society: Series B (Statistical Methodology)},
  volume={83},
  number={4},
  pages={669--719},
  year={2021},
  publisher={Wiley Online Library}
}

@article{zhang2019high,
  title={High-throughput phenotyping with electronic medical record data using a common semi-supervised approach (PheCAP)},
  author={Zhang, Yichi and Cai, Tianrun and Yu, Sheng and Cho, Kelly and Hong, Chuan and Sun, Jiehuan and Huang, Jie and Ho, Yuk-Lam and Ananthakrishnan, Ashwin N and Xia, Zongqi and others},
  journal={Nature protocols},
  volume={14},
  number={12},
  pages={3426--3444},
  year={2019},
  publisher={Nature Publishing Group}
}

@article{cai2022semi,
  title={Semi-supervised Triply Robust Inductive Transfer Learning},
  author={Cai, Tianxi and Li, Mengyan and Liu, Molei},
  journal={arXiv preprint arXiv:2209.04977},
  year={2022}
}

@article{hong2021clinical,
  title={Clinical knowledge extraction via sparse embedding regression (KESER) with multi-center large scale electronic health record data},
  author={Hong, Chuan and Rush, Everett and Liu, Molei and Zhou, Doudou and Sun, Jiehuan and Sonabend, Aaron and Castro, Victor M and Schubert, Petra and Panickan, Vidul A and Cai, Tianrun and others},
  journal={NPJ digital medicine},
  volume={4},
  number={1},
  pages={1--11},
  year={2021},
  publisher={Nature Publishing Group}
}

@article{hou2021efficient,
  title={Efficient and Robust Semi-supervised Estimation of ATE with Partially Annotated Treatment and Response},
  author={Hou, Jue and Mukherjee, Rajarshi and Cai, Tianxi},
  journal={arXiv preprint arXiv:2110.12336},
  year={2021}
}

@article{cai2018association,
  title={Association of interleukin 6 receptor variant with cardiovascular disease effects of interleukin 6 receptor blocking therapy: a phenome-wide association study},
  author={Cai, Tianxi and Zhang, Yichi and Ho, Yuk-Lam and Link, Nicholas and Sun, Jiehuan and Huang, Jie and Cai, Tianrun A and Damrauer, Scott and Ahuja, Yuri and Honerlaw, Jacqueline and others},
  journal={JAMA cardiology},
  volume={3},
  number={9},
  pages={849--857},
  year={2018},
  publisher={American Medical Association}
}

@article{beaulieu2020examining,
  title={Examining the use of real-world evidence in the regulatory process},
  author={Beaulieu-Jones, Brett K and Finlayson, Samuel G and Yuan, William and Altman, Russ B and Kohane, Isaac S and Prasad, Vinay and Yu, Kun-Hsing},
  journal={Clinical Pharmacology \& Therapeutics},
  volume={107},
  number={4},
  pages={843--852},
  year={2020},
  publisher={Wiley Online Library}
}

@article{maiorino2023phenomics,
  title={Phenomics and Robust Multiomics Data for Cardiovascular Disease Subtyping},
  author={Maiorino, Enrico and Loscalzo, Joseph},
  journal={Arteriosclerosis, Thrombosis, and Vascular Biology},
  volume={43},
  number={7},
  pages={1111--1123},
  year={2023},
  publisher={Am Heart Assoc}
}

@article{kohane2011using,
  title={Using electronic health records to drive discovery in disease genomics},
  author={Kohane, Isaac S},
  journal={Nature Reviews Genetics},
  volume={12},
  number={6},
  pages={417--428},
  year={2011},
  publisher={Nature Publishing Group}
}

@article{geva2021high,
  title={A high-throughput phenotyping algorithm is portable from adult to pediatric populations},
  author={Geva, Alon and Liu, Molei and Panickan, Vidul A and Avillach, Paul and Cai, Tianxi and Mandl, Kenneth D},
  journal={Journal of the American Medical Informatics Association},
  volume={28},
  number={6},
  pages={1265--1269},
  year={2021},
  publisher={Oxford University Press}
}

@article{zhang2022prior,
  title={Prior Adaptive Semi-supervised Learning with Application to EHR Phenotyping},
  author={Zhang, Yichi and Liu, Molei and Neykov, Matey and Cai, Tianxi},
  journal={Journal of Machine Learning Research},
  volume={23},
  number={83},
  pages={1--25},
  year={2022}
}

@article{zou2006adaptive,
  title={The adaptive lasso and its oracle properties},
  author={Zou, Hui},
  journal={Journal of the American statistical association},
  volume={101},
  number={476},
  pages={1418--1429},
  year={2006},
  publisher={Taylor \& Francis}
}

@article{hong2019semi,
  title={Semi-supervised validation of multiple surrogate outcomes with application to electronic medical records phenotyping},
  author={Hong, Chuan and Liao, Katherine P and Cai, Tianxi},
  journal={Biometrics},
  volume={75},
  number={1},
  pages={78--89},
  year={2019},
  publisher={Wiley Online Library}
}
}
\newpage

\appendix

\setcounter{theorem}{0}
\setcounter{equation}{0}
\setcounter{figure}{0}
\setcounter{table}{0}

\renewcommand{\thetheorem}{\thesection.\arabic{theorem}}
\renewcommand{\theequation}{\thesection.\arabic{equation}}
\renewcommand{\thefigure}{\thesection.\arabic{figure}}
\renewcommand{\thetable}{\thesection.\arabic{table}}

\renewcommand{\theHtheorem}{\thesection.\arabic{theorem}}
\renewcommand{\theHequation}{\thesection.\arabic{equation}}
\renewcommand{\theHfigure}{\thesection.\arabic{figure}}
\renewcommand{\theHtable}{\thesection.\arabic{table}}



\section{Proof of the Main Theorems}\label{sec:A}

\subsection{Proof Outline}\label{sec:outline}
Before proving the main theorems, we first give an outline of the main steps in the proofs.


\begin{enumerate}
\item[S1] We use Lemmas \ref{lem:fhat} and \ref{lem:partial-fhat} to control kernel estimation error.
\item[S2] We use Lemmas \ref{lem:A3} and \ref{lem:A4} bound the additional higher-order error terms introduced by kernel approximation.

{\item[S3] We establish the restricted curvature condition in Lemma \ref{lem:re} on the normalized tangent space $\mathbb{V}_0$ specified in Assumption \ref{ass:4}.}

\item[S4]
To prove Theorem \ref{thm:local}, since $\widehat\bgamma_{[t]}\supm$ minimizes the loss function $Q(\cdot)$ defined in \eqref{eqn:Q1}, we start from the basic inequality  $Q(\widehat\bgamma_{[t]}\supm) \leq Q(\bgamma_0)$, and use the results of S1, S2 and S3.

\item[S5] To prove Theorem \ref{thm:aggregate}, we still utilize the basic inequality
$Q(\widehat\bgamma) \leq Q(\bgamma_0)$ to bound $\widehat\bgamma - \bgamma_0$, where the quadratic loss function $Q(\cdot)$ in aggregation step is defined in \eqref{eqn:Q2}.

\item[S6] To prove the rate comparison in Theorem \ref{thm:equivalence}, following a strategy similar to that used in \cite{cai2021individual}, we start from the basic inequality
$Q(\widehat\bgamma) \leq Q(\widehat\bgamma_{\subipd})$ to compare $\widehat\bgamma$ and $\widehat\bgamma_{\subipd}$, as $\widehat\bgamma$ minimizes quadratic loss function $Q(\cdot)$ in aggregation, and leverage the fact that $\widehat\bgamma_{\subipd}$ minimizes the individual-level loss function to simplify the basic inequality. {Since both $\widehat\bgamma$ and $\widehat\bgamma_{\subipd}$ satisfy the normalization $\gamma_1=1$, their difference has first coordinate zero, so the curvature condition in Assumption \ref{ass:4} applies directly to the comparison direction.}

\end{enumerate}

\newpage

\subsection{Proof of Theorem \ref{thm:local}}\label{sec:pf:local}
\begin{proof}
In Step II, we update the estimator  for
$\bgamma_0$ (direction of $\bbeta_0$) iteratively in Algorithm \ref{alg:2}. Without loss of generality, we consider the $m$-th site ($1 \leq m \leq M$) and the $t$-th iteration. {\blue For the theoretical analysis, we take $\widehat\phi_m(\bX_i\supm;\widehat\bgamma_{[t-1]}\supm)=1$. Lemma \ref{lem:1} gives the initial membership $\widehat\bgamma_{[0]}\supm\in\Gamma$, and the recursion below keeps all iterates in $\Gamma$ on the same high-probability event.}

Recall that we defined 
$\footnotesize Q\supm(\bgamma;\widetilde\bgamma)=\Pbbhat\supm\left\{S-\fhat_m(\bX;\widetilde\bgamma)-(\bgamma-\widetilde\bgamma)\trans\partial_{\bgamma}\fhat_m(\bX;\widetilde\bgamma)\right\}^2.$
Here we use $\small Q\supm_{\lambda\supm_{t}}(\bgamma;\widehat\bgamma_{[t-1]}\supm)$ to denote the loss function in Site $m$ at iteration $t$ with penalty that

\begin{scriptsize}
\begin{equation}
	\begin{aligned}\label{eqn:Q1}
    \small
	& Q\supm_{\lambda\supm_{t}}(\bgamma;\widehat\bgamma_{[t-1]}\supm) \\ =:& N_m^{-1}\sum_{i=1}^{N_m} \left\{S\supm_{i}-\fhat_m(\bX_i\supm;\widehat\bgamma_{[t-1]}\supm)-(\bgamma-\widehat\bgamma_{[t-1]}\supm)\trans\partial_{\bgamma}\fhat_m(\bX_i\supm;\widehat\bgamma_{[t-1]}\supm)\right\}^2+\lambda_t\supm\|\bgamma\|_1 \\
	=& N_m^{-1}\sum_{i=1}^{N_m} \left\{S\supm_{i} - \fhat_m(\bX_i\supm;\bgamma_0) + \fhat_m(\bX_i\supm;\bgamma_0) - \fhat_m(\bX_i\supm;\widehat\bgamma_{[t-1]}\supm) -(\bgamma-\widehat\bgamma_{[t-1]}\supm)\trans\partial_{\bgamma}\fhat_m(\bX_i\supm;\widehat\bgamma_{[t-1]}\supm)\right\}^2
 +\lambda_t\supm\|\bgamma\|_1\\
	 = &N_m^{-1}\sum_{i=1}^{N_m} \bigg\{S\supm_{i} - \fhat_m(\bX_i\supm;\bgamma_0) + (\bgamma_0-\widehat\bgamma_{[t-1]}\supm)\trans \int_{0}^{1} \partial_{\bgamma}\fhat_m \big(\bX_i\supm; \widehat\bgamma_{[t-1]}\supm + \tau (\bgamma_0-\widehat\bgamma_{[t-1]}\supm)\big)  \mathrm{d} \tau  \\ 
	& \ \ \ \  - (\bgamma-\widehat\bgamma_{[t-1]}\supm)\trans\partial_{\bgamma}\fhat_m(\bX_i\supm;\widehat\bgamma_{[t-1]}\supm)\bigg\}^2 +\lambda_t\supm \|\bgamma\|_1 \\
	 = &N_m^{-1}\sum_{i=1}^{N_m} \bigg\{S\supm_{i} - \fhat_m(\bX_i\supm;\bgamma_0) + (\bgamma_0-\widehat\bgamma_{[t-1]}\supm)\trans  \\ 
	& \ \ \cdot \int_{0}^{1} \Big(\partial_{\bgamma}\fhat_m \big(\bX_i\supm; \widehat\bgamma_{[t-1]}\supm + \tau (\bgamma_0-\widehat\bgamma_{[t-1]}\supm)\big) - \partial_{\bgamma}\fhat_m(\bX_i\supm;\widehat\bgamma_{[t-1]}\supm) \Big) \mathrm{d} \tau   - (\bgamma-\bgamma_{0})\trans\partial_{\bgamma}\fhat_m(\bX_i\supm;\widehat\bgamma_{[t-1]}\supm)\bigg\}^2
+ \lambda_t\supm \|\bgamma\|_1,
	\end{aligned}
	\end{equation}
    \end{scriptsize}
where the third equality comes from Taylor expansion that
$$\small \fhat_m(\bX_i\supm;\bgamma_0) - \fhat_m(\bX_i\supm;\widehat\bgamma_{[t-1]}\supm) = (\bgamma_0-\widehat\bgamma_{[t-1]}\supm)\trans \int_{0}^{1} \partial_{\bgamma}\fhat_m \big(\bX_i\supm; \widehat\bgamma_{[t-1]}\supm + \tau (\bgamma_0-\widehat\bgamma_{[t-1]}\supm)\big)  \mathrm{d} \tau. $$
From the optimization procedure in Step II at iteration $t$ that $\widehat\bgamma_{[t]}\supm$ minimize the loss function $Q\supm_{\lambda\supm_{t}}(\cdot;\widehat\bgamma_{[t-1]}\supm)$, we have the basic inequality 
$\small Q\supm_{\lambda\supm_{t}}(\widehat\bgamma_{[t]}\supm;\widehat\bgamma_{[t-1]}\supm) 
 \leq Q\supm_{\lambda\supm_{t}}(\bgamma_0;\widehat\bgamma_{[t-1]}\supm)$, i.e., 
 \begin{scriptsize}
	\begin{align*}
	& N_m^{-1}\sum_{i=1}^{N_m} \bigg\{S\supm_{i} - \fhat_m(\bX_i\supm;\bgamma_0) + (\bgamma_0-\widehat\bgamma_{[t-1]}\supm)\trans  \\ 
	& \ \ \cdot \int_{0}^{1} \Big(\partial_{\bgamma}\fhat_m \big(\bX_i\supm; \widehat\bgamma_{[t-1]}\supm + \tau (\bgamma_0-\widehat\bgamma_{[t-1]}\supm) \big) - \partial_{\bgamma}\fhat_m(\bX_i\supm;\widehat\bgamma_{[t-1]}\supm) \Big) \mathrm{d} \tau   - (\widehat\bgamma_{[t]}\supm-\bgamma_{0})\trans\partial_{\bgamma}\fhat_m(\bX_i\supm;\widehat\bgamma_{[t-1]}\supm)\bigg\}^2 +\lambda_t\supm\|\widehat\bgamma_{[t]}\supm\|_1 \\
	\leq & N_m^{-1}\sum_{i=1}^{N_m} \bigg\{S\supm_{i} - \fhat_m(\bX_i\supm;\bgamma_0) + (\bgamma_0-\widehat\bgamma_{[t-1]}\supm)\trans 
    \cdot \int_{0}^{1} \Big(\partial_{\bgamma}\fhat_m \big(\bX_i\supm; \widehat\bgamma_{[t-1]}\supm + \tau (\bgamma_0-\widehat\bgamma_{[t-1]}\supm)\big) - \partial_{\bgamma}\fhat_m(\bX_i\supm;\widehat\bgamma_{[t-1]}\supm) \Big) \mathrm{d} \tau  \bigg\}^2 +\lambda_t\supm\|\bgamma_{0}\|_1, 
	\end{align*}
    \end{scriptsize}
%
%
%
which further gives

%
%
\begin{scriptsize}
\begin{equation}\label{eqn:A1}
	\begin{aligned}
	& \frac{1}{N_m}\sum_{i=1}^{N_m} \Big\{ (\widehat\bgamma_{[t]}\supm-\bgamma_{0})\trans \partial_{\bgamma}\fhat_m(\bX_i\supm;\widehat\bgamma_{[t-1]}\supm) \Big\}^2 \\
	\leq & \frac{2}{N_m}\sum_{i=1}^{N_m} (\widehat\bgamma_{[t]}\supm-\bgamma_{0})\trans \partial_{\bgamma}\fhat_m(\bX_i\supm;\widehat\bgamma_{[t-1]}\supm)  
	\cdot \bigg\{S\supm_{i} - \fhat_m(\bX_i\supm;\bgamma_0) 
	\\ 
	& \ \  + (\bgamma_0-\widehat\bgamma_{[t-1]}\supm)\trans   
	\int_{0}^{1} \Big(\partial_{\bgamma}\fhat_m(\bX_i\supm; \widehat\bgamma_{[t-1]}\supm + \tau (\bgamma_0-\widehat\bgamma_{[t-1]}\supm)) - \partial_{\bgamma}\fhat_m(\bX_i\supm;\widehat\bgamma_{[t-1]}\supm) \Big) \mathrm{d} \tau  \bigg\} 
 +\lambda_t\supm \big(\|\bgamma_{0}\|_1 - \|\widehat\bgamma_{[t]}\supm\|_1 \big) \\ 
	 = &\frac{2}{N_m}\sum_{i=1}^{N_m} (\widehat\bgamma_{[t]}\supm-\bgamma_{0})\trans \partial_{\bgamma}\fhat_m(\bX_i\supm;\widehat\bgamma_{[t-1]}\supm)  
	\cdot \bigg\{S\supm_{i} -f_m(\bgamma_0\trans\bX_i\supm)+f_m(\bgamma_0\trans\bX_i\supm)- \fhat_m(\bX_i\supm;\bgamma_0) 
	\\ 
	& \ \  + (\bgamma_0-\widehat\bgamma_{[t-1]}\supm)\trans   
	\int_{0}^{1} \Big(\partial_{\bgamma}\fhat_m(\bX_i\supm; \widehat\bgamma_{[t-1]}\supm + \tau (\bgamma_0-\widehat\bgamma_{[t-1]}\supm)) - \partial_{\bgamma}\fhat_m(\bX_i\supm;\widehat\bgamma_{[t-1]}\supm) \Big) \mathrm{d} \tau  \bigg\}
 +\lambda_t\supm \big(\|\bgamma_{0}\|_1 - \|\widehat\bgamma_{[t]}\supm\|_1 \big) 
	\\
	= & 
	\frac{2}{N_m}\sum_{i=1}^{N_m} (\widehat\bgamma_{[t]}\supm-\bgamma_{0})\trans 
	f'_m(\bgamma_0\trans\bX\supm_i) \Big(\bX\supm_i - \Ebb\big(\bX \biggiven \bgamma_0\trans\bX_i\supm \big)\Big) \epsilon\supm_{i}  \\
	& \ \ 
	+ \frac{2}{N_m}\sum_{i=1}^{N_m} (\widehat\bgamma_{[t]}\supm-\bgamma_{0})\trans 
	\Big\{\partial_{\bgamma}\fhat_m(\bX_i\supm;\widehat\bgamma_{[t]}\supm)-f'_m(\bgamma_0\trans\bX\supm_i) \Big(\bX\supm_i - \Ebb\big(\bX \biggiven \bgamma_0\trans\bX_i\supm \big)\Big) \Big\} \epsilon\supm_{i}
	\\
	& \ \ 
	+ \frac{2}{N_m}\sum_{i=1}^{N_m} (\widehat\bgamma_{[t]}\supm-\bgamma_{0})\trans \partial_{\bgamma}\fhat_m(\bX_i\supm;\widehat\bgamma_{[t-1]}\supm)
\cdot \Big\{f_m(\bgamma_0\trans\bX_i\supm)- \fhat_m(\bX_i\supm;\bgamma_0) \Big\}
	\\
	& \ \ 
	+ \frac{2}{N_m}\sum_{i=1}^{N_m} (\widehat\bgamma_{[t]}\supm-\bgamma_{0})\trans \partial_{\bgamma}\fhat_m(\bX_i\supm;\widehat\bgamma_{[t-1]}\supm)   
	\cdot \bigg\{ (\bgamma_0-\widehat\bgamma_{[t-1]}\supm)\trans   
	\int_{0}^{1} \Big(\partial_{\bgamma}\fhat_m(\bX_i\supm; \widehat\bgamma_{[t-1]}\supm + \tau (\bgamma_0-\widehat\bgamma_{[t-1]}\supm) - \partial_{\bgamma}\fhat_m(\bX_i\supm;\widehat\bgamma_{[t-1]}\supm) \Big) \mathrm{d} \tau  \bigg\}
	\\ 
	& \ \ +\lambda_t\supm \big(\|\bgamma_{0}\|_1 - \|\widehat\bgamma_{[t]}\supm\|_1 \big) 
:= A_1 + A_2 + A_3 + A_4 +\lambda_t\supm \big(\|\bgamma_{0}\|_1 - \|\widehat\bgamma_{[t]}\supm\|_1 \big) ,
	\end{aligned}
\end{equation}
\end{scriptsize}
where $\epsilon\supm_{i}:=S\supm_{i}-f_m(\bgamma_0\trans\bX_i\supm)$.
Denote $\Delta_t\supm := \widehat\bgamma_{[t]}\supm-\bgamma_{0}$. By Lemmas \ref{lem:A1} - \ref{lem:A4}, we have 
\begin{small}
\begin{equation}\nonumber
\begin{aligned}
& |A_1| \lesssim  \|\Delta_{t}\supm\|_1\sqrt{\frac{\log p}{N_m}}, \\
& |A_2| \lesssim  \|\Delta_{t}\supm\|_1\sqrt{\frac{\log p}{N_m}}, \\
& |A_3| \lesssim   \|\Delta_{t}\supm\|_1\sqrt{\frac{1}{N_m}} + \|\Delta_{t}\supm\|_1 h_m^3 + \|\Delta_{t}\supm\|_2\|\Delta_{t-1}\supm\|_2 h_m^2,  \\
& |A_4| \lesssim 	\|\Delta_{t}\supm\|_2 
  \|\Delta_{t-1}\supm\|_2 h_m
	+ 
	\|\Delta_{t}\supm\|_2
	 \|\Delta_{t-1}\supm\|_2^2. \\
\end{aligned}    
\end{equation}
\end{small}

Combining the above bounds with Lemma \ref{lem:re}, we have 
\begin{scriptsize}
\begin{equation}\label{eqn:A2}
\begin{aligned}
\|\Delta_t\supm\|_2^2 
& \lesssim   \|\Delta_{t}\supm\|_1 \bigg( \sqrt{\frac{\log p}{N_m}} + \sqrt{\frac{1}{N_m}}  + h_m^3 \bigg) 
+ \|\Delta_{t}\supm\|_{2} 
   \Big(\|\Delta_{t-1}\supm\|_2 h_m^2 + \|\Delta_{t-1}\supm\|_2 h_m
+ \|\Delta_{t-1}\supm\|_2^2  \Big)
+  \lambda_t\supm \big(\|\bgamma_{0}\|_1 - \|\widehat\bgamma_{[t]}\supm\|_1 \big) \\
& 
  \lesssim \sqrt{s}\|\Delta_{t}\supm\|_2  \sqrt{\frac{\log p}{N_m}}  
+ \|\Delta_{t}\supm\|_{2} 
 \Big( \|\Delta_{t-1}\supm\|_2 h_m 
+ \|\Delta_{t-1}\supm\|_2^2  \Big)
+ \lambda_t\supm \sqrt{s} \|\Delta_{t}\supm\|_2 
\end{aligned}
\end{equation}
\end{scriptsize}
 by Lemmas \ref{cor:1} and \ref{cor:2}, which gives 
 \begin{small}
\begin{equation}\label{eqn:A3}
\begin{aligned}
\|\Delta_t\supm\|_2
\lesssim &    \sqrt{\frac{s\log p}{N_m}}   + \|\Delta_{t-1}\supm\|_2 h_m
+ \|\Delta_{t-1}\supm\|_2^2 
+ \lambda_t\supm \sqrt{s},
\end{aligned}  
\end{equation}
\end{small}
under the assumption that $ h_m \lesssim N_m^{-1/6}  $.




{ By Lemma \ref{lem:1}}, the initial estimator satisfies  $\|\Delta_{0}\supm\|_{2} = \|\widehat\bgamma_{[0]}\supm - \bgamma_0 \|_2 \lesssim \sqrt{{ s \log p }/{n\beta^2_{01}}}$.
We choose $
\Big(\frac{s \log (p \vee N_m)}{N_m} \Big)^{1/5}\lesssim h_m \lesssim N_m^{-1/6}$ assumed in Theorem \ref{thm:local}, and choose 
$\lambda_t\supm$ satisfying the assumption of Lemma \ref{lem:cone} that $\lambda_t\supm \asymp \sqrt{\frac{\log p}{N_m}}   + \|\Delta_{t-1}\supm\|_2 h_m 
+ \|\Delta_{t-1}\supm\|_2^2$. 
Then by \eqref{eqn:A3} we have
\begin{small}
\begin{equation}\label{eqn:A-A5}
  \|\Delta_{t}\supm\|_{2}  \lesssim \begin{cases}
   \Big(\sqrt{\frac{ s \log p}{n\beta^2_{01}}}\Big)^{2^{t}}, 
   & \text{if $t < t'$}.\\
    \sqrt{\frac{s\log p}{N_m}}  + \Big(\sqrt{\frac{ s \log p}{n\beta^2_{01}}}\Big)^{2^{t'}} h_m^{t-t'}, 
    & \text{otherwise},
  \end{cases}
\end{equation}
\end{small}
where $\small t' = \log_2 \Big(\log h_m/\log(\sqrt{\frac{ s \log p }{n\beta^2_{01}}}) \Big) + 1$. When $t < t'$, $\|\Delta_{t-1}\supm\|_2^2 $ dominates the error in \eqref{eqn:A3}. And when $t\geq t'$, $\|\Delta_{t-1}\supm\|_2 h_m $ dominates the error in \eqref{eqn:A3} and we have $\small \|\Delta_{t}\supm\|_2 \lesssim \sqrt{\frac{s\log p}{N_m}}  + \Big(\sqrt{\frac{ s \log p}{n\beta^2_{01}}}\Big)^{2^{t'}} h_m^{t-t'} \lesssim \sqrt{\frac{s\log p}{N_m}}  + h_m^{t-t'+1}$ by the definition of $t'$. By the assumption that $\small
\Big(\frac{s \log (p \vee N_m)}{N_m} \Big)^{1/5}\lesssim h_m \lesssim N_m^{-1/6}$, after the iteration $t'$, we only need to iterate constant times to achieve  $\|\Delta_{t}\supm\|_2 \lesssim \sqrt{\frac{s\log p}{N_m}}$.
So we have $T_m \asymp t' \asymp \log(-\log h_m) -\log\{\log (n/(s\log p)\}$
such that
$$\|\Delta_{T_m}\supm\|_{2}  \lesssim \sqrt{\frac{s \log p}{N_m}},$$ 
 and thus $\|\Delta_{T_m}\supm\|_{1}  \lesssim s\sqrt{\frac{ \log p}{N_m}}$
 from Lemma \ref{cor:1}.

Finally to be specific, by \eqref{eqn:A-A5}, we choose 
\begin{small}
\begin{equation}\label{eqn:lambda}
  \lambda_t\supm = \begin{cases}
    \Big(\sqrt{\frac{ s \log p }{n\beta_{01}^2}}\Big)^{2^{t}}, & \text{if $t < t'$}.\\
    \sqrt{\frac{\log p}{N_m}}  + \Big(\sqrt{\frac{ s \log p }{n\beta_{01}^2}}\Big)^{2^{t'}} h_m^{t-t'}, & \text{otherwise}.
  \end{cases}
\end{equation}
\end{small}

\end{proof}

\subsection{Proof of Theorem \ref{thm:aggregate}}\label{sec:pf:aggregate}
\begin{proof}
In Step III, we first aggregate summary statistics from local sites to derive a temporary estimator for the direction $\bgamma_0$. Let the loss function be 
\begin{small}
\begin{equation}\label{eqn:Q2}
 J(\bgamma) = 
\frac{1}{N}\Big\{-2\bgamma\trans\sum_{m=1}^{M}N_m\bOmegahat_{x,s}\supm+\bgamma\trans\sum_{m=1}^{M}N_m \bOmegahat_{x,x}\supm \bgamma \Big\}+\lambda \|\bgamma\|_1,
\end{equation}
\end{small}
where the summary statistics are defined as 	$$ \small
\bOmegahat_{x,x}\supm =  N_m^{-1}\sumiNm \partial_{\bgamma}\fhat_m(\bX_i\supm;\widehat\bgamma\supm)\Big\{\partial_{\bgamma}\fhat_m(\bX_i\supm;\widehat\bgamma\supm)\Big\}\trans,
$$
$$ \small
\bOmegahat_{x,s}\supm  = N_m^{-1}\sumiNm \Big\{S_i\supm-\fhat_m(\bX_i\supm;\widehat\bgamma\supm)+\widehat\bgamma\supmtrans\partial_{\bgamma}\fhat_m(\bX_i\supm;\widehat\bgamma\supm)\Big\}\partial_{\bgamma}\fhat_m(\bX_i\supm;\widehat\bgamma\supm).
$$
$\bOmegahat_{x,s}\supm $ can be decomposed as 
\begin{small}
\begin{equation}\label{eqn:X0}
\begin{aligned}
\bOmegahat_{x,s}\supm  = - \Xihat\supm  + \bOmegahat_{x,x}\supm  \widehat\bgamma\supm, \text{ where } \Xihat\supm  = - N_m^{-1}\sumiNm \Big\{S_i\supm-\fhat_m(\bX_i\supm;\widehat\bgamma\supm)\Big\} \partial_{\bgamma}\fhat_m(\bX_i\supm;\widehat\bgamma\supm).
\end{aligned}
\end{equation}
\end{small}

By the basic inequality
$J(\widehat\bgamma) \leq J(\bgamma_0)$, we have 
\begin{scriptsize}
	\begin{equation}\label{eqn:thm2-1}
	\begin{aligned}
	& \frac{1}{N} \bigg\{\sum_{m=1}^{M} N_{m}(\widehat\bgamma-\bgamma_0)\trans \bOmegahat_{x,x}\supm (\widehat\bgamma-\bgamma_0) \bigg\} + \lambda \|\widehat\bgamma\|_1  
		\leq   \frac{2}{N} \sum_{m=1}^{M} N_{m}(\widehat\bgamma-\bgamma_0)\trans \Big\{-\widehat\Xi\supm + \bOmegahat_{x,x}\supm (\widehat\bgamma\supm-\bgamma_0) \Big\}  + \lambda \|\bgamma_{0}\|_1. 
	\end{aligned}
	\end{equation}
    \end{scriptsize}
%

We first bound $\frac{1}{N} \sum_{m=1}^{M} N_{m}(\widehat\bgamma-\bgamma_0)\trans \Big\{-\widehat\Xi\supm + \bOmegahat_{x,x}\supm (\widehat\bgamma\supm-\bgamma_0) \Big\}$. We have
\begin{footnotesize}
\begin{equation}\nonumber
\begin{aligned}
& \frac{1}{N} \sum_{m=1}^{M} N_{m}(\widehat\bgamma-\bgamma_0)\trans \Big\{-\widehat\Xi\supm + \bOmegahat_{x,x}\supm (\widehat\bgamma\supm-\bgamma_0) \Big\}  \\
= & \frac{1}{N} \sum_{m=1}^{M} (\widehat\bgamma-\bgamma_0)\trans \sum_{i=1}^{N_m} \partial_{\bgamma}\fhat_m(\bX_i\supm;\widehat\bgamma\supm)  \bigg\{S\supm_{i} - \fhat_m(\bX_i\supm;\widehat\bgamma\supm) +   \partial_{\bgamma}\fhat_m(\bX_i\supm;\widehat\bgamma\supm)\trans (\widehat\bgamma\supm-\bgamma_0) \bigg\}  \\
= & \frac{1}{N} \sum_{m=1}^{M} (\widehat\bgamma-\bgamma_0)\trans \sum_{i=1}^{N_m} \partial_{\bgamma}\fhat_m(\bX_i\supm;\widehat\bgamma\supm)  \bigg\{S\supm_{i} 
- f_m(\bgamma_0\trans\bX_i\supm) 
+ f_m(\bgamma_0\trans\bX_i\supm) - \fhat_m(\bX_i\supm;\bgamma_0) \\
& \qquad\qquad\qquad\qquad + \fhat_m(\bX_i\supm;\bgamma_0) 
- \fhat_m(\bX_i\supm;\widehat\bgamma\supm) 
+ \partial_{\bgamma}\fhat_m(\bX_i\supm;\widehat\bgamma\supm)\trans (\widehat\bgamma\supm-\bgamma_0) \bigg\} \\
= & \frac{1}{N} \sum_{m=1}^{M} (\widehat\bgamma-\bgamma_0)\trans \sum_{i=1}^{N_m} \partial_{\bgamma}\fhat_m(\bX_i\supm;\widehat\bgamma\supm)  \bigg\{\epsilon_i\supm 
+ f_m(\bgamma_0\trans\bX_i\supm) - \fhat_m(\bX_i\supm;\bgamma_0) \\
& \qquad\qquad\qquad\qquad + (\bgamma_0-\widehat\bgamma\supm)\trans   
\int_{0}^{1} \Big(\partial_{\bgamma}\fhat_m(\bX_i\supm; \widehat\bgamma\supm + \tau (\bgamma_0-\widehat\bgamma\supm)) - \partial_{\bgamma}\fhat_m(\bX_i\supm;\widehat\bgamma\supm) \Big) \mathrm{d} \tau \bigg\}. 
\end{aligned}   
\end{equation}
\end{footnotesize}
By similar argument in the proof of Theorem \ref{thm:local} and the assumption $h_m \lesssim  N_m^{-1/6}$, we have

\begin{footnotesize}
\begin{equation}\nonumber
\begin{aligned}
& \Big|\frac{1}{N} \sum_{m=1}^{M}\sum_{i=1}^{N_m}(\widehat\bgamma-\bgamma_0)\trans  \partial_{\bgamma}\fhat_m(\bX_i\supm;\widehat\bgamma\supm)  \epsilon_i\supm \Big| 
\lesssim \|\widehat\bgamma-\bgamma_0\|_1\sqrt{\frac{\log p}{N}}, \\
& \Big| \frac{1}{N} \sum_{m=1}^{M}\sum_{i=1}^{N_m}(\widehat\bgamma-\bgamma_0)\trans  \partial_{\bgamma}\fhat_m(\bX_i\supm;\widehat\bgamma\supm) \Big\{ f_m(\bgamma_0\trans\bX_i\supm) - \fhat_m(\bX_i\supm;\bgamma_0) \Big\} \Big| \\
 \lesssim &  \|\widehat\bgamma-\bgamma_0\|_1\sqrt{\frac{1}{N}} 
+ \|\widehat\bgamma-\bgamma_0\|_2 \frac{1}{N} \sum_{m=1}^{M}N_m h_m^3 
+ \|\widehat\bgamma-\bgamma_0\|_2 \frac{1}{N} \sum_{m=1}^{M}N_m\|\widehat\bgamma\supm-\bgamma_0\|_2 h_m^2
 \lesssim  \|\widehat\bgamma-\bgamma_0\|_1\sqrt{\frac{1}{N}},  \\
\end{aligned}   
\end{equation} 
\end{footnotesize}
and
\begin{footnotesize}
\begin{equation}\nonumber
\begin{aligned}
& \bigg| \frac{1}{N_m}\sum_{i=1}^{N_m}(\widehat\bgamma-\bgamma_0)\trans  \partial_{\bgamma}\fhat_m(\bX_i\supm;\widehat\bgamma\supm) \bigg\{ (\bgamma_0-\widehat\bgamma\supm)\trans   
\int_{0}^{1} \Big(\partial_{\bgamma}\fhat_m(\bX_i\supm; \bgamma_0 + \tau (\bgamma_0-\widehat\bgamma\supm)) - \partial_{\bgamma}\fhat_m(\bX_i\supm;\widehat\bgamma\supm) \Big) \mathrm{d} \tau \bigg\} \bigg| \\
 \lesssim &  \|\widehat\bgamma-\bgamma_0\|_2 
  \frac{1}{N} \sum_{m=1}^{M}N_m\|\widehat\bgamma\supm-\bgamma_0\|_2 h_m
+ \|\widehat\bgamma-\bgamma_0\|_2 \frac{1}{N} \sum_{m=1}^{M}N_m\|\widehat\bgamma\supm-\bgamma_0\|_2^2 \\
\lesssim &  \|\widehat\bgamma-\bgamma_0\|_2 
  \frac{1}{N} \sum_{m=1}^{M}N_m \sqrt{\frac{s\log p}{N_m}} h_m
+ \|\widehat\bgamma-\bgamma_0\|_2 \frac{1}{N} \sum_{m=1}^{M}N_m \frac{s\log p}{N_m} \\
\lesssim &  \|\widehat\bgamma-\bgamma_0\|_2\sqrt{\frac{s\log p}{N}},
\end{aligned}   
\end{equation} 
\end{footnotesize}
where the first bound follows from Lemmas \ref{lem:A1} and \ref{lem:A2}, the second bound follows from Lemma \ref{lem:A3}, and the third bound follows from Lemma \ref{lem:A4}. 
{Let
\[
\mathcal{R}_N=\frac{1}{N}\sum_{m=1}^M N_m\Big\{-\widehat\Xi\supm+\bOmegahat_{x,x}\supm(\widehat\bgamma\supm-\bgamma_0)\Big\}.
\]
The same termwise bounds, before taking the inner product with $\widehat\bgamma-\bgamma_0$, imply that the leading stochastic part of $\mathcal{R}_N$ is $O_p\{\sqrt{\log p/N}\}$ in the $\ell_\infty$ norm, while the remaining higher-order terms are $o_p(\lambda)$ under the bandwidth and sparsity conditions of Theorem \ref{thm:local}. Taking the constant in $\lambda\asymp\sqrt{\log p/N}$ sufficiently large and applying the basic inequality \eqref{eqn:thm2-1} yield
\[
\|(\widehat\bgamma-\bgamma_0)_{S^c}\|_1\leq 3\|(\widehat\bgamma-\bgamma_0)_S\|_1.
\]
Because $\widehat\gamma_1=\gamma_{01}=1$, the aggregate error also has first coordinate zero. Consequently,
\[
\|\widehat\bgamma-\bgamma_0\|_1\leq 4\sqrt{s}\|\widehat\bgamma-\bgamma_0\|_2.
\]
}
Thus we have
\begin{small}
\begin{equation}\label{eqn:A16}
 \bigg| \frac{1}{N} \sum_{m=1}^{M} N_{m}(\widehat\bgamma-\bgamma_0)\trans \Big\{-\widehat\Xi\supm + \bOmegahat_{x,x}\supm (\widehat\bgamma\supm-\bgamma_0) \Big\} \bigg| 
\lesssim   \sqrt{s}
\|\widehat\bgamma-\bgamma_0\|_2\sqrt{\frac{\log p}{N}},
\end{equation}
\end{small}
which follows from similar argument in Lemma \ref{cor:1}.

{Since $(\widehat\bgamma-\bgamma_0)_1=0$, the normalized aggregate error belongs to $\mathbb{V}_0$ in Assumption \ref{ass:4}. Therefore, by an argument identical to that in Lemma \ref{lem:re}, we have}
\begin{equation}\label{eqn:A17}
\small
\frac{1}{N} \bigg\{\sum_{m=1}^{M} N_{m}(\widehat\bgamma-\bgamma_0)\trans \bOmegahat_{x,x}\supm (\widehat\bgamma-\bgamma_0) \bigg\} \geq \kappa \|\widehat\bgamma-\bgamma_0\|_2^2.
\end{equation}
Combining \eqref{eqn:thm2-1}, \eqref{eqn:A16} and \eqref{eqn:A17}, we have
\begin{equation}\nonumber
\small
\|\widehat\bgamma-\bgamma_0\|_2^2 
\lesssim  \sqrt{s}
\|\widehat\bgamma-\bgamma_0\|_2 \sqrt{\frac{\log p}{N}} + \lambda \sqrt{s} \|\widehat\bgamma-\bgamma_0\|_2
\end{equation} since $\|\bgamma_{0}\|_1-\|\widehat\bgamma\|_1 \lesssim \sqrt{s} \|\widehat\bgamma-\bgamma_0\|_2$ by similar argument in Lemma \ref{cor:2}.
So we finally have
$$\small \|\widehat\bgamma-\bgamma_0\|_2 
\lesssim 
\sqrt{\frac{s\log p}{N}}  $$
by taking $\lambda \asymp  \sqrt{{\log p}/{N}} $.


Then plug $\bgammahat$ in a low-dimensional logistic regression on labeled
data to estimate intercept $\alpha$ and scalar $\beta_1$. Denoting $\small \btheta=(\alpha,\beta_1)$ and $\small \ell_i(\btheta,\bgamma) = \ell\{Y_i\supone,g(\alpha+\beta_1{\bgamma}\trans \bX_i\supone)\}$,  we have
\begin{scriptsize}
\begin{equation}\label{eqn:3-2}
\begin{aligned}
&\frac{1}{n}
\sum_{i=1}^n (\bthetahat-\btheta_0)\trans
\nabla^2_{\theta}\ell_i(\btheta_0,\bgammahat)
(\bthetahat-\btheta_0) \\
\leq &-\frac{2}{n}
\sum_{i=1}^n(\bthetahat-\btheta_0)\trans
\nabla_{\btheta}\ell_i(\btheta_0,\bgammahat) \\
&{\blue -\frac{2}{n}\sum_{i=1}^n\int_0^1(1-t)
(\bthetahat-\btheta_0)\trans
\Big\{\nabla^2_{\theta}\ell_i\big(\btheta_0+t(\bthetahat-\btheta_0),\bgammahat\big)
-\nabla^2_{\theta}\ell_i(\btheta_0,\bgammahat)\Big\}
(\bthetahat-\btheta_0)\,\mathrm{d}t}.
\end{aligned}
\end{equation}
\end{scriptsize}
from basic inequality that $\frac{1}{n}  \sum_{i=1}^{n}\ell_i(\bthetahat, \widehat{\bgamma}) \leq \frac{1}{n}  \sum_{i=1}^{n}\ell_i(\btheta_0, \widehat{\bgamma})$ and the decomposition that
	\begin{footnotesize}
\begin{align*}
\ell_i(\bthetahat,\bgammahat)
&=\ell_i(\btheta_0,\bgammahat)
 +(\bthetahat-\btheta_0)\trans\nabla_{\btheta}\ell_i(\btheta_0,\bgammahat) \\
&\quad {\blue +\int_0^1(1-t)(\bthetahat-\btheta_0)\trans
\nabla^2_{\btheta}\ell_i\big(\btheta_0+t(\bthetahat-\btheta_0),\bgammahat\big)
(\bthetahat-\btheta_0)\,\mathrm{d}t} \\
&=\ell_i(\btheta_0,\bgammahat)
 +(\bthetahat-\btheta_0)\trans\nabla_{\btheta}\ell_i(\btheta_0,\bgammahat)
 +\frac12(\bthetahat-\btheta_0)\trans
\nabla^2_{\btheta}\ell_i(\btheta_0,\bgammahat)
(\bthetahat-\btheta_0) \\
&\quad {\blue +\int_0^1(1-t)(\bthetahat-\btheta_0)\trans
\Big\{\nabla^2_{\btheta}\ell_i\big(\btheta_0+t(\bthetahat-\btheta_0),\bgammahat\big)
-\nabla^2_{\btheta}\ell_i(\btheta_0,\bgammahat)\Big\}
(\bthetahat-\btheta_0)\,\mathrm{d}t}.
\end{align*}
\end{footnotesize}
where 
\begin{scriptsize}
\begin{equation}\label{eqn:ell}
\begin{aligned}
& \ell_i(\btheta, \bgamma) = \log(1 + \exp(\alpha+\beta_1\bgamma\trans \bX_i\supone)) - Y_i\supone (\alpha+\beta_1\bgamma\trans \bX_i\supone), \\
&   \nabla_{\btheta} \ell_i(\btheta,\bgamma) = \Big(-Y_i + \frac{\exp(\alpha+\beta_1\bgamma\trans \bX_i\supone)}{1+\exp(\alpha+\beta_1\bgamma\trans \bX_i\supone)} \Big)
\binom{1 }{\bgamma\trans\bX_i\supone}, \\
& \nabla^2_{\btheta} \ell_i(\btheta,\bgamma) = \frac{\exp(\alpha+\beta_1\bgamma\trans \bX_i\supone)}{\big(1+\exp(\alpha+\beta_1\bgamma\trans \bX_i\supone)\big)^2}
\binom{1 }{\bgamma\trans\bX_i\supone}
(1, \bgamma\trans\bX_i\supone), \\
&\nabla^2_{\btheta \bgamma} \ell_i(\btheta,\bgamma) = \frac{\exp(\alpha+\beta_1\bgamma\trans \bX_i\supone)}{\big(1+\exp(\alpha+\beta_1\bgamma\trans \bX_i\supone)\big)^2} \binom{1 }{\bgamma\trans\bX_i\supone}
\beta_1 {\bX_i\supone}\trans 
+ \Big(-Y_i + \frac{\exp(\alpha+\beta_1\bgamma\trans \bX_i\supone)}{1+\exp(\alpha+\beta_1\bgamma\trans \bX_i\supone)} \Big)
\binom{\boldsymbol{0}}{{\bX_i\supone}\trans}.
\end{aligned}    
\end{equation}
\end{scriptsize}

First, by Lemma J.3 in \cite{ning2017general}, we have RE condition holds for some constant $\kappa$ that 
\begin{small}
\begin{equation}\label{eqn:A19}
 \frac{1}{n} 
	\sum_{i=1}^n  (\bthetahat-\btheta_0)\trans \nabla^2_{\theta}\ell_i(\btheta_0,\bgammahat)(\bthetahat-\btheta_0) \geq \kappa   \|\bthetahat-\btheta_0\|_2^2.
\end{equation}  
\end{small}
Second, we bound $\frac{1}{n}
\sum_{i=1}^n (\bthetahat-\btheta_0)\trans \nabla_{\btheta}\ell_i(\btheta_0,\bgammahat)$. We have
\begin{small}
\begin{equation}\label{eqn:A20-1}
\Big| \frac{1}{n}
\sum_{i=1}^n (\bthetahat-\btheta_0)\trans \nabla_{\btheta}\ell_i(\btheta_0,\bgamma_0) \Big| 
\leq \|\bthetahat-\btheta_0\|_1  \Big\|\frac{1}{n}
\sum_{i=1}^n \nabla_{\btheta}\ell_i(\btheta_0,\bgamma_0)\Big\|_{\infty} 
\lesssim \sqrt{\frac{1}{n}}\|\bthetahat-\btheta_0\|_1 
\end{equation}
\end{small}
holds by Lemma E.2 in \cite{ning2017general},
and
\begin{small}
\begin{equation}\label{eqn:A20-2}
\begin{aligned}
& \Big| (\bthetahat-\btheta_0)\trans \frac{1}{n} \sum_{i=1}^n \big(\nabla_{\btheta}\ell_i(\btheta_0,\bgammahat) - \nabla_{\btheta}\ell_i(\btheta_0,\bgamma_0)\big)\Big| \\
= & \Big| (\bthetahat-\btheta_0)\trans \frac{1}{n}
\sum_{i=1}^n \int_0^1  \nabla^2_{\btheta \bgamma}\ell_i(\btheta_0,\bgamma_0 + t(\bgammahat-\bgamma_0) )\d t (\bgammahat - \bgamma_0) \Big| \\
\lesssim &
\|\bthetahat-\btheta_0\|_2
\|\bgammahat - \bgamma_0\|_2  
\end{aligned}    
\end{equation}
\end{small}
follows from the expression of $\nabla^2_{\btheta \bgamma}\ell_i(\cdot)$ in \eqref{eqn:ell}.
So
$$\small \Big| \frac{1}{n}
\sum_{i=1}^n (\bthetahat-\btheta_0)\trans \nabla_{\btheta}\ell_i(\btheta_0,\bgammahat) \Big| \lesssim 
\sqrt{\frac{1}{n}}\|\bthetahat-\btheta_0\|_1 
+ 
\|\bthetahat-\btheta_0\|_2
\|\bgammahat - \bgamma_0\|_2
$$

Third, we have
\begin{scriptsize}
\begin{equation}\label{eqn:A21}
\begin{aligned}
& \bigg| \frac{1}{n} \sum_{i=1}^n  \int_0^1 {\blue (1-t)}(\bthetahat-\btheta_0)\trans \Big\{\nabla^2_{\btheta}\ell_i\big(\btheta_0+t(\bthetahat-\btheta_0),\bgammahat \big) - \nabla^2_{\theta} \ell_i(\btheta_0,\bgammahat) \Big\}(\bthetahat-\btheta_0) \mathrm{d} t \bigg| 
\\
= & \bigg| \frac{1}{n} \sum_{i=1}^n  \int_0^1 {\blue (1-t)}(\bthetahat-\btheta_0)\trans
\bigg\{ 
\frac{\exp(a_t+b_t\bgammahat\trans \bX_i\supone)}{\big(1+\exp(a_t+b_t\bgammahat\trans \bX_i\supone)\big)^2}
-
\frac{\exp(a+b\bgammahat\trans \bX_i\supone)}{\big(1+\exp(a_0+b_0\bgammahat\trans \bX_i\supone)\big)^2}
\bigg\} \binom{1 }{\bgammahat\trans\bX_i\supone}
(1, \bgammahat\trans\bX_i\supone)
(\bthetahat-\btheta_0) \mathrm{d} t \bigg|
 \\
\lesssim &
\|\bthetahat-\btheta_0\|^3_2,
\end{aligned}
\end{equation}
\end{scriptsize}
where $(a_t,b_t)\trans:=\btheta_0+t(\bthetahat-\btheta_0)$. Combining \eqref{eqn:3-2}, \eqref{eqn:A19}, \eqref{eqn:A20-1}, \eqref{eqn:A20-2} and \eqref{eqn:A21}, we have
\begin{footnotesize}
\begin{equation}\nonumber
\begin{aligned}
\|\bthetahat-\btheta_0\|_2^2 
\lesssim & \|\bthetahat-\btheta_0\|_1 \sqrt{\frac{1}{n}} + \|\bthetahat-\btheta_0\|_2 \|\bgammahat - \bgamma_0\|_2 + \|\bthetahat-\btheta_0\|^3_2 
\lesssim  \|\bthetahat-\btheta_0\|_2 \Big(\sqrt{\frac{1}{n}} +  \sqrt{\frac{s\log p}{N}} \Big) + \|\bthetahat-\btheta_0\|^3_2,
\end{aligned}
\end{equation} 
\end{footnotesize}
which immediately gives
\begin{small}
\begin{equation}\label{eqn:a}
\|\bthetahat-\btheta_0\|_2 \lesssim  \sqrt{\frac{1}{n}} +  \sqrt{\frac{s\log p}{N}}.
\end{equation}
\end{small}

Finally, we aggregate the summary statistics from local sites and the labeled sample loss based on $(\widehat \alpha,\widehat \beta_1)$ to derive the final estimator for $\bgamma_0$. We have that the loss function $$\small \mathcal{L}(\bgamma)=\frac{1}{N+n}\Big\{-2\bgamma\trans\sum_{m=1}^{M}N_m\bOmegahat\supm_{x,s}+\bgamma\trans\sum_{m=1}^{M}N_m\bOmegahat\supm_{x,x}\bgamma+ \sum_{i=1}^{n}\ell_i(\alphahat, \betahat_1,\bgamma)\Big\}+\lambda^\dagger \|\bgamma\|_1$$
and the basic inequality $\mathcal{L}(\widehat\bgamma^{\dagger}) \leq \mathcal{L}(\bgamma_0)$  gives
	\begin{footnotesize}
\begin{equation}\nonumber
\begin{aligned}
&\frac{1}{N+n}\bigg\{
\sum_{m=1}^{M}N_m(\widehat\bgamma^{\dagger}-\bgamma_0)\trans
\bOmegahat_{x,x}\supm(\widehat\bgamma^{\dagger}-\bgamma_0) \\
&\qquad\qquad
+{\blue \frac12\sum_{i=1}^n(\widehat\bgamma^{\dagger}-\bgamma_0)\trans
\nabla^2_{\bgamma}\ell_i(\bthetahat,\bgamma_0)
(\widehat\bgamma^{\dagger}-\bgamma_0)}\bigg\}
+\lambda^\dagger\|\widehat\bgamma^{\dagger}\|_1 \\
\leq &\frac{1}{N+n}\bigg\{
2\sum_{m=1}^{M}N_m(\widehat\bgamma^{\dagger}-\bgamma_0)\trans
\Big\{-\widehat\Xi\supm+\bOmegahat_{x,x}\supm(\widehat\bgamma\supm-\bgamma_0)\Big\} \\
&\qquad
-\sum_{i=1}^n(\widehat\bgamma^{\dagger}-\bgamma_0)\trans
\nabla_{\bgamma}\ell_i(\bthetahat,\bgamma_0) \\
&\qquad {\blue
-\sum_{i=1}^n\int_0^1(1-t)(\widehat\bgamma^{\dagger}-\bgamma_0)\trans
\Big\{\nabla^2_{\bgamma}\ell_i\big(\bthetahat,\bgamma_0+t(\widehat\bgamma^{\dagger}-\bgamma_0)\big)
-\nabla^2_{\bgamma}\ell_i(\bthetahat,\bgamma_0)\Big\}
(\widehat\bgamma^{\dagger}-\bgamma_0)\,\mathrm{d}t}\bigg\}
+\lambda^\dagger\|\bgamma_0\|_1.
\end{aligned}
\end{equation}
\end{footnotesize}
{The same score and penalty argument as above gives
\[
\|(\widehat\bgamma^{\dagger}-\bgamma_0)_{S^c}\|_1\leq 3\|(\widehat\bgamma^{\dagger}-\bgamma_0)_S\|_1.
\]
Moreover, $(\widehat\bgamma^{\dagger}-\bgamma_0)_1=0$, so the normalized error belongs to $\mathbb{V}_0$ in Assumption \ref{ass:4}, and
$\|\widehat\bgamma^{\dagger}-\bgamma_0\|_1\leq4\sqrt{s}\|\widehat\bgamma^{\dagger}-\bgamma_0\|_2$.}
which gives 
\begin{small}
\begin{equation}\nonumber
\begin{aligned}
&  \|\widehat\bgamma^{\dagger}-\bgamma_0\|_2^2  
\lesssim  \frac{\sqrt{s}}{N+n} \|\widehat\bgamma^{\dagger}-\bgamma_0\|_2 \Big\{ \sqrt{N \log p} + \sqrt{n \log p} \Big\} + \lambda^\dagger \sqrt{s} \|\widehat{\bgamma}^{\dagger}-\bgamma_{0}\|_2 
\end{aligned}
\end{equation} 
\end{small}
by an argument similar to the one used above.
So we have
\begin{small}
\begin{equation}\label{eqn:r}
\|\widehat\bgamma^{\dagger}-\bgamma_0\|_2 
\lesssim \sqrt{s} \frac{\sqrt{N \log p} + \sqrt{n \log p} }{N+n} + \lambda^\dagger\sqrt{s} 
\lesssim \sqrt{\frac{s\log p}{N+n}} 
\lesssim \sqrt{\frac{s\log p}{N}} 
\end{equation}
\end{small}
by taking $\lambda^{\dagger} \asymp \sqrt{\frac{\log p}{N+n}} \asymp \sqrt{\frac{\log p}{N}} $.
Combining \eqref{eqn:a} and \eqref{eqn:r}, finally we have 
$$\small \|\bbetahat\subsash-\bbeta_0\|_2 \lesssim 
\sqrt{\frac{1}{n}} + \sqrt{\frac{s\log p}{N}}.$$
{\blue Moreover,
\[
\begin{aligned}
\|\bbetahat\subsash-\bbeta_0\|_1
&\leq |\widehat\beta_1-\beta_{01}|\,\|\bgamma_0\|_1
 +|\widehat\beta_1|\,\|\widehat\bgamma^\dagger-\bgamma_0\|_1\\
&\lesssim \sqrt{\frac{s}{n}}+s\sqrt{\frac{\log p}{N}},
\end{aligned}
\]
where $\|\bgamma_0\|_1\leq\sqrt{s}\|\bgamma_0\|_2$ and $\|\bgamma_0\|_2$ is bounded by Assumption \ref{ass:1}.}
Both bounds hold with probability at least $1-O(1/p)$.

\end{proof}

\subsection{Proof of Theorem \ref{thm:equivalence}}\label{sec:pf:equivalence}

We first give the complete IPD construction. Throughout this proof, the theoretical weighting function is one, as specified in Section~\ref{sec:theory}.

\paragraph{Step I.}
The first step is identical to SASH: fit the supervised penalized logistic regression on the labeled sample and set
\(
\widehat\bgamma_{{\rm IPD},[0]}
=
\widetilde\bbeta\subsup/\widetilde\beta_{{\scriptscriptstyle\sf sup},1}.
\)

\paragraph{Step II.}
For \(t=1,\ldots,T\), define
\begin{equation}\label{eqn:IPD-II}
\begin{aligned}
\widehat\bgamma_{{\rm IPD},[t]}
=
\argmin{\bgamma\in\mathbb R^p:\,\gamma_1=1}
\bigg[&
\sum_{m=1}^M\frac{N_m}{N}\Pbbhat\supm
\Big\{S-\fhat_m(\bX;\widehat\bgamma_{{\rm IPD},[t-1]})
-(\bgamma-\widehat\bgamma_{{\rm IPD},[t-1]})\trans
\partial_{\bgamma}\fhat_m(\bX;\widehat\bgamma_{{\rm IPD},[t-1]})\Big\}^2
\\&
+\lambda_{{\rm IPD},[t]}\|\bgamma\|_1
\bigg].
\end{aligned}
\end{equation}
Let \(\widehat\bgamma_{\subipd}=\widehat\bgamma_{{\rm IPD},[T]}\) and write
\(\lambda_{{\scriptscriptstyle\sf IPD},0}=\lambda_{{\rm IPD},[T]}\).

\paragraph{Step III.}
First obtain
\begin{equation}\label{eqn:IPD-III-1}
(\widehat\alpha_{\subipd},\widehat\beta_{\subipd,1})
=
\argmin{\alpha,\beta_1}
\Pbbhat_n\supone
\ell\{Y,g(\alpha+\beta_1\widehat\bgamma_{\subipd}\trans\bX)\}.
\end{equation}
Then define
\begin{equation}\label{eqn:IPD-III-2}
\begin{aligned}
\widehat\bgamma_{\subipd}^{\dagger}
=
\argmin{\bgamma\in\mathbb R^p:\,\gamma_1=1}
\bigg\{\frac{1}{N+n}\bigg[&
\sum_{m=1}^M N_m\Pbbhat\supm
\Big\{S-\fhat_m(\bX;\widehat\bgamma_{\subipd})
-(\bgamma-\widehat\bgamma_{\subipd})\trans
\partial_{\bgamma}\fhat_m(\bX;\widehat\bgamma_{\subipd})\Big\}^2
\\&
+n\Pbbhat_n\supone
\ell\{Y,g(\widehat\alpha_{\subipd}
+\widehat\beta_{\subipd,1}\bgamma\trans\bX)\}
\bigg]
+\lambda_{\subipd}\|\bgamma\|_1\bigg\}.
\end{aligned}
\end{equation}
The IPD estimator is
\(
\widehat\bbeta_{\subipd}
=
\widehat\beta_{\subipd,1}\widehat\bgamma_{\subipd}^{\dagger}.
\)

\begin{proof}
Put \(\lambda_N=(\log p/N)^{1/2}\),
\(r_{2,N}=\sqrt{s}\lambda_N\), and \(r_{1,N}=s\lambda_N\).
All statements below are made on the intersection of the high-probability events in Lemmas~\ref{lem:A1}--\ref{lem:re}; its probability tends to one.

\paragraph{Rates of the pooled IPD estimator.}
The proof of Theorem~\ref{thm:local} applies to the pooled update~\eqref{eqn:IPD-II}, with the weighted empirical measure
\(
\sum_m(N_m/N)\Pbbhat\supm
\)
and total sample size \(N\). The same initialization, score, cone, and derivative-weighted curvature arguments therefore give, for the tuning sequence specified in Theorem~\ref{thm:local},
\begin{equation}\label{eqn:ipd-rate-temp}
\|\widehat\bgamma_{\subipd}-\bgamma_0\|_2
=O_p(r_{2,N}),
\qquad
\|\widehat\bgamma_{\subipd}-\bgamma_0\|_1
=O_p(r_{1,N}).
\end{equation}
Applying the low-dimensional refitting argument in the proof of Theorem~\ref{thm:aggregate} to~\eqref{eqn:IPD-III-1} yields
\begin{equation}\label{eqn:ipd-rate-theta}
\|\widehat\btheta_{\subipd}-\btheta_0\|_2
=O_p(n^{-1/2}+r_{2,N}),
\qquad
\widehat\btheta_{\subipd}
=(\widehat\alpha_{\subipd},\widehat\beta_{\subipd,1})\trans.
\end{equation}
Finally, the proof for \(\widehat\bgamma^\dagger\) in Theorem~\ref{thm:aggregate}, with the summary quadratic loss replaced by its pooled individual-level counterpart, gives
\begin{equation}\label{eqn:ipd-rate-final}
\|\widehat\bgamma_{\subipd}^{\dagger}-\bgamma_0\|_2
=O_p(r_{2,N}),
\qquad
\|\widehat\bgamma_{\subipd}^{\dagger}-\bgamma_0\|_1
=O_p(r_{1,N}).
\end{equation}
Since \(\|\bgamma_0\|_1\le\sqrt{s}\|\bgamma_0\|_2\),
\eqref{eqn:ipd-rate-theta}--\eqref{eqn:ipd-rate-final} imply
\begin{equation}\label{eqn:ipd-beta-rates}
\|\widehat\bbeta_{\subipd}-\bbeta_0\|_2
=O_p(n^{-1/2}+r_{2,N}),
\qquad
\|\widehat\bbeta_{\subipd}-\bbeta_0\|_1
=O_p(\sqrt{s/n}+r_{1,N}).
\end{equation}
This proves the first assertion of the theorem.

\paragraph{A summary-to-IPD approximation bound.}
We record the comparison bound used below. Let
\(
\Delta_0=\widehat\bgamma-\widehat\bgamma_{\subipd}
\)
and
\(
\Delta_\dagger=\widehat\bgamma^\dagger-
\widehat\bgamma_{\subipd}^\dagger.
\)
Both have first coordinate zero. Taylor expansion of \(\fhat_m\), with the exact integral remainder
\begin{equation}\label{eqn:exact-taylor-ipd}
\fhat_m(\bX;u)-\fhat_m(\bX;v)
=
\partial_{\bgamma}\fhat_m(\bX;v)\trans(u-v)
+
\int_0^1(1-t)(u-v)\trans
\partial_{\bgamma}^2\fhat_m\{\bX;v+t(u-v)\}
(u-v)\,dt,
\end{equation}
combined with Lemmas~\ref{lem:A2}--\ref{lem:A4},
\ref{lem:fhat}, and~\ref{lem:partial-fhat}, gives
\begin{align}
|\mathcal R_{0,N}|
&=o_p(s\lambda_N^2),
\label{eqn:R0}\\
|\mathcal R_{\dagger,N}|
&=o_p(s\lambda_N^2),
\label{eqn:Rdagger}
\end{align}
where
\begin{align*}
\mathcal R_{0,N}
={}&
\frac1N\sum_{m=1}^MN_m\Delta_0\trans
\Big[
-\widehat\Xi\supm
+\widehat\Omega_{x,x}\supm
(\widehat\bgamma\supm-\widehat\bgamma_{\subipd})
+\widehat\Xi^{\subipd}
-\widehat\Omega_{x,x}^{\subipd}
(\widehat\bgamma_{{\rm IPD},[T-1]}-\widehat\bgamma_{\subipd})
\Big],
\\
\mathcal R_{\dagger,N}
={}&
\frac1{N+n}\sum_{m=1}^MN_m\Delta_\dagger\trans
\Big[
-\widehat\Xi\supm
+\widehat\Omega_{x,x}\supm
(\widehat\bgamma\supm-\widehat\bgamma_{\subipd}^{\dagger})
+\widehat\Xi_{\subipd}^{\dagger}
-\widehat H_{\subipd}^{\dagger}
(\widehat\bgamma_{\subipd}-\widehat\bgamma_{\subipd}^{\dagger})
\Big]
\\&
+
\frac1{N+n}\sum_{i=1}^n\Delta_\dagger\trans
\Big[
\nabla_{\bgamma}\ell_i(\widehat\btheta,\widehat\bgamma_{\subipd}^{\dagger})
-
\nabla_{\bgamma}\ell_i(\widehat\btheta_{\subipd},\widehat\bgamma_{\subipd}^{\dagger})
\Big]
\\&
+
\frac1{N+n}\sum_{i=1}^n
\int_0^1(1-t)\Delta_\dagger\trans
\Big[
\nabla_{\bgamma}^2\ell_i\{\widehat\btheta,
\widehat\bgamma_{\subipd}^{\dagger}+t\Delta_\dagger\}
-
\nabla_{\bgamma}^2\ell_i(\widehat\btheta,
\widehat\bgamma_{\subipd}^{\dagger})
\Big]
\Delta_\dagger\,dt.
\end{align*}
Here \(\widehat\Xi^{\subipd}\) and
\(\widehat\Omega_{x,x}^{\subipd}\) are the pooled score and quadratic matrix in the last update of~\eqref{eqn:IPD-II}, while
\(\widehat\Xi_{\subipd}^{\dagger}\) and
\(\widehat H_{\subipd}^{\dagger}\) are their analogues in~\eqref{eqn:IPD-III-2}.
For completeness, each term generated by~\eqref{eqn:exact-taylor-ipd} is bounded by a product of one comparison direction and at least one of
\[
h_m,
\quad
\|\widehat\bgamma\supm-\widehat\bgamma_{{\rm IPD},[T-1]}\|_2,
\quad
\|\widehat\bgamma\supm-\bgamma_0\|_2,
\quad
\|\widehat\bgamma_{{\rm IPD},[T-1]}-\bgamma_0\|_2,
\]
together with either a score fluctuation of order
\(\sqrt{\log p/N_m}\) or another direction error. Using Theorems~\ref{thm:local}--\ref{thm:aggregate},
\eqref{eqn:ipd-rate-temp}--\eqref{eqn:ipd-rate-final}, and
\(s=O\{N_m^{1/6}/\log(p\vee N_m)\}\), their weighted sum is
\(o_p(s\lambda_N^2)\). The labeled-score difference in
\eqref{eqn:Rdagger} is bounded by
\[
\frac n{N+n}\|\Delta_\dagger\|_2
\|\widehat\btheta-\widehat\btheta_{\subipd}\|_2,
\]
and the last integral is
\(o_p(\|\Delta_\dagger\|_2^2)\). Together with the preliminary rate bounds, these terms are all \(o_p(s\lambda_N^2)\), proving
\eqref{eqn:R0}--\eqref{eqn:Rdagger}.

\paragraph{Comparison of the temporary direction estimators.}
Let \(\rho(u)=\|u\|_1\). The KKT equation for the last pooled update is
\begin{equation}\label{eqn:ipd-kkt-temp}
2\widehat\Xi^{\subipd}
+2\widehat\Omega_{x,x}^{\subipd}
(\widehat\bgamma_{\subipd}-
\widehat\bgamma_{{\rm IPD},[T-1]})
+\lambda_{{\scriptscriptstyle\sf IPD},0}\rho'(\widehat\bgamma_{\subipd})=0.
\end{equation}
Choose
\(
\lambda=
\lambda_{{\scriptscriptstyle\sf IPD},0}+\lambda_{\delta,0}
\)
with
\(
\lambda_{\delta,0}=o(\lambda_N)
\)
and, along a deterministic sequence if necessary,
\(
|\mathcal R_{0,N}|=o_p(\lambda_{\delta,0}s\lambda_N).
\)
The basic inequality for the SASH summary objective, evaluated at
\(\widehat\bgamma_{\subipd}\), and~\eqref{eqn:ipd-kkt-temp} give
\begin{equation}\label{eqn:temp-basic-comp}
\Delta_0\trans
\left\{\frac1N\sum_{m=1}^MN_m
\widehat\Omega_{x,x}\supm\right\}\Delta_0
\le
2\mathcal R_{0,N}
+\lambda_{\delta,0}
\{\|\widehat\bgamma_{\subipd}\|_1-
\|\widehat\bgamma\|_1\}.
\end{equation}
By the rates of the two estimators,
\(\|\Delta_0\|_1=O_p(r_{1,N})\). Lemma~\ref{lem:re}, now applied on the full normalized tangent space in Assumption~\ref{ass:4}, and the Hessian approximation bounds used in its proof imply
\[
\Delta_0\trans
\left\{\frac1N\sum_mN_m\widehat\Omega_{x,x}\supm\right\}\Delta_0
\ge c\|\Delta_0\|_2^2-o_p(s\lambda_N^2).
\]
Combining this inequality with~\eqref{eqn:temp-basic-comp} yields
\begin{equation}\label{eqn:temp-l2-comp}
\|\widehat\bgamma-\widehat\bgamma_{\subipd}\|_2
=o_p(r_{2,N}).
\end{equation}
For later use, let \(S=\operatorname{supp}(\bgamma_0)\). Equation~\eqref{eqn:temp-l2-comp} gives
\(\|\Delta_{0,S}\|_1=o_p(r_{1,N})\). From
\eqref{eqn:temp-basic-comp}, decomposability of the \(\ell_1\) norm, and the choice of
\(\lambda_{\delta,0}\),
\[
\|\widehat\bgamma_{S^c}\|_1
\le
\|\widehat\bgamma_{\subipd,S^c}\|_1
+
\|\Delta_{0,S}\|_1
+
2|\mathcal R_{0,N}|/\lambda_{\delta,0}.
\]
Consequently,
\begin{equation}\label{eqn:temp-l1-comp}
\|\widehat\bgamma-\bgamma_0\|_1
\le
\|\widehat\bgamma_{\subipd}-\bgamma_0\|_1
+o_p(r_{1,N}).
\end{equation}

\paragraph{Comparison of the low-dimensional refits.}
Let
\(
U_n(\btheta,\bgamma)=n^{-1}\sum_{i=1}^n
\nabla_{\btheta}\ell_i(\btheta,\bgamma).
\)
The score equations give
\(U_n(\widehat\btheta,\widehat\bgamma)=0\) and
\(U_n(\widehat\btheta_{\subipd},\widehat\bgamma_{\subipd})=0\).
A mean-value expansion therefore yields
\[
0=A_n(\widehat\btheta-\widehat\btheta_{\subipd})
+B_n(\widehat\bgamma-\widehat\bgamma_{\subipd}),
\]
where \(A_n\) is an average Hessian in \(\btheta\) and \(B_n\) is an average cross derivative. Assumption~\ref{ass:1} and the consistency rates imply
\(\lambda_{\min}(A_n)\ge c>0\) and
\(\|B_n\|_2=O_p(1)\). Consequently,
\begin{equation}\label{eqn:theta-comp}
\|\widehat\btheta-\widehat\btheta_{\subipd}\|_2
\lesssim
\|\widehat\bgamma-\widehat\bgamma_{\subipd}\|_2
=o_p(r_{2,N}).
\end{equation}

\paragraph{Comparison of the final direction estimators.}
The KKT equation for~\eqref{eqn:IPD-III-2} is
\begin{equation}\label{eqn:ipd-kkt-final}
\frac1{N+n}\left[
2N\widehat\Xi_{\subipd}^{\dagger}
+2N\widehat H_{\subipd}^{\dagger}
(\widehat\bgamma_{\subipd}^{\dagger}-
\widehat\bgamma_{\subipd})
+
\sum_{i=1}^n
\nabla_{\bgamma}\ell_i(
\widehat\btheta_{\subipd},
\widehat\bgamma_{\subipd}^{\dagger})
\right]
+
\lambda_{\subipd}\rho'(\widehat\bgamma_{\subipd}^{\dagger})=0.
\end{equation}
Choose
\(
\lambda^\dagger=\lambda_{\subipd}+\lambda_\delta
\)
with \(\lambda_\delta=o(\lambda_N)\) and
\[
|\mathcal R_{\dagger,N}|
=o_p\left(\lambda_\delta r_{1,N}\right).
\]
The basic inequality for~\eqref{eqn:step3-3}, evaluated at
\(\widehat\bgamma_{\subipd}^{\dagger}\), together with
\eqref{eqn:ipd-kkt-final}, gives
\begin{align}
&\Delta_\dagger\trans
\left[
\frac1{N+n}
\left\{
\sum_{m=1}^MN_m\widehat\Omega_{x,x}\supm
+
\sum_{i=1}^n
\nabla_{\bgamma}^2\ell_i(
\widehat\btheta,
\widehat\bgamma_{\subipd}^{\dagger})
\right\}
\right]\Delta_\dagger
\nonumber\\
&\qquad\le
2\mathcal R_{\dagger,N}
+
\lambda_\delta
\{\|\widehat\bgamma_{\subipd}^{\dagger}\|_1-
\|\widehat\bgamma^\dagger\|_1\}.
\label{eqn:final-basic-comp}
\end{align}
The surrogate component of the matrix on the left is uniformly positive on
\(\mathbb V_0\) by Assumption~\ref{ass:4} and Lemma~\ref{lem:re}; the labeled Hessian is nonnegative. Hence
\eqref{eqn:final-basic-comp} and the rate bounds yield
\begin{equation}\label{eqn:final-direction-comp}
\|\widehat\bgamma^\dagger-
\widehat\bgamma_{\subipd}^{\dagger}\|_2
=o_p(r_{2,N}).
\end{equation}
Moreover, writing \(\Delta_\dagger=\widehat\bgamma^\dagger-
\widehat\bgamma_{\subipd}^{\dagger}\), we have
\(\|\Delta_{\dagger,S}\|_1=o_p(r_{1,N})\). Decomposability in
\eqref{eqn:final-basic-comp} further gives
\begin{equation}\label{eqn:final-l1-one-sided}
\|\widehat\bgamma^\dagger_{S^c}\|_1
\le
\|\widehat\bgamma_{\subipd,S^c}^{\dagger}\|_1
+o_p(r_{1,N}),
\end{equation}
and therefore
\[
\|\widehat\bgamma^\dagger-\bgamma_0\|_1
\le
\|\widehat\bgamma_{\subipd}^{\dagger}-\bgamma_0\|_1
+o_p(r_{1,N}).
\]

\paragraph{Conclusion.}
Using~\eqref{eqn:theta-comp} and~\eqref{eqn:final-direction-comp},
\[
\|\widehat\bbeta\subsash-
\widehat\bbeta\subipd\|_2
\le
|\widehat\beta_1-
\widehat\beta_{\subipd,1}|
\|\widehat\bgamma^\dagger\|_2
+
|\widehat\beta_{\subipd,1}|
\|\widehat\bgamma^\dagger-
\widehat\bgamma_{\subipd}^{\dagger}\|_2
=o_p(r_{2,N}).
\]
The \(\ell_2\) comparison in the theorem follows by the triangle inequality. For the \(\ell_1\) comparison, decompose the errors over \(S\) and \(S^c\). On \(S\),
\eqref{eqn:theta-comp} and~\eqref{eqn:final-direction-comp} give
\[
\|\{\widehat\bbeta\subsash-
\widehat\bbeta\subipd\}_S\|_1
\le \sqrt{s}\|\widehat\bbeta\subsash-
\widehat\bbeta\subipd\|_2
=o_p(r_{1,N}).
\]
On \(S^c\), where \(\bbeta_{0,S^c}=0\), equation~\eqref{eqn:final-l1-one-sided} and
\(|\widehat\beta_1-\widehat\beta_{\subipd,1}|=o_p(r_{2,N})\) imply
\[
\|(\widehat\bbeta\subsash)_{S^c}\|_1
\le
\|(\widehat\bbeta\subipd)_{S^c}\|_1
+o_p(r_{1,N}).
\]
Adding the two parts yields
\[
\|\widehat\bbeta\subsash-\bbeta_0\|_1
\le
\|\widehat\bbeta\subipd-\bbeta_0\|_1
+o_p(r_{1,N}),
\]
which is the stated \(\ell_1\) no-rate-loss comparison because
\(r_{1,N}\le\sqrt{s/n}+r_{1,N}\). This completes the proof.
\end{proof}

\section{Auxiliary Proofs}\label{sec:lem}

\subsection{Proof of Proposition \ref{prop:1}}\label{sec:proof:prop}

\begin{proof}
Under the outcome model (\ref{equ:asu:1}) the conditional independence condition (\ref{eqn:asu:indp}), for any $m\in\{1,\ldots,M\}$ and set $\mathcal{A}$ included by $S\supm_i$'s domain, we have
\begin{small}
\begin{align*}
\Pbb(S\supm_i \in \mathcal{A} \mid \bX\supm_i)
&= \sum_{y=0,1} \Pbb(S\supm_i\in  \mathcal{A}, Y\supm_i=y \mid \bX\supm_i) \\
&= \sum_{y=0,1} \Pbb(S\supm_i\in  \mathcal{A} \mid Y\supm_i=y, \bX\supm_i) 
\Pbb(Y\supm_i=y \mid \bX\supm_i) \\
&= \sum_{y=0,1} \Pbb(S\supm_i\in  \mathcal{A} \mid Y\supm_i=y) \Pbb(Y\supm_i=y \mid \bX\supm_i)\\
&= \sum_{y=0,1} \Pbb(S\supm_i\in  \mathcal{A} \mid Y\supm_i=y) g_y({ \alpha_{0}\supm}+\bbeta_0\trans\bX_i\supm),
\end{align*}
\end{small}
where $g_y(a)=yg(a)+(1-y)(1-g(a))$. This directly leads to the SIM condition:
\[ \small
\Ebb(S\supm_{i}\mid \bX\supm_{i} )=\breve f_m(\bbeta_0\trans \bX\supm_i)= f_{m}(\bgamma_0\trans \bX\supm_i ),
\]
for some $\breve f_m$ and $f_{m}$.

\end{proof}

\vspace{-30pt}

{
\subsection{Proof of Lemma \ref{lem:1}}\label{sec:proof:lem:1}

\begin{proof}
{\blue Let $\widetilde\btheta\subsup=(\widetilde\alpha\subsup,\widetilde\bbeta\subsup\trans)\trans$ and $\btheta_0=(\alpha_0,\bbeta_0\trans)\trans$. The joint-information condition in Assumption \ref{ass:1}, together with the sub-Gaussian design, gives restricted strong convexity for the logistic loss after accounting for the unpenalized intercept and the unpenalized first coefficient. Standard decomposable-regularizer arguments \citep{negahban2012unified} therefore give
\[
\|\widetilde\bbeta\subsup-\bbeta_0\|_2
=O_p\left(\sqrt{\frac{s\log p}{n}}\right),
\qquad
\|\widetilde\bbeta\subsup-\bbeta_0\|_1
=O_p\left(s\sqrt{\frac{\log p}{n}}\right),
\]
and $|\widetilde\alpha\subsup-\alpha_0|=O_p\{\sqrt{s\log p/n}\}$. Since $s\log p/(n\beta_{01}^2)=o(1)$,
\[
|\widetilde\beta_{{\scriptscriptstyle\sf sup},1}-\beta_{01}|
\le\|\widetilde\bbeta\subsup-\bbeta_0\|_2=o_p(|\beta_{01}|),
\]
so $|\widetilde\beta_{{\scriptscriptstyle\sf sup},1}|\ge|\beta_{01}|/2$ with probability approaching one. On this event,
\[
\widetilde\bgamma\subsup-\bgamma_0
=
\frac{\widetilde\bbeta\subsup-\bbeta_0}{\widetilde\beta_{{\scriptscriptstyle\sf sup},1}}
+
\bbeta_0\left(
\frac{1}{\widetilde\beta_{{\scriptscriptstyle\sf sup},1}}
-
\frac{1}{\beta_{01}}
\right).
\]
Using $\|\bbeta_0\|_2\le\kappa$, $\|\bbeta_0\|_1\le\sqrt{s}\|\bbeta_0\|_2$, and the two rates above yields
\[
\|\widetilde\bgamma\subsup-\bgamma_0\|_2
\lesssim_p\sqrt{\frac{s\log p}{n\beta_{01}^2}},
\qquad
\|\widetilde\bgamma\subsup-\bgamma_0\|_1
\lesssim_p s\sqrt{\frac{\log p}{n\beta_{01}^2}}.
\]
The first coordinate equals zero after normalization. Choosing $C_\Gamma$ sufficiently large gives $\widetilde\bgamma\subsup\in\Gamma$ with probability approaching one.}
\end{proof}

}

\vspace{-15pt}

\subsection{Auxiliary Lemmas Used in Section \ref{sec:A}}\label{sec:lem1}

\begin{lemma}\label{lem:A1}
Under the same conditions as Theorem \ref{thm:local},
\[
\sup_{\bgamma\in\Gamma}
\left\|
\frac{1}{N_m}\sum_{i=1}^{N_m}
 f'_m(\bgamma_0\trans\bX_i\supm)
 \left\{\bX_i\supm-\Ebb(\bX\mid\bgamma_0\trans\bX_i\supm)\right\}
 \epsilon_i\supm
\right\|_\infty
\le C\sqrt{\frac{\log p}{N_m}}
\]
with probability exceeding $1-\exp(-c_1\log p)$.
\end{lemma}

\begin{proof}
{\blue Conditional on $\bX_i\supm$, the summand has mean zero because
$\Ebb(\epsilon_i\supm\mid\bX_i\supm)=0$. The bounded derivative
$f'_m$, together with Assumption~\ref{ass:x}, gives a uniformly
sub-exponential coordinate envelope. Bernstein's inequality and a union bound
over the $p$ coordinates yield the stated rate.}
\end{proof}

\vspace{-25pt}

\begin{lemma}\label{lem:A2}
Under the same conditions as Theorem \ref{thm:local}, we have 
$$\small 
\sup_{\bgamma \in \Gamma} \bigg\| \frac{1}{N_m}\sum_{i=1}^{N_m} \Big\{\partial_{\bgamma}\fhat_m(\bX_i\supm;\bgamma)-f'_m(\bgamma_0\trans\bX\supm_i) \big(\bX\supm_i - \Ebb\big(\bX \biggiven \bgamma_0\trans\bX_i\supm \big) \big) \Big\} \epsilon\supm_{i} \bigg\|_{\infty}
\leq C \sqrt{\frac{\log p}{N_m}}$$
with probability exceeding $1-\exp(-c_1\log p)$. 
{\blue The uniformity over $\Gamma$ follows from the kernel-class entropy condition in Assumption \ref{ass:fm}(c).}

\begin{proof}
Define $$\footnotesize \widehat{G}^{(1)}(\bgamma\trans\bX \mid \bgamma):= \frac{\sum_{i=1}^{N_m} K_h'(\bgamma\trans\{\bX_i\supm-\bX\})S\supm_i}{\sum_{i=1}^{N_m} K_h(\bgamma\trans\{\bX_i\supm-\bX\})} 
- \frac{\sum_{i=1}^{N_m} K_h(\bgamma\trans\{\bX_i\supm-\bX\})S\supm_i\sum_{i=1}^{N_m} K_h'(\bgamma\trans\{\bX_i\supm-\bX\})}{\left[\sum_{i=1}^{N_m} K_h(\bgamma\trans\{\bX_i\supm-\bX\})\right]^2}.$$
First, we have 
\begin{scriptsize}
\begin{equation}\label{eqn:A4}
\sup_{\bgamma \in \Gamma} \bigg\| \frac{1}{N_m}\sum_{i=1}^{N_m} \Big\{ \Big( \widehat{G}^{(1)}(\bgamma\trans\bX\supm_i \mid \bgamma) - f'_m(\bgamma_0\trans\bX\supm_i) \Big) \Big(\bX\supm_i - \Ebb\big(\bX \biggiven \bgamma_0\trans\bX_i\supm \big)\Big) \Big\} \epsilon\supm_{i} \bigg\|_{\infty}
\lesssim  \frac{\left[h_m^{2} \log (p \vee N_m)\right]^{1 / 4}}{\sqrt{N_m}}    
\end{equation}\end{scriptsize} directly from Lemma A8 in \cite{wu2021model} under the assumption that $\small \Big(\frac{s\log p}{N_m}\Big)^{1/5} \lesssim h_m \lesssim  N_m^{-1/6}$.
Note that from Page 114 in \cite{wu2021model}, even though we have $\small \|\bgamma-\bgamma_{0}\|_{2} \lesssim \sqrt{\frac{s\log p}{n\beta_{01}^2}}$ instead, we still have the same bound for \eqref{eqn:A4} since $\small \sqrt{\frac{s\log p}{n}} / 2^{-l} h_m \lesssim \sqrt{\frac{s\log p}{n}} \sqrt{N_m} h_m \lesssim N_m$ still holds.

Then it suffices to bound
$$ \small \sup_{\bgamma \in \Gamma} \bigg\| \frac{1}{N_m}\sum_{i=1}^{N_m} \Big\{\partial_{\bgamma}\fhat_m(\bX_i\supm;\bgamma)-\widehat{G}^{(1)}(\bgamma\trans\bX\supm_i \mid \bgamma) \big(\bX\supm_i - \Ebb\big(\bX \biggiven \bgamma_0\trans\bX_i\supm \big) \big) \Big\} \epsilon\supm_{i} \bigg\|_{\infty}.
$$
We have 
\begin{scriptsize}
\begin{equation}\nonumber
\begin{aligned}
& \partial_{\bgamma}\fhat_m(\bX_i\supm;\bgamma)-\widehat{G}^{(1)}(\bgamma\trans\bX\supm_i \mid \bgamma) \big(\bX\supm_i - \Ebb\big(\bX \biggiven \bgamma_0\trans\bX_i\supm \big) \big) \\
= & \frac{\sum_{j=1}^{N_m} K_h'(\bgamma\trans\{\bX_j\supm-\bX_i\supm\})S\supm_j(\bX_j\supm-\bX_i\supm)}{\sum_{j=1}^{N_m} K_h(\bgamma\trans\{\bX_j\supm-\bX_i\supm\})}
-\frac{\sum_{j=1}^{N_m} K_h(\bgamma\trans\{\bX_j\supm-\bX_i\supm\})S\supm_j\sum_{j=1}^{N_m} K_h'(\bgamma\trans\{\bX_j\supm-\bX_i\supm\})(\bX_j\supm-\bX_i\supm)}{\left[\sum_{j=1}^{N_m} K_h(\bgamma\trans\{\bX_j\supm-\bX_i\supm\})\right]^2} \\
& - \bigg(\frac{\sum_{j=1}^{N_m} K_h'(\bgamma\trans\{\bX_j\supm-\bX_i\supm\})S\supm_j}{\sum_{j=1}^{N_m} K_h(\bgamma\trans\{\bX_j\supm-\bX_i\supm\})} 
- \frac{\sum_{j=1}^{N_m} K_h(\bgamma\trans\{\bX_j\supm-\bX_i\supm\})S\supm_j\sum_{j=1}^{N_m} K_h'(\bgamma\trans\{\bX_j\supm-\bX_i\supm\})}{\left[\sum_{j=1}^{N_m} K_h(\bgamma\trans\{\bX_j\supm-\bX_i\supm\})\right]^2}\bigg)
 \cdot \big(\bX\supm_i - \Ebb\big(\bX \biggiven \bgamma_0\trans\bX_i\supm \big) \big) \\
= & \frac{\sum_{j=1}^{N_m} K_h'(\bgamma\trans\{\bX_j\supm-\bX_i\supm\})S\supm_j\big(\bX\supm_j - \Ebb\big(\bX \biggiven \bgamma_0\trans\bX_i\supm \big)\big)}{\sum_{j=1}^{N_m} K_h(\bgamma\trans\{\bX_j\supm-\bX_i\supm\})}\\
&-\frac{\sum_{j=1}^{N_m} K_h(\bgamma\trans\{\bX_j\supm-\bX_i\supm\})S\supm_j\sum_{j=1}^{N_m} K_h'(\bgamma\trans\{\bX_j\supm-\bX_i\supm\})\big(\bX\supm_j - \Ebb\big(\bX \biggiven \bgamma_0\trans\bX_i\supm \big)\big)}{\left[\sum_{j=1}^{N_m} K_h(\bgamma\trans\{\bX_j\supm-\bX_i\supm\})\right]^2}
\end{aligned}    
\end{equation}
\end{scriptsize}
depends on $\bX\supm_i$ only through $\bgamma\trans\bX\supm_i$ and $\bgamma_0\trans\bX\supm_i$. 
%
%
%
Then, by arguments similar to those in Lemmas A11 and A8 of \cite{wu2021model}, we have
\begin{equation}\nonumber
\small
\max_{i} \sup_{\bgamma \in \Gamma} \Big\|  \partial_{\bgamma}\fhat_m(\bX_i\supm;\bgamma)-\widehat{G}^{(1)}(\bgamma\trans\bX\supm_i \mid \bgamma) \big(\bX\supm_i-\Ebb\big(\bX \biggiven \bgamma_0\trans\bX_i\supm \big) \big) \Big\|_{\infty} \leq C_0 h_m, 
\end{equation}
\begin{equation}\label{eqn:A9}
\small
\sup_{\bgamma \in \Gamma} \bigg\| \frac{1}{N_m}\sum_{i=1}^{N_m} \Big\{\partial_{\bgamma}\fhat_m(\bX_i\supm;\bgamma)-\widehat{G}^{(1)}(\bgamma\trans\bX\supm_i \mid \bgamma) \Big(\bX\supm_i - \Ebb\big(\bX \biggiven \bgamma_0\trans\bX_i\supm \big)\Big) \Big\} \epsilon\supm_{i} \bigg\|_{\infty}
\lesssim 
\frac{\left[h_m^{2} \log (p \vee N_m)\right]^{1 / 4}}{\sqrt{N_m}}.
\end{equation} 

Combining \eqref{eqn:A4}, \eqref{eqn:A9} and the assumption of Theorem \ref{thm:local},
finally we have 
$$\small \sup_{\bgamma \in \Gamma} \bigg\| \frac{1}{N_m}\sum_{i=1}^{N_m} \Big\{\partial_{\bgamma}\fhat_m(\bX_i\supm;\bgamma)-f'_m(\bgamma_0\trans\bX\supm_i) \Big(\bX\supm_i - \Ebb\big(\bX \biggiven \bgamma_0\trans\bX_i\supm \big)\Big) \Big\} \epsilon\supm_{i} \bigg\|_{\infty}
\lesssim \sqrt{\frac{\log p }{N_m}}$$
with probability exceeding $1-\exp(-c_1\log p)$.

\end{proof}

\end{lemma}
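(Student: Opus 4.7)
The target sum involves the kernel-based gradient $\partial_{\bgamma}\fhat_m(\bX_i\supm;\bgamma)$, which, as Remark \ref{rem:1} emphasizes, is heuristically a plug-in proxy for the efficient score direction $(\bX - \Ebb[\bX\mid\bgamma_0\trans\bX\supm_i])f'_m(\bgamma_0\trans\bX\supm_i)$ rather than for $\bX f'_m(\bgamma_0\trans\bX\supm_i)$. My plan is to formalize this heuristic by inserting an intermediate quantity and then invoking the known uniform kernel-concentration results from \cite{wu2021model}. Concretely, let
\[
\widehat{G}^{(1)}(\bgamma\trans\bX_i\supm \mid \bgamma)\ =\ \frac{\sum_j K_h'(\bgamma\trans(\bX_j\supm-\bX_i\supm))S_j\supm}{\sum_j K_h(\bgamma\trans(\bX_j\supm-\bX_i\supm))}\ -\ \frac{\sum_j K_h(\cdots)S_j\supm\,\sum_j K_h'(\cdots)}{[\sum_j K_h(\cdots)]^2}
\]
be the scalar kernel derivative estimator of $f_m'(\bgamma\trans\bX\supm_i)$. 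I would split the target by a triangle inequality into (A) a ``kernel-derivative consistency'' piece involving $(\widehat G^{(1)}(\bgamma\trans\bX\supm_i\mid\bgamma) - f_m'(\bgamma_0\trans\bX\supm_i))(\bX\supm_i - \Ebb[\bX\mid\bgamma_0\trans\bX\supm_i])\epsilon_i\supm$, and (B) a ``vectorization'' piece involving $\partial_{\bgamma}\fhat_m(\bX\supm_i;\bgamma) - \widehat G^{(1)}(\bgamma\trans\bX\supm_i\mid\bgamma)(\bX\supm_i - \Ebb[\bX\mid\bgamma_0\trans\bX\supm_i])$ multiplied by $\epsilon_i\supm$.

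For piece (A), I would quote Lemma A8 of \cite{wu2021model} verbatim: under Assumptions \ref{ass:fm}--\ref{ass:5} and the bandwidth regime $(s\log(p\vee N_m)/N_m)^{1/5}\lesssim h_m\lesssim N_m^{-1/6}$, the uniform $\ell_\infty$ bound is $[h_m^2\log(p\vee N_m)]^{1/4}/\sqrt{N_m}$, which is $O(\sqrt{\log p/N_m})$ under our bandwidth choice. For piece (B), I would compute the closed-form difference explicitly: inserting $\bX\supm_j - \Ebb[\bX\mid\bgamma_0\trans\bX_i\supm]$ inside both numerator sums (legitimate because $\Ebb[\bX\mid\bgamma_0\trans\bX\supm_i]$ is a constant with respect to the index $j$) and then subtracting gives an expression that still depends on $\bX\supm_i$ only through the two scalar projections $\bgamma\trans\bX\supm_i$ and $\bgamma_0\trans\bX\supm_i$. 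This allows me to treat piece (B) with the same covering/chaining machinery Lemma A11 of \cite{wu2021model} uses for low-dimensional projections: I bound the per-$i$ summand uniformly by $O(h_m)$ (using Lipschitzness of $K$ and $K'$ from Assumption \ref{ass:k}) and then apply the sub-Gaussian product inequality to $\epsilon_i\supm$ times this $O(h_m)$ quantity, obtaining the same $[h_m^2\log(p\vee N_m)]^{1/4}/\sqrt{N_m}$ rate.

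The final step combines the two pieces by triangle inequality and reduces to $\sqrt{\log p/N_m}$ under the assumed bandwidth regime, concluding with probability at least $1-\exp(-c_1\log p)$.

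\textbf{Main obstacle.} The hard part is the uniformity over $\bgamma\in\Gamma$, which is a high-dimensional set of $s$-sparse perturbations around $\bgamma_0$. A naive union bound over coordinates is insufficient, so one must either (i) cover $\Gamma$ by an $\epsilon$-net and chain, exploiting that both $\widehat G^{(1)}$ and $\partial_{\bgamma}\fhat_m$ are $\bgamma$-Lipschitz (through the kernel derivative bound in Assumption \ref{ass:k}), or (ii) directly invoke the generalized bracketing/peeling arguments already developed in \cite{wu2021model}. The second route is much cleaner, and since the structural form of my piece (B) mirrors the integrand treated in Lemma A11 there (depending on $\bX_i\supm$ only via two scalar projections), I expect the required adaptation to be essentially mechanical, with the bandwidth constraint $h_m\lesssim N_m^{-1/6}$ doing the heavy lifting to absorb the bias contribution.
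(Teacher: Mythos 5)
Your proposal matches the paper's proof essentially step for step: you introduce the same scalar kernel-derivative estimator $\widehat G^{(1)}$ as the intermediate quantity, split into the same two pieces (kernel-derivative consistency handled by Lemma A8 of \cite{wu2021model}, and the vectorization remainder, which depends on $\bX_i\supm$ only through the two scalar projections and is handled via the Lemma A11 machinery with a pointwise $O(h_m)$ bound), and combine under the same bandwidth regime to reach $\sqrt{\log p/N_m}$. This is the paper's argument; no substantive differences.
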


\begin{lemma}\label{lem:A3}
Under the same conditions as Theorem \ref{thm:local}, we have 
\begin{small}
\begin{equation}\nonumber
\begin{aligned}
& \Big| \frac{1}{N_m}\sum_{i=1}^{N_m} (\widehat\bgamma_{[t]}\supm-\bgamma_{0})\trans\partial_{\bgamma}\fhat_m(\bX_i\supm;\widehat\bgamma_{[t-1]}\supm)
\cdot \Big\{f_m(\bgamma_0\trans\bX_i\supm)- \fhat_m(\bX_i\supm;\bgamma_0) \Big\}
\Big| \\
\leq & 
C\Big\{ 
\|\Delta_{t}\supm\|_1\sqrt{\frac{1}{N_m}} + \|\Delta_{t}\supm\|_1 h_m^3 + \|\Delta_{t}\supm\|_2\|\Delta_{t-1}\supm\|_2 h_m^2
\Big\},
\end{aligned}    
\end{equation}
\end{small}
where $\Delta_{t}\supm = \widehat\bgamma_{[t]}\supm-\bgamma_{0}$.

\begin{proof}
By the bound for $I_{n4}$ in the proof of Lemma A1 
in \cite{wu2021model}, and $h_m[\log(p\vee N_m)]^{1/4} \leq c$ implied by the assumption of Theorem \ref{thm:local}, we immediately obtain
\begin{equation}\nonumber
\begin{aligned}
\footnotesize
\Big\| \frac{1}{N_m}\sum_{i=1}^{N_m} f'(\bgamma_0\trans\bX_i\supm)  
\Big(\bX_i\supm - \Ebb[\bX|\bgamma_0\trans\bX_i\supm]\Big)
\cdot \Big\{f_m(\bgamma_0\trans\bX_i\supm)- \fhat_m(\bX_i\supm;\bgamma_0) \Big\} \Big\|_{\infty}
\lesssim  \frac{h_m[\log(p\vee N_m)]^{1/4} }{\sqrt{N_m}}
\lesssim  \sqrt{\frac{1}{N_m}}
\end{aligned}    
\end{equation}
with probability at least $1-\exp(-c_1 \log p)$ for some constant $c_1>0$.
Then it suffices to bound
\begin{scriptsize}
\begin{equation}\nonumber
\begin{aligned}
&  \Big| \frac{1}{N_m}\sum_{i=1}^{N_m} (\widehat\bgamma_{[t]}\supm-\bgamma_{0})\trans \Big\{
\partial_{\bgamma}\fhat_m(\bX_i\supm;\widehat\bgamma_{[t-1]}\supm)  - f'_m(\bgamma_0\trans\bX\supm_i) \Big(\bX\supm_i - \Ebb\big(\bX \biggiven \bgamma_0\trans\bX_i\supm \big) \Big)
\Big\}
\cdot \Big\{f_m(\bgamma_0\trans\bX_i\supm)- \fhat_m(\bX_i\supm;\bgamma_0) \Big\} \Big|
\\
\leq &  \max_{i:\,>0} \sup_{\bgamma \in \Gamma} \Big|(\widehat\bgamma_{[t]}\supm-\bgamma_{0})\trans \Big(\partial_{\bgamma}\fhat_m(\bX_i\supm;\bgamma)  - f'_m(\bgamma\trans\bX\supm_i) \big(\bX\supm_i-\Ebb\big(\bX \biggiven \bgamma\trans\bX_i\supm \big) \big) \Big) \Big| 
\cdot \max_{i:\,>0} \Big|f_m(\bgamma_0\trans\bX_i\supm)- \fhat_m(\bX_i\supm;\bgamma_0) \Big| 
\\
& + 
 \Big| \frac{1}{N_m}\sum_{i=1}^{N_m} (\widehat\bgamma_{[t]}\supm-\bgamma_{0})\trans f'_m(\widehat\bgamma_{[t-1]}\supmtrans\bX\supm_i) \Big(\Ebb\big(\bX \biggiven {\widehat\bgamma_{[t-1]}\supm}{}\trans\bX_i\supm \big) - \Ebb\big(\bX \biggiven \bgamma_0\trans\bX_i\supm \big) \Big) 
\cdot \Big(f_m(\bgamma_0\trans\bX_i\supm)- \fhat_m(\bX_i\supm;\bgamma_0) \Big) \Big| \\
& + 
 \Big| \frac{1}{N_m}\sum_{i=1}^{N_m} (\widehat\bgamma_{[t]}\supm-\bgamma_{0})\trans \Big( f'_m(\widehat\bgamma_{[t-1]}\supmtrans\bX\supm_i)  - f'_m(\bgamma_0\trans\bX\supm_i) \Big)   \Big(\bX\supm_i - \Ebb\big(\bX \biggiven \bgamma_0\trans\bX_i\supm \big) \Big)
\cdot \Big(f_m(\bgamma_0\trans\bX_i\supm)- \fhat_m(\bX_i\supm;\bgamma_0) \Big)  \Big|  \\
:= & I_1 + I_2 + I_3.
\end{aligned}    
\end{equation}
\end{scriptsize}
We have 
\begin{scriptsize}
\begin{equation}\nonumber
\begin{aligned}
I_1 & = 
\max_{i:\,>0} \sup_{\bgamma \in \Gamma} \Big| (\widehat\bgamma_{[t]}\supm-\bgamma_{0})\trans \Big(\partial_{\bgamma}\fhat_m(\bX_i\supm;\bgamma)  - f'_m(\bgamma\trans\bX\supm_i) \big(\bX\supm_i-\Ebb\big(\bX \biggiven \bgamma\trans\bX_i\supm \big) \big) \Big) \Big|
\cdot \max_{i:\,>0} \Big|f_m(\bgamma_0\trans\bX_i\supm)- \fhat_m(\bX_i\supm;\bgamma_0) \Big|  \lesssim \|\Delta_{t}\supm\|_{1 } h_m^3,
\end{aligned}    
\end{equation}
\end{scriptsize}
holds by Lemma \ref{lem:partial-fhat}  and Lemma \ref{lem:fhat}.

\begin{scriptsize}
\begin{equation}\nonumber
\begin{aligned}
I_2 & =  
\Big| \frac{1}{N_m}\sum_{i=1}^{N_m} (\widehat\bgamma_{[t]}\supm-\bgamma_{0})\trans f'_m(\widehat\bgamma_{[t-1]}\supmtrans\bX\supm_i)
\Big(\Ebb\big(\bX \biggiven {\widehat\bgamma_{[t-1]}} \supmtrans\bX_i\supm \big) - \Ebb\big(\bX \biggiven \bgamma_0\trans\bX_i\supm \big) \Big) 
\cdot 
\Big(f_m(\bgamma_0\trans\bX_i\supm)- \fhat_m(\bX_i\supm;\bgamma_0) \Big) \Big|
\\
& =
\Big| \frac{1}{N_m}\sum_{i=1}^{N_m} (\widehat\bgamma_{[t]}\supm-\bgamma_{0})\trans  f'_m(\widehat\bgamma_{[t-1]}\supmtrans\bX\supm_i)  \Big(f_m(\bgamma_0\trans\bX_i\supm)- \fhat_m(\bX_i\supm;\bgamma_0)\Big)
\cdot 
\Big( \widehat\bgamma_{[t-1]}\supmtrans\bX_i\supm - \bgamma_0\trans\bX_i\supm  
 \Big)
\\
& \ \  \cdot
\int_0^1 \Ebb^{(1)}\big(\bX \biggiven \bgamma_0\trans\bX_i\supm  + t(\widehat\bgamma_{[t-1]}\supm-\bgamma_0)\trans\bX_i\supm  \big) \d t
 \Big| 
\\ 
& = 
\Big| \frac{1}{N_m}\sum_{i=1}^{N_m} f'_m(\widehat\bgamma_{[t-1]}\supmtrans\bX\supm_i)  \Big(f_m(\bgamma_0\trans\bX_i\supm)- \fhat_m(\bX_i\supm;\bgamma_0) \Big) (\widehat\bgamma_{[t-1]}\supm-\bgamma_{0})\trans \bX_i\supm 
\\
& \ \ \cdot
\int_0^1 \Ebb^{(1)}\big((\widehat\bgamma_{[t]}\supm-\bgamma_0)\trans\bX \biggiven \bgamma_0\trans\bX_i\supm  + t(\widehat\bgamma_{[t-1]}\supm-\bgamma_0)\trans\bX_i\supm  \big) \d t
\Big| \\
& \lesssim  \|\widehat\bgamma_{[t]}\supm-\bgamma_{0} \|_2 \|\widehat\bgamma_{[t-1]}\supm-\bgamma_0\|_2 h_m^2
\end{aligned}    
\end{equation}
\end{scriptsize}
where the last inequality follows by 
Assumption \ref{ass:fm}  that $ f_m'(\bgamma\trans\bX_i\supm)$ is bounded uniformly, Assumption \ref{ass:5} that 
\[ \small
\max_{i:\,>0} \sup_{\bgamma \in \Gamma} \big|  \Ebb^{(1)}(\bv \trans \bX \mid \bgamma\trans\bX_i\supm)\big| \leq C \|\bv \|_{2},
\]
Lemma \ref{lem:fhat} that  $\small
\max_{i:\,>0} \sup_{\bgamma \in \Gamma} \Big|f_m(\bgamma\trans\bX_i\supm)- \fhat_m(\bX_i\supm;\bgamma) \Big| 
\leq  c_0h_m^2$, 
and Lemma A3 of \cite{wu2021model} that 
$ \small
\Big| \frac{1}{N_m}\sum_{i=1}^{N_m} (\boldsymbol{v}\trans\bX_i\supm)^2 \Big| \lesssim \|\boldsymbol{v}\|_2^2$.
Similarly, we have
\begin{scriptsize}
\begin{equation}\nonumber
\begin{aligned}
 I_3 & =  
\Big| \frac{1}{N_m}\sum_{i=1}^{N_m} (\widehat\bgamma_{[t]}\supm-\bgamma_{0})\trans \Big( f'_m(\widehat\bgamma_{[t-1]}\supmtrans\bX\supm_i)  - f'_m(\bgamma_0\trans\bX\supm_i) \Big) 
\Big(\bX\supm_i - \Ebb\big(\bX \biggiven \bgamma_0\trans\bX_i\supm \big) \Big)\Big(f_m(\bgamma_0\trans\bX_i\supm)- \fhat_m(\bX_i\supm;\bgamma_0) \Big) \Big|
\\
 & \leq
\Big| \frac{1}{N_m}\sum_{i=1}^{N_m} (\widehat\bgamma_{[t-1]}\supm-\bgamma_0)\trans \bX_i\supm \int_0^1 f''\big(\bgamma_0\trans\bX_i\supm  + t(\widehat\bgamma_{[t-1]}\supm-\bgamma_0)\trans\bX_i\supm  \big) \d t \\
&\ \ \ \   \cdot  (\widehat\bgamma_{[t]}\supm-\bgamma_{0})\trans \Big(\bX\supm_i - \Ebb\big(\bX \biggiven \bgamma_0\trans\bX_i\supm \big) \Big)
\Big(f_m(\bgamma_0\trans\bX_i\supm)- \fhat_m(\bX_i\supm;\bgamma_0) \Big) \Big|
\\ 
& \leq 
\|\widehat\bgamma_{[t]}\supm-\bgamma_0\|_2 \|\widehat\bgamma_{[t-1]}\supm-\bgamma_0\|_2 h_m^2.
\end{aligned}    
\end{equation}
\end{scriptsize}
Thus, we achieve 
\begin{small}
\begin{equation}\nonumber
\begin{aligned}
& \Big| \frac{1}{N_m}\sum_{i=1}^{N_m} (\widehat\bgamma_{[t]}\supm-\bgamma_{0})\trans\partial_{\bgamma}\fhat_m(\bX_i\supm;\widehat\bgamma_{[t-1]}\supm)
\cdot \Big\{f_m(\bgamma_0\trans\bX_i\supm)- \fhat_m(\bX_i\supm;\bgamma_0) \Big\}
\Big|  \\
\leq &  C\Big\{ 
\|\Delta_{t}\supm\|_1\sqrt{\frac{1}{N_m}} + \|\Delta_{t}\supm\|_1 h_m^3 + \|\Delta_{t}\supm\|_2\|\Delta_{t-1}\supm\|_2 h_m^2
\Big\}
\end{aligned}    
\end{equation}
\end{small}
for some constant $C$,
where $\Delta_{t-1}\supm := \widehat\bgamma_{[t-1]}\supm-\bgamma_{0}$. 

\end{proof}

\end{lemma}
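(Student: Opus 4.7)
My plan is to view the target sum as an inner product of $\Delta_t^{(m)}=\widehat\bgamma_{[t]}^{(m)}-\bgamma_0$ with a vector-valued empirical average, and then decompose the gradient $\partial_{\bgamma}\fhat_m(\bX_i^{(m)};\widehat\bgamma_{[t-1]}^{(m)})$ into an \emph{oracle efficient-score} piece centered at $\bgamma_0$ plus three perturbation pieces. Concretely, I would write
\[
\partial_{\bgamma}\fhat_m(\bX_i^{(m)};\widehat\bgamma_{[t-1]}^{(m)})
 = f'_m(\bgamma_0^{\trans}\bX_i^{(m)})\big(\bX_i^{(m)}-\Ebb[\bX\mid\bgamma_0^{\trans}\bX_i^{(m)}]\big) + R_{i,1} + R_{i,2} + R_{i,3},
\]
where $R_{i,1}$ replaces $f'_m$ evaluated at $\bgamma_0^{\trans}\bX_i^{(m)}$ with the same quantity at $\widehat\bgamma_{[t-1]}^{(m)\trans}\bX_i^{(m)}$, $R_{i,2}$ replaces $\Ebb[\bX\mid\bgamma_0^{\trans}\bX_i^{(m)}]$ with $\Ebb[\bX\mid\widehat\bgamma_{[t-1]}^{(m)\trans}\bX_i^{(m)}]$, and $R_{i,3}$ is the (uniformly $O(h_m)$) remainder coming from approximating the kernel gradient itself by its efficient-score population limit, as already established in the proof of Lemma \ref{lem:A2}.

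For the oracle piece, I multiply by $f_m(\bgamma_0^{\trans}\bX_i^{(m)})-\fhat_m(\bX_i^{(m)};\bgamma_0)$, apply the Cauchy–Schwarz / H\"older inequality with respect to $\Delta_t^{(m)}$, and then invoke the concentration bound from Lemma A1 of \cite{wu2021model} (the same device used to bound $I_{n4}$ there) to obtain an $\ell_\infty$ bound of order $h_m[\log(p\vee N_m)]^{1/4}/\sqrt{N_m}\lesssim 1/\sqrt{N_m}$ under the bandwidth constraint $h_m\lesssim N_m^{-1/6}$ of Theorem \ref{thm:local}. This is what produces the $\|\Delta_t^{(m)}\|_1/\sqrt{N_m}$ contribution.

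For the three perturbation pieces, I would use the uniform kernel-bias bound $\sup_{\bgamma\in\Gamma}|\fhat_m(\cdot;\bgamma)-f_m(\bgamma^{\trans}\cdot)|\lesssim h_m^2$ from Lemma \ref{lem:fhat} together with the derivative bound $\sup_{\bgamma\in\Gamma}\|\partial_{\bgamma}\fhat_m-f'_m(\bgamma^{\trans}\bX)(\bX-\Ebb[\bX\mid\bgamma^{\trans}\bX])\|_\infty\lesssim h_m$ from Lemma \ref{lem:partial-fhat}, combined with first-order Taylor expansions. For the $R_{i,1}$ and $R_{i,2}$ pieces, Taylor expanding $f'_m$ and $\Ebb[\bX\mid t]$ in $t$ around $\bgamma_0^{\trans}\bX_i^{(m)}$ produces an extra factor $(\widehat\bgamma_{[t-1]}^{(m)}-\bgamma_0)^{\trans}\bX_i^{(m)}$, and after applying Cauchy–Schwarz together with Assumption \ref{ass:5} (which controls $|\Ebb^{(1)}(\bv^{\trans}\bX\mid\bgamma^{\trans}\bX)|\le C\|\bv\|_2$) and the empirical moment bound $N_m^{-1}\sum_i(\bv^{\trans}\bX_i^{(m)})^2\lesssim\|\bv\|_2^2$, I get the $\|\Delta_t^{(m)}\|_2\|\Delta_{t-1}^{(m)}\|_2 h_m^2$ term. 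The $R_{i,3}$ piece, paired with the $h_m^2$ kernel bias, contributes at most $\|\Delta_t^{(m)}\|_1 h_m^3$ via H\"older after a uniform bound on the residual.

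The main obstacle I anticipate is handling the coupled dependence on both $\widehat\bgamma_{[t-1]}^{(m)}$ and $\bX_i^{(m)}$ inside the kernel gradient while keeping the Taylor remainders in the right norms: to end up with a mixed $\ell_1/\ell_2$ bound (rather than a purely $\ell_2$ one, which would be wasteful of the sparsity), I must be careful to apply H\"older only on the pieces whose dual norm I can control at the $\sqrt{\log p/N_m}$ rate, and to keep the pieces driven by $\|\widehat\bgamma_{[t-1]}^{(m)}-\bgamma_0\|_2$ inside Cauchy–Schwarz. A second subtle point is that the uniformity of all of these bounds over the cone $\Gamma$ must hold simultaneously for the iterate $\widehat\bgamma_{[t-1]}^{(m)}$, which I would justify by the inductive hypothesis in the outer proof of Theorem \ref{thm:local} together with the discretization/chaining argument behind Lemmas \ref{lem:fhat}, \ref{lem:partial-fhat}, and the $\Gamma$-level concentration results of \cite{wu2021model}.
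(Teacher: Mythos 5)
Your proposal is correct and follows essentially the same route as the paper's proof: the same decomposition into the oracle efficient-score piece (bounded via the $I_{n4}$ concentration result of Wu et al.\ to get the $\|\Delta_t\supm\|_1/\sqrt{N_m}$ term) plus three perturbations corresponding exactly to the paper's $I_1$ (kernel-gradient remainder times the $h_m^2$ bias, giving $\|\Delta_t\supm\|_1 h_m^3$), $I_2$, and $I_3$ (Taylor expansions of $\Ebb[\bX\mid\cdot]$ and $f'_m$, controlled by Assumption \ref{ass:5} and the empirical second-moment bound, giving $\|\Delta_t\supm\|_2\|\Delta_{t-1}\supm\|_2 h_m^2$). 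The allocation of H\"older versus Cauchy--Schwarz across the terms also matches the paper's treatment.
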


\vspace{-30pt}

\begin{lemma}\label{lem:A4}
Under the same conditions as Theorem \ref{thm:local}, we have
\begin{scriptsize}
\begin{equation}\nonumber
\begin{aligned}
    & \bigg| (\widehat\bgamma_{[t]}\supm-\bgamma_{0})\trans \frac{1}{N_m}\sum_{i=1}^{N_m} \partial_{\bgamma}\fhat_m(\bX_i\supm;\widehat\bgamma_{[t-1]}\supm) \bigg\{ (\bgamma_0-\widehat\bgamma_{[t-1]}\supm)\trans   
	\int_{0}^{1} \Big(\partial_{\bgamma}\fhat_m(\bX_i\supm; \widehat\bgamma_{[t-1]}\supm + \tau (\bgamma_0-\widehat\bgamma_{[t-1]}\supm)) - \partial_{\bgamma}\fhat_m(\bX_i\supm;\widehat\bgamma_{[t-1]}\supm) \Big) \mathrm{d} \tau  \bigg\}  \bigg| \\
	\leq & C\Big\{ 
	\|\Delta_{t}\supm\|_2 
  \|\Delta_{t-1}\supm\|_2 h_m
	+ 
	\|\Delta_{t}\supm\|_2
	 \|\Delta_{t-1}\supm\|_2^2
	 \Big\}.
\end{aligned}
\end{equation}
\end{scriptsize}

\begin{proof}
First, we aim to bound  

$$\small \bigg|\frac{1}{N_m}\sum_{i=1}^{N_m} \Big( (\widehat\bgamma_{[t]}\supm-\bgamma_{0})\trans \partial_{\bgamma}\fhat_m(\bX_i\supm;\widehat\bgamma_{[t-1]}\supm) \Big)^2 \bigg|.$$
We have
\begin{small}
\begin{equation}\nonumber
\begin{aligned}
\bigg| \frac{1}{N_m}\sum_{i=1}^{N_m} \Big( (\widehat\bgamma_{[t]}\supm-\bgamma_{0})\trans {\blue f'_m(\bgamma_0 \trans\bX\supm_i) \Big(\bX\supm_i-\Ebb\big(\bX \biggiven \bgamma_0 \trans\bX_i\supm \big) \Big)} \Big)^2 \bigg| 
	\leq 
C'\|\widehat\bgamma_{[t]}\supm-\bgamma_{0} \|_2^2
\end{aligned}
\end{equation}
\end{small}
by Lemma B2 of \cite{wu2021model}, 
\begin{footnotesize}
\begin{equation}\nonumber
\begin{aligned}
   & \Big|  \frac{1}{N_m}\sum_{i=1}^{N_m} \Big( (\widehat\bgamma_{[t]}\supm-\bgamma_{0})\trans\Big\{ {\blue f'_m(\widehat\bgamma_{[t-1]}\supmtrans\bX\supm_i) \Big(\bX\supm_i-\Ebb\big(\bX \biggiven \widehat\bgamma_{[t-1]}\supmtrans\bX_i\supm \big) \Big)} - {\blue f'_m(\bgamma_0 \trans\bX\supm_i) \Big(\bX\supm_i-\Ebb\big(\bX \biggiven \bgamma_0 \trans\bX_i\supm \big) \Big)} \Big\} \Big)^2 \Big| \\
   \leq & 
   C' \|\widehat\bgamma_{[t]}\supm-\bgamma_{0} \|_2^2\|\widehat\bgamma_{[t-1]}\supm-\bgamma_{0} \|_2^2
\end{aligned}
\end{equation}
\end{footnotesize}
holds by similar argument in the proof of Lemma \ref{lem:A3}, and 
\begin{footnotesize}
\begin{equation}\nonumber
\begin{aligned}
   &  \bigg| \frac{1}{N_m}\sum_{i=1}^{N_m} \Big(
   (\widehat\bgamma_{[t]}\supm-\bgamma_{0})\trans \Big\{ \partial_{\bgamma}\fhat_m(\bX_i\supm;\widehat\bgamma_{[t-1]}\supm)  - {\blue f'_m(\widehat\bgamma_{[t-1]}\supmtrans\bX\supm_i) \big(\bX\supm_i-\Ebb\big(\bX \biggiven \widehat\bgamma_{[t-1]}\supmtrans\bX_i\supm \big) \big)} \Big\} \Big)^2 \bigg| 
   	\leq 
   	C' \|\widehat\bgamma_{[t]}\supm-\bgamma_{0} \|_2^2 h_m^2  
\end{aligned}
\end{equation}
\end{footnotesize}
holds by Lemma \ref{lem:partial-fhat}.
By combining the above inequalities, we have

\begin{small}
\begin{equation}\label{eqn:A6}
\begin{aligned}
    & \bigg| \frac{1}{N_m}\sum_{i=1}^{N_m} \Big( (\widehat\bgamma_{[t]}\supm-\bgamma_{0})\trans \partial_{\bgamma}\fhat_m(\bX_i\supm;\widehat\bgamma_{[t-1]}\supm) \Big)^2 \bigg| \\
	\leq & 3C'
	\Big\{ \|\widehat\bgamma_{[t]}\supm-\bgamma_{0} \|_2^2
	+ \|\widehat\bgamma_{[t]}\supm-\bgamma_{0} \|_2^2\|\widehat\bgamma_{[t-1]}\supm-\bgamma_{0} \|_2^2 
	+ \|\widehat\bgamma_{[t]}\supm-\bgamma_{0} \|_2^2 h_m^2 \Big\}
	\leq C\|\widehat\bgamma_{[t]}\supm-\bgamma_{0} \|_2^2. 
\end{aligned}
\end{equation}
\end{small}

Then, we bound
\begin{footnotesize}
\begin{equation}\label{eqn:A7}
\begin{aligned}
    & \bigg| \frac{1}{N_m}\sum_{i=1}^{N_m} \bigg\{ (\bgamma_0-\widehat\bgamma_{[t-1]}\supm)\trans   
	\int_{0}^{1} \Big(\partial_{\bgamma}\fhat_m\big(\bX_i\supm; \widehat\bgamma_{[t-1]}\supm + \tau (\bgamma_0-\widehat\bgamma_{[t-1]}\supm) - \partial_{\bgamma}\fhat_m(\bX_i\supm;\widehat\bgamma_{[t-1]}\supm) \big) \Big) \mathrm{d} \tau  \bigg\}^2 \bigg| \\
	\lesssim &   
	\max_i \sup_{\bgamma \in \Gamma} \Big| (\bgamma_0-\widehat\bgamma_{[t-1]}\supm)\trans \Big( \partial_{\bgamma}\fhat_m (\bX_i\supm;\bgamma)  - {\blue f'_m(\bgamma\trans\bX\supm_i) \big(\bX\supm_i-\Ebb\big(\bX \biggiven \bgamma\trans\bX_i\supm \big) \big)} \Big) \Big|^2 \\
	& + \bigg| \frac{1}{N_m}\sum_{i=1}^{N_m} \bigg\{ (\bgamma_0-\widehat\bgamma_{[t-1]}\supm)\trans   
	\int_{0}^{1} \Big({\blue f'_m(\bgamma_\tau \trans\bX\supm_i) \Big(\bX\supm_i-\Ebb\big(\bX \biggiven \bgamma_\tau \trans\bX_i\supm \big) \Big)}  - {\blue f'_m(\widehat\bgamma_{[t-1]}\supmtrans\bX\supm_i) \Big(\bX\supm_i-\Ebb\big(\bX \biggiven \widehat\bgamma_{[t-1]}\supmtrans\bX_i\supm \big) \Big)} \Big) \mathrm{d} \tau  \bigg\}^2 \bigg| \\
	\lesssim & \|\bgamma_0-\widehat\bgamma_{[t-1]}\supm\|_2^2 h_m^2 + \|\bgamma_0-\widehat\bgamma_{[t-1]}\supm\|_2^4, 
\end{aligned}
\end{equation}
\end{footnotesize}
by an argument similar to that used in the proof of Lemma \ref{lem:A3}.
Here
$\bgamma_\tau := \widehat\bgamma_{[t-1]}\supm + \tau (\bgamma_0-\widehat\bgamma_{[t-1]}\supm)$.

In conclusion, by combining \eqref{eqn:A6} and \eqref{eqn:A7}, we have
\begin{footnotesize}
\begin{equation}\nonumber
\begin{aligned}
    & \bigg| (\widehat\bgamma_{[t]}\supm-\bgamma_{0})\trans \frac{1}{N_m}\sum_{i=1}^{N_m} \partial_{\bgamma}\fhat_m(\bX_i\supm;\widehat\bgamma_{t-1}\supm)   
	\cdot
 \bigg\{ (\bgamma_0-\widehat\bgamma_{t-1}\supm)\trans   
	\int_{0}^{1} \Big(\partial_{\bgamma}\fhat_m(\bX_i\supm; \widehat\bgamma_{[t-1]}\supm + \tau (\bgamma_0-\widehat\bgamma_{[t-1]}\supm) - \partial_{\bgamma}\fhat_m(\bX_i\supm;\widehat\bgamma_{t-1}\supm) \Big) \mathrm{d} \tau  \bigg\}  \bigg| \\
	\leq &
	C \sqrt{\|\widehat\bgamma_{[t]}\supm-\bgamma_{0} \|_2^2}
	\cdot 
	\sqrt{\|\bgamma_0-\widehat\bgamma_{[t-1]}\supm\|_2^2 h_m^2  + \|\bgamma_0-\widehat\bgamma_{[t-1]}\supm\|_2^4} \\
	\leq  &
	C \Big\{
	\|\widehat\bgamma_{[t]}\supm-\bgamma_{0} \|_2 
  \|\bgamma_0-\widehat\bgamma_{[t-1]}\supm\|_2 h_m
	+ 
	\|\widehat\bgamma_{[t]}\supm-\bgamma_{0} \|_2
	 \|\bgamma_0-\widehat\bgamma_{[t-1]}\supm\|_2^2
	 \Big\}
\end{aligned}
\end{equation}
\end{footnotesize}
for some constant $C$.

\end{proof}

\end{lemma}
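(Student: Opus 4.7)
The target bound has the natural shape of a Cauchy--Schwarz estimate on a quadratic form with kernel derivatives. The plan is to apply Cauchy--Schwarz in the empirical inner product, then separately control the two empirical $L^2$-norms, each of which is of a form already analyzed in the preceding auxiliary lemmas. Concretely, writing
\[
a_i := (\widehat\bgamma_{[t]}\supm-\bgamma_{0})\trans \partial_{\bgamma}\fhat_m(\bX_i\supm;\widehat\bgamma_{[t-1]}\supm),
\]
\[
b_i := (\bgamma_0-\widehat\bgamma_{[t-1]}\supm)\trans \int_{0}^{1}\!\!\Big\{\partial_{\bgamma}\fhat_m\big(\bX_i\supm;\widehat\bgamma_{[t-1]}\supm+\tau(\bgamma_0-\widehat\bgamma_{[t-1]}\supm)\big)-\partial_{\bgamma}\fhat_m(\bX_i\supm;\widehat\bgamma_{[t-1]}\supm)\Big\}\mathrm{d}\tau,
\]
the quantity to be bounded is $|N_m^{-1}\sum_i a_i b_i|\leq (N_m^{-1}\sum_i a_i^2)^{1/2}(N_m^{-1}\sum_i b_i^2)^{1/2}$.

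For the first factor, the strategy is to decompose the kernel gradient via the score-style identity
\[
\partial_{\bgamma}\fhat_m(\bX_i\supm;\bgamma) \;=\; f'_m(\bgamma\trans\bX\supm_i)\big(\bX_i\supm-\Ebb[\bX\mid\bgamma\trans\bX_i\supm]\big) \;+\; R_i(\bgamma),
\]
then further expand $f'_m(\widehat\bgamma_{[t-1]}\supmtrans\bX_i\supm)\big(\bX_i\supm-\Ebb[\bX\mid\widehat\bgamma_{[t-1]}\supmtrans\bX_i\supm]\big)$ around $\bgamma_0$ using the Taylor/ derivative calculations that appear throughout the proof of Lemma~\ref{lem:A3}. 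The leading ``efficient score'' term contributes $\lesssim\|\Delta_t\supm\|_2^2$ by Lemma~B2 of \cite{wu2021model}; the Taylor remainder from switching $\widehat\bgamma_{[t-1]}\supm\to\bgamma_0$ in the score contributes $\lesssim\|\Delta_t\supm\|_2^2\|\Delta_{t-1}\supm\|_2^2$ by the Assumption~\ref{ass:5} bounds on conditional expectations; and the kernel-approximation remainder $R_i$ contributes $\lesssim \|\Delta_t\supm\|_2^2 h_m^2$ via the uniform bound of Lemma~\ref{lem:partial-fhat}. Summing these three pieces yields $N_m^{-1}\sum_i a_i^2 \lesssim \|\Delta_t\supm\|_2^2$, which is equation (B6) in the writeup.

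For the second factor, I expand the bracketed integrand in exactly the same way. Using the score-style decomposition and the intermediate value theorem along the segment $\widehat\bgamma_{[t-1]}\supm+\tau(\bgamma_0-\widehat\bgamma_{[t-1]}\supm)$, the difference of gradients splits into (i) a difference of $R_i(\cdot)$'s, which is uniformly $O(h_m\|\Delta_{t-1}\supm\|_2)$ per sample by Lemma~\ref{lem:partial-fhat}, and (ii) a Lipschitz-in-$\bgamma$ piece of the efficient score, which by Assumptions~\ref{ass:fm} and \ref{ass:5} contributes $O(\|\Delta_{t-1}\supm\|_2^2)$ after taking the inner product with $\bgamma_0-\widehat\bgamma_{[t-1]}\supm$. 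Squaring, averaging over $i$, and using Lemma~A3 of \cite{wu2021model} to replace $N_m^{-1}\sum_i (\bv\trans\bX_i\supm)^2$ by $\|\bv\|_2^2$ up to a constant produces
\[
N_m^{-1}\sum_i b_i^2 \;\lesssim\; \|\Delta_{t-1}\supm\|_2^2 h_m^2 \;+\; \|\Delta_{t-1}\supm\|_2^4,
\]
which is equation (B7).

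The final step is simply $\sqrt{a^2}\sqrt{b^2 h_m^2 + b^4} \leq |a|\,|b|\,h_m + |a|\,b^2$ with $a=\|\Delta_t\supm\|_2$, $b=\|\Delta_{t-1}\supm\|_2$, yielding the claimed bound. The main technical obstacle is the first factor's decomposition: making sure the uniform-in-$\bgamma\in\Gamma$ control from Lemma~\ref{lem:partial-fhat} can legitimately be invoked at the specific (data-dependent) point $\widehat\bgamma_{[t-1]}\supm$, which is handled by noting $\widehat\bgamma_{[t-1]}\supm\in\Gamma$ with high probability (by the inductive hypothesis from the iteration in Theorem~\ref{thm:local}) and by the fact that the Lemma~\ref{lem:partial-fhat} bound is in sup-norm over $\Gamma$. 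All other ingredients are routine Taylor expansions plus the Assumption~\ref{ass:5}-type bounds, exactly mirroring the template already used in Lemma~\ref{lem:A3}.
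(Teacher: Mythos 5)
Your proposal is correct and follows essentially the same route as the paper's proof: Cauchy--Schwarz on the empirical inner product, the three-way decomposition of $\partial_{\bgamma}\fhat_m$ into the efficient score at $\bgamma_0$, the score difference between $\widehat\bgamma_{[t-1]}\supm$ and $\bgamma_0$, and the kernel remainder (invoking Lemma B2 of \cite{wu2021model}, the Lemma \ref{lem:A3} template, and Lemma \ref{lem:partial-fhat}), followed by the elementary inequality $\sqrt{b^2h_m^2+b^4}\leq |b|h_m+b^2$. Your added remark that the sup-over-$\Gamma$ bounds of Lemma \ref{lem:partial-fhat} are what justify evaluation at the data-dependent point $\widehat\bgamma_{[t-1]}\supm\in\Gamma$ is a point the paper leaves implicit, but otherwise the two arguments coincide.
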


\vspace{-30pt}

\begin{lemma}\label{lem:cone}
	Under the same conditions as Theorem \ref{thm:local} and assuming $   \sqrt{{\log p}/{N_m}}   + \|\Delta_{t-1}\supm\|_2 h_m  
+ \|\Delta_{t-1}\supm\|_2^2    < {\lambda_t\supm}/{2}$, we have $\|\Delta_{t,S^c}\supm\|_1 \leq 3\|\Delta_{t,S}\supm\|_1,$ where $S=\{j:\bgamma_{0,j} \neq 0\}$.
	
\begin{proof}
	By the assumption that $$  \sqrt{\frac{\log p}{N_m}}   + \|\Delta_{t-1}\supm\|_2 h_m 
+ \|\Delta_{t-1}\supm\|_2^2    < \frac{\lambda_t\supm}{2}$$ and \eqref{eqn:A2},
	 we have
     \begin{small}
	 \begin{equation}\label{eqn:A28}
	\begin{aligned}
	0 \leq  &  \frac{\lambda_t\supm}{2} \|\Delta\supm_t\|_1 + \lambda_t\supm\|\bgamma_{0}\|_1-\lambda_t\supm\|\widehat\bgamma\supm_{[t]}\|_1 \\
	= & \lambda_t\supm\ (\frac{1}{2}\|\Delta\supm_{t,S}\|_1 + \frac{1}{2} \|\Delta\supm_{t,S^c}\|_1 + \|\bgamma_{0,S}\|_1-\|\widehat\bgamma\supm_{t,S}\|_1-\|\widehat\bgamma\supm_{t,S^c}\|_1) \\
	\leq & \lambda_t\supm\ (\frac{1}{2}\|\Delta\supm_{t,S}\|_1 + \frac{1}{2}\|\Delta\supm_{t,S^c}\|_1 + \|\Delta\supm_{t,S}\|_1-\|\Delta\supm_{t,S^c}\|_1) \\
	\leq  & \lambda_t\supm\ (\frac{3}{2}\|\Delta\supm_{t,S}\|_1 - \frac{1}{2}\|\Delta\supm_{t,S^c}\|_1), 
	\end{aligned}
	 \end{equation}
     \end{small}
	where the equality holds by $\|\bgamma_{0}\|_1=\|\bgamma_{0,S}\|_1 + \|\bgamma_{0,S^c}\|_1=\|\bgamma_{0,S}\|_1$ and $\|\widehat\bgamma\supm_{[t]}\|_1 = \|\widehat\bgamma\supm_{t,S}\|_1+\|\widehat\bgamma\supm_{t,S^c}\|_1$. The second inequality holds by $\|\bgamma_{0,S}\|_1-\|\widehat\bgamma\supm_{t,S}\|_1 \leq  \|\Delta\supm_{t,S}\|_1$ and $ \|\widehat\bgamma\supm_{t,S^c}\|_1 =  \|\bgamma_{0,S^c}-\widehat\bgamma\supm_{t,S^c}\|_1 =  \|\Delta\supm_{t,S^c}\|_1$.
Then we immediately have $0 \leq 3\|\Delta\supm_{t,S}\|_1 - \|\Delta\supm_{t,S^c}\|_1$.
\end{proof}

\end{lemma}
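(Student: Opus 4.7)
The plan is to derive the cone condition from the basic inequality already set up in the proof of Theorem \ref{thm:local}. Specifically, from \eqref{eqn:A1} one has
\[
0 \leq A_1 + A_2 + A_3 + A_4 + \lambda_t\supm\bigl(\|\bgamma_0\|_1 - \|\widehat\bgamma_{[t]}\supm\|_1\bigr),
\]
since the quadratic term on the left-hand side of \eqref{eqn:A1} is nonnegative. My first step is therefore to invoke Lemmas \ref{lem:A1}--\ref{lem:A4} together with the bounds already derived in Theorem \ref{thm:local} to show that $|A_1| + |A_2| + |A_3| + |A_4|$ is bounded by $\|\Delta_t\supm\|_1$ times the error rate $\sqrt{\log p/N_m} + \|\Delta_{t-1}\supm\|_2 h_m + \|\Delta_{t-1}\supm\|_2^2$ (the $\ell_2$ contributions of $A_3, A_4$ are absorbed into $\ell_1$ bounds via $\|\cdot\|_2 \leq \|\cdot\|_1$, possibly at the cost of a $\sqrt{s}$ factor that the hypothesis already accommodates). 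By the hypothesis of the lemma, this combined error is strictly smaller than $\lambda_t\supm/2$, so
\[
0 \leq \tfrac{\lambda_t\supm}{2}\,\|\Delta_t\supm\|_1 + \lambda_t\supm\bigl(\|\bgamma_0\|_1 - \|\widehat\bgamma_{[t]}\supm\|_1\bigr).
\]

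The second step is the standard Lasso decomposition. Since $\bgamma_0$ is supported on $S$, one has $\|\bgamma_0\|_1 = \|\bgamma_{0,S}\|_1$, and since $\widehat\bgamma_{[t],S^c}\supm$ captures exactly the complementary part of $\Delta_t\supm$, one has $\|\widehat\bgamma_{[t],S^c}\supm\|_1 = \|\Delta_{t,S^c}\supm\|_1$. Writing $\|\Delta_t\supm\|_1 = \|\Delta_{t,S}\supm\|_1 + \|\Delta_{t,S^c}\supm\|_1$ and $\|\widehat\bgamma_{[t]}\supm\|_1 = \|\widehat\bgamma_{[t],S}\supm\|_1 + \|\widehat\bgamma_{[t],S^c}\supm\|_1$, together with the triangle inequality $\|\bgamma_{0,S}\|_1 - \|\widehat\bgamma_{[t],S}\supm\|_1 \leq \|\Delta_{t,S}\supm\|_1$, collapses the display above to
\[
0 \leq \tfrac{3}{2}\,\|\Delta_{t,S}\supm\|_1 - \tfrac{1}{2}\,\|\Delta_{t,S^c}\supm\|_1,
\]
which yields the claimed cone condition $\|\Delta_{t,S^c}\supm\|_1 \leq 3\,\|\Delta_{t,S}\supm\|_1$.

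The only real subtlety is step one: one must be careful to verify that each of the four score-type terms $A_1,\dots,A_4$ can honestly be written as $\|\Delta_t\supm\|_1$ times the stated rate (using $\ell_1/\ell_\infty$ duality for $A_1$ and $A_2$, and bounding the $\ell_2$-type contributions of $A_3$ and $A_4$ by the corresponding $\ell_1$ quantities under the induction hypothesis on $\|\Delta_{t-1}\supm\|_2$). Beyond that, the argument is purely algebraic, which is why I expect the cone condition to follow cleanly once the four auxiliary lemmas are in place rather than requiring any new probabilistic input.
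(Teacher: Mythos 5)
Your proposal is correct and follows essentially the same route as the paper: both start from the basic inequality \eqref{eqn:A1}, drop the nonnegative quadratic term, use Lemmas \ref{lem:A1}--\ref{lem:A4} (with $\|\cdot\|_2\le\|\cdot\|_1$ to absorb the $A_3,A_4$ contributions, no $\sqrt{s}$ factor actually needed) and the hypothesis on $\lambda_t\supm$ to reduce to $0\le \tfrac{\lambda_t\supm}{2}\|\Delta_t\supm\|_1+\lambda_t\supm(\|\bgamma_0\|_1-\|\widehat\bgamma_{[t]}\supm\|_1)$, and then apply the standard support decomposition. The paper phrases this as invoking the first line of \eqref{eqn:A2}, which is exactly the display you derive.
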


\begin{lemma}\label{cor:1}
Under the same conditions in Lemma \ref{lem:cone}, we have
	$$\|\Delta\supm_{t}\|_1 \leq 4 \sqrt{s} \|\Delta\supm_{t}\|_2,$$
	where $s=|S|$ and $S=\{j:\bgamma_{0,j} \neq 0\}$.
\end{lemma}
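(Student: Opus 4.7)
The plan is to derive this bound as a direct corollary of Lemma \ref{lem:cone}, which established the cone condition $\|\Delta_{t,S^c}^{(m)}\|_1 \leq 3\|\Delta_{t,S}^{(m)}\|_1$. The argument will proceed in two short steps: first decompose the $\ell_1$ norm along the support $S$ and its complement, then convert the $\ell_1$ norm on $S$ to an $\ell_2$ norm via Cauchy--Schwarz using the fact that $|S| = s$.

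Concretely, I would write
\[
\|\Delta_t^{(m)}\|_1 = \|\Delta_{t,S}^{(m)}\|_1 + \|\Delta_{t,S^c}^{(m)}\|_1 \leq 4\|\Delta_{t,S}^{(m)}\|_1,
\]
where the inequality invokes the cone condition from Lemma \ref{lem:cone}. Then by Cauchy--Schwarz applied to the $s$-sparse vector $\Delta_{t,S}^{(m)}$,
\[
\|\Delta_{t,S}^{(m)}\|_1 \leq \sqrt{s}\,\|\Delta_{t,S}^{(m)}\|_2 \leq \sqrt{s}\,\|\Delta_t^{(m)}\|_2,
\]
since restricting to a sub-coordinate set cannot increase the $\ell_2$ norm. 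Combining the two displays yields the claimed bound $\|\Delta_t^{(m)}\|_1 \leq 4\sqrt{s}\,\|\Delta_t^{(m)}\|_2$.

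There is no real obstacle here: the lemma is essentially a restatement of the standard fact that any vector satisfying a cone condition of the form $\|v_{S^c}\|_1 \leq c\|v_S\|_1$ lies in a set on which $\ell_1$ and $\ell_2$ norms are equivalent up to a $\sqrt{s}$ factor. The only thing to be careful about is that the hypothesis of Lemma \ref{lem:cone} (namely $\sqrt{\log p / N_m} + \|\Delta_{t-1}^{(m)}\|_2 h_m + \|\Delta_{t-1}^{(m)}\|_2^2 < \lambda_t^{(m)}/2$) is inherited, so the cone condition is available for use; the rest is algebra.
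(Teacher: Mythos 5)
Your proof is correct and follows exactly the same route as the paper's: decompose the $\ell_1$ norm over $S$ and $S^c$, apply the cone condition from Lemma \ref{lem:cone} to get the factor of $4$, then use Cauchy--Schwarz on the $s$-sparse restriction to pass to the $\ell_2$ norm. Nothing to add.
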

\begin{proof}
From Lemma \ref{lem:cone}, we have $\|\Delta\supm_{t,S^c}\|_1 \leq 3\|\Delta\supm_{t,S}\|_1$, so
	\begin{align*}
	\|\Delta\supm_{t}\|_1
	= \|\Delta\supm_{t,S}\|_1 + \|\Delta\supm_{t,S^c}\|_1
	\leq 4\|\Delta\supm_{t,S}\|_1
	\leq 4\sqrt{s} \|\Delta\supm_{t,S}\|_2
	\leq 4\sqrt{s} \|\Delta\supm_{t}\|_2,
	\end{align*}
where $\|\Delta\supm_{t,S}\|_1
\leq \sqrt{s} \|\Delta\supm_{t,S}\|_2$ because vector $\Delta\supm_{t,S}$ has at most $s$ nonzero entries.
	
\end{proof}

\begin{lemma}\label{cor:2}
Under the same conditions in Lemma \ref{lem:cone}, we have
$$ \|\bgamma_{0}\|_1-\|\widehat\bgamma\supm_{[t]}\|_1 \leq 4 \sqrt{s} \|\Delta_{t}\supm\|_2,
$$
		where $s=|S|$ and $S=\{j:\bgamma_{0,j} \neq 0\}$.
		
\end{lemma}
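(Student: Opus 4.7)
\textbf{Proof proposal for Lemma \ref{cor:2}.} The plan is to reduce the statement to Lemma \ref{cor:1} by a one-line application of the reverse triangle inequality on $\|\cdot\|_1$. The conditions of Lemma \ref{lem:cone} are assumed, which is precisely what is needed in order to invoke Lemma \ref{cor:1}, so the cone condition $\|\Delta\supm_{t,S^c}\|_1 \leq 3\|\Delta\supm_{t,S}\|_1$ is already available.

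First I would write
\[
\|\bgamma_{0}\|_1-\|\widehat\bgamma\supm_{[t]}\|_1 \;\leq\; \bigl|\|\bgamma_{0}\|_1-\|\widehat\bgamma\supm_{[t]}\|_1\bigr| \;\leq\; \|\bgamma_{0}-\widehat\bgamma\supm_{[t]}\|_1 \;=\; \|\Delta\supm_{t}\|_1,
\]
using only the reverse triangle inequality for the $\ell_1$ norm and the definition $\Delta\supm_{t}=\widehat\bgamma\supm_{[t]}-\bgamma_0$. Then I would invoke Lemma \ref{cor:1}, which under the same hypothesis gives $\|\Delta\supm_{t}\|_1 \leq 4\sqrt{s}\,\|\Delta\supm_{t}\|_2$, yielding the claim
\[
\|\bgamma_{0}\|_1-\|\widehat\bgamma\supm_{[t]}\|_1 \;\leq\; 4\sqrt{s}\,\|\Delta\supm_{t}\|_2.
\]

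There is essentially no obstacle: the cone condition has already been exploited inside Lemma \ref{cor:1} to trade the $\ell_1$ error on $\Delta\supm_{t,S^c}$ for a multiple of the $\ell_1$ error on $\Delta\supm_{t,S}$, and from there a standard Cauchy--Schwarz on the $s$-sparse vector $\Delta\supm_{t,S}$ supplies the $\sqrt{s}\,\|\Delta\supm_{t}\|_2$ factor. One could alternatively argue directly by splitting $\|\bgamma_0\|_1=\|\bgamma_{0,S}\|_1$ and using $\|\widehat\bgamma\supm_{[t]}\|_1 \geq \|\widehat\bgamma\supm_{[t],S}\|_1 \geq \|\bgamma_{0,S}\|_1-\|\Delta\supm_{t,S}\|_1$, which would even give the sharper constant $\sqrt{s}$ instead of $4\sqrt{s}$; but the coarser bound stated here is what is needed later (e.g.\ in \eqref{eqn:A2}) to absorb $\lambda_t\supm(\|\bgamma_0\|_1-\|\widehat\bgamma\supm_{[t]}\|_1)$ directly in terms of $\|\Delta\supm_{t}\|_1$ via Lemma \ref{cor:1}, so routing through Lemma \ref{cor:1} keeps the bookkeeping uniform.
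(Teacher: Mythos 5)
Your proof is correct and follows essentially the same route as the paper: both arguments reduce the left-hand side to $\|\Delta_t\supm\|_1$ (the paper via the support decomposition $\|\bgamma_0\|_1=\|\bgamma_{0,S}\|_1$ and $\|\widehat\bgamma\supm_{[t]}\|_1=\|\widehat\bgamma\supm_{t,S}\|_1+\|\widehat\bgamma\supm_{t,S^c}\|_1$, you via the global reverse triangle inequality) and then invoke Lemma \ref{cor:1} to obtain the factor $4\sqrt{s}\,\|\Delta_t\supm\|_2$. Your side remark that the support-splitting route yields the sharper constant $\sqrt{s}$ is also accurate, and is in fact the intermediate bound the paper passes through before loosening to $\|\Delta_t\supm\|_1$.
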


\begin{proof}	
	We have
	\begin{align*}
	\|\bgamma_{0}\|_1-\|\widehat\bgamma\supm_{[t]}\|_1 
	= &  \|\bgamma_{0,S}\|_1 -\|\widehat\bgamma\supm_{t,S}\|_1 -\|\widehat\bgamma\supm_{t,S^c}\|_1  
	\leq    \|\Delta_{t,S}\supm\|_1 - \|\Delta_{t,S^c}\supm\|_1  
	\leq    \|\Delta_{t}\supm\|_1  
	\leq   4 \sqrt{s} \|\Delta_{t}\supm\|_2
	\end{align*}
	following the same argument in \eqref{eqn:A28}.
\end{proof}

\begin{lemma}[RE condition]\label{lem:re}
Under the conditions of Theorem \ref{thm:local}, there exists a positive constant $\kappa$ such that
\begin{small}
\begin{equation}\nonumber
\begin{aligned}
\frac{1}{N_m}\sum_{i=1}^{N_m}\Big\{(\widehat\bgamma\supm_{[t]}-\bgamma_{0})\trans\partial_{\bgamma}\fhat_m(\bX_i\supm;\widehat\bgamma_{[t-1]}\supm) \Big\}^2 \geq \kappa \|\widehat\bgamma\supm_{[t]}-\bgamma_{0}\|_2^2.
\end{aligned}    
\end{equation}
\end{small}

\begin{proof}
Since $a^2 \geq \frac{1}{2}b^2-(a-b)^2$, we have
\begin{footnotesize}
\begin{equation}\label{eqn:re}
\begin{aligned}
& \frac{1}{N_m}\sum_{i=1}^{N_m}\Big\{(\widehat\bgamma_{[t]}\supm-\bgamma_{0})\trans\partial_{\bgamma}\fhat_m(\bX_i\supm;\widehat\bgamma_{[t-1]}\supm) \Big\}^2 \\ 
\geq & \frac{1}{2N_m}\sum_{i=1}^{N_m}\Big\{(\widehat\bgamma_{[t]}\supm-\bgamma_{0})\trans f'_m(\bgamma_0\trans\bX\supm_i) \Big(\bX\supm_i - \Ebb\big(\bX \biggiven \bgamma_0\trans\bX_i\supm \big) \Big) \Big\}^2 \\
& - \frac{1}{N_m}\sum_{i=1}^{N_m}\bigg\{(\widehat\bgamma_{[t]}\supm-\bgamma_{0})\trans \Big(\partial_{\bgamma}\fhat_m(\bX_i\supm;\widehat\bgamma_{[t-1]}\supm) -  f'_m(\bgamma_0\trans\bX\supm_i) \Big(\bX\supm_i - \Ebb\big(\bX \biggiven \bgamma_0\trans\bX_i\supm \big) \Big) \Big) \bigg\}^2  \\
:= & \Pi_1 - \Pi_2.
\end{aligned}    
\end{equation}
\end{footnotesize}
{\blue Write $T_i=\bgamma_0\trans\bX_i\supm$ and $R_i=\bX_i\supm-\Ebb(\bX_i\supm\mid T_i)$. The leading term is
\[
\Delta_t\supmtrans
\left\{\frac1{N_m}\sum_{i=1}^{N_m}f'_m(T_i)^2R_iR_i\trans\right\}
\Delta_t\supm.
\]
By Lemma \ref{cor:1}, $\Delta_t\supm/\|\Delta_t\supm\|_2$ lies in the normalized tangent cone and has first coordinate zero. Assumption \ref{ass:4} and restricted covariance concentration give the required lower bound with probability at least $1-\exp(-c_1s\log p)$.}

For $\Pi_2$, we have
\begin{scriptsize}
\begin{equation}\label{eqn:Pi_2}
\begin{aligned}
|\Pi_2| & = \bigg| \frac{1}{N_m}\sum_{i=1}^{N_m}\Big\{(\widehat\bgamma_{[t]}\supm-\bgamma_{0})\trans \Big( \partial_{\bgamma}\fhat_m(\bX_i\supm;\widehat\bgamma_{[t-1]}\supm) -  f'_m(\bgamma_0\trans\bX\supm_i) \Big(\bX\supm_i - \Ebb\big(\bX \biggiven \bgamma_0\trans\bX_i\supm \big) \Big) \Big) \Big\}^2 \bigg|  \\
& \leq 2 \bigg| \frac{1}{N_m}\sum_{i=1}^{N_m}\bigg\{ (\widehat\bgamma_{[t]}\supm-\bgamma_{0})\trans \Big( \partial_{\bgamma}\fhat_m(\bX_i\supm;\widehat\bgamma_{[t-1]}\supm) -  {\blue f'_m({\widehat\bgamma_{[t-1]}\supm}{}\trans\bX\supm_i) \Big(\bX\supm_i-\Ebb\big(\bX \biggiven \widehat\bgamma_{[t-1]}\supmtrans \bX_i\supm \big) \Big)} \Big) \bigg\}^2 \bigg|  \\
& \ \ + 2 \bigg| \frac{1}{N_m}\sum_{i=1}^{N_m}\bigg\{ (\widehat\bgamma_{[t]}\supm-\bgamma_{0})\trans
\Big(  {\blue f'_m(\widehat\bgamma_{[t-1]}\supmtrans\bX\supm_i) \Big(\bX\supm_i-\Ebb\big(\bX \biggiven {\widehat\bgamma_{[t-1]}} \supmtrans \bX_i\supm \big) \Big)} -  f'_m(\bgamma_0\trans\bX\supm_i) \Big(\bX\supm_i - \Ebb\big(\bX \biggiven \bgamma_0\trans\bX_i\supm \big) \Big) \Big) \bigg\}^2 \bigg|  \\
& \lesssim \|\widehat\bgamma_{[t]}\supm-\bgamma_{0}\|^2_2 h_m^2 + \|\widehat\bgamma_{[t]}\supm-\bgamma_{0}\|_2^2\|\widehat\bgamma_{[t-1]}\supm-\bgamma_{0}\|_2^2 \\
& = o(\|\widehat\bgamma_{[t]}\supm-\bgamma_{0}\|_2^2)
\end{aligned}    
\end{equation}
\end{scriptsize}
by an argument similar to that used in the proof of Lemma \ref{lem:A3}. 

Combining \eqref{eqn:re}, the lower bound for $\Pi_1$, and \eqref{eqn:Pi_2}, we have
\begin{equation}\nonumber
\small
 \frac{1}{N_m}\sum_{i=1}^{N_m}\Big\{(\widehat\bgamma_{[t]}\supm-\bgamma_{0})\trans\partial_{\bgamma}\fhat_m(\bX_i\supm;\widehat\bgamma_{[t-1]}\supm) \Big\}^2  
\geq   \kappa \|\widehat\bgamma_{[t]}\supm-\bgamma_{0}\|_2^2
\end{equation}
with probability at least $1-\exp \left(-c_{1} s \log p\right)$  for some positive constant $c_{1}$.

\end{proof}

\end{lemma}

\vspace{-30pt}

\subsection{Auxiliary Lemmas in Section \ref{sec:lem1}}

\begin{lemma}\label{lem:fhat}
Suppose Assumptions \ref{ass:fm}--\ref{ass:x} hold and
$d_0s\log(p\vee N_m)\le N_mh_m^5$ for some constant $d_0$.
There exist constants $c_0,c_1>0$ such that
\[
\max_{1\le i\le N_m}\sup_{\bgamma\in\Gamma}
\left|f_m(\bgamma\trans\bX_i\supm)-
\fhat_m(\bX_i\supm;\bgamma)\right|
\le c_0h_m^2
\]
with probability at least $1-\exp\{-c_1\log(p\vee N_m)\}$.
The set $\Gamma$ is the local $\ell_1$--$\ell_2$ neighborhood in
Assumption~\ref{ass:fm}; uniformity over this set follows from
Assumption~\ref{ass:fm}(c).

\begin{proof}
Recall that the leave-one-out kernel estimator for $f_m(\bgamma\trans\bX_i\supm)$  is
\begin{small}
\begin{align*}
\fhat_m(\bX_i\supm;\bgamma)
=\frac{\sum_{j=1, j\neq i}^{N_m} K_{h_m}(\bgamma\trans\{\bX_j\supm-\bX_i\supm\})S\supm_j}{\sum_{j=1, j\neq i}^{N_m} K_{h_m}(\bgamma\trans\{\bX_j\supm-\bX_i\supm\})} 
= \frac{\sum_{j=1, j\neq i}^{N_m} K_{h_m}(\bgamma\trans\{\bX_j\supm-\bX_i\supm\})(f_m(\bgamma_0\trans\bX_j\supm)+\epsilon\supm_j)}{\sum_{j=1, j\neq i}^{N_m} K_{h_m}(\bgamma\trans\{\bX_j\supm-\bX_i\supm\})}, 
\end{align*}
\end{small}
which gives  
\begin{small}
\begin{equation}\label{eqn:B2}
\begin{aligned}
 & \fhat_m(\bX_i\supm;\bgamma) - f_m(\bgamma\trans\bX_i\supm) \\
 = & \frac{\sum_{j=1, j\neq i}^{N_m} K_{h_m}(\bgamma\trans\{\bX_j\supm-\bX_i\supm\})(f_m(\bgamma\trans\bX_j\supm)+\epsilon\supm_j)}{\sum_{j=1, j\neq i}^{N_m} K_{h_m}(\bgamma\trans\{\bX_j\supm-\bX_i\supm\})} - f_m(\bgamma\trans\bX_i\supm) \\
 = & \frac{\sum_{j=1, j\neq i}^{N_m} K_{h_m}(\bgamma\trans\{\bX_j\supm-\bX_i\supm\})\big(f_m(\bgamma\trans\bX_j\supm) - f_m(\bgamma\trans\bX_i\supm)\big) + \sum_{j=1, j\neq i}^{N_m} K_{h_m}(\bgamma\trans\{\bX_j\supm-\bX_i\supm\})\epsilon\supm_j}
{\sum_{j=1, j\neq i}^{N_m} K_{h_m}(\bgamma\trans\{\bX_j\supm-\bX_i\supm\})} \\
:= & \frac{A_{2}(\bgamma\trans\bX_i\supm \mid \bgamma)+A_{1}(\bgamma\trans\bX_i\supm \mid \bgamma)}{A_{3}(\bgamma\trans\bX_i\supm \mid \bgamma)},
\end{aligned}    
\end{equation}
\end{small}
where we define
\begin{small}
\begin{equation}\nonumber
\begin{aligned}
A_{1}(\bgamma\trans\bX_i\supm \mid \bgamma) & = \frac{1}{N_m-1} \sum_{j=1, j\neq i}^{N_m} K_{h_m}(\bgamma\trans\{\bX_j\supm-\bX_i\supm\})\epsilon\supm_j, \\
A_{2}(\bgamma\trans\bX_i\supm \mid \bgamma) & = \frac{1}{N_m-1} \sum_{j=1, j\neq i}^{N_m} K_{h_m}(\bgamma\trans\{\bX_j\supm-\bX_i\supm\})\big(f_m(\bgamma\trans\bX_j\supm) - f_m(\bgamma\trans\bX_i\supm)\big), \\
A_{3}(\bgamma\trans\bX_i\supm \mid \bgamma) & = \frac{1}{N_m-1} \sum_{j=1, j\neq i}^{N_m} K_{h_m}(\bgamma\trans\{\bX_j\supm-\bX_i\supm\}). \\
\end{aligned}    
\end{equation}
\end{small}
By Lemmas B4, B5 and B6 in \cite{wu2021model}, choosing $ d_0 s \log (p \vee N_m) \leq N_m h_m^5$ for some constant $d_0$, we have
\begin{small}
\begin{equation}\label{eqn:B4}
\begin{aligned}
& \max_{1\leq i \leq N_m} \sup_{\bgamma \in \Gamma} \Big|A_{1}(\bgamma\trans\bX_i\supm \mid \bgamma) \Big| 
\leq c_0 h_m^2 \\ %
& \max_{1\leq i \leq N_m} \sup_{\bgamma \in \Gamma} \Big|A_{2}(\bgamma\trans\bX_i\supm \mid \bgamma) \Big| 
\leq c_0 h_m^2 \\
& \max_{1\leq i \leq N_m} \sup_{\bgamma \in \Gamma} \Big|A_{3}(\bgamma\trans\bX_i\supm \mid \bgamma) - \Ebb A_{3}(\bgamma\trans\bX_i\supm \mid \bgamma) \Big| 
\leq c_0 h_m^2 \\
\end{aligned}    
\end{equation}
\end{small}
with probability exceeding $ 1-\exp(-c_1\log (p \vee N_m))$.
Then for $\Ebb A_{3}(\bgamma\trans\bX_i\supm \mid \bgamma)$, we have
\begin{small}
\begin{equation}\nonumber
\begin{aligned}
\Ebb A_{3}(\bgamma\trans\bX_i\supm \mid \bgamma)
& = \frac{1}{h_m} \int K\Big(\frac{y-\bgamma\trans\bX_i\supm}{h_m}\Big) g_{m,\bgamma}(y) d y \\
&=\int K(-z) g_{m,\bgamma}(\bgamma\trans\bX_i\supm + {h_m} z) \d z\\
&=\int K(z)\Big[g_{m,\bgamma}(\bgamma\trans\bX_i\supm)+{h_m} z g_{m,\bgamma}^{\prime}(\bgamma\trans\bX_i\supm)+\frac{{h_m}^{2} z^{2}}{2} g_{m,\bgamma}^{\prime \prime}(t)\Big] \d z\\
&=g_{m,\bgamma}(\bgamma\trans\bX_i\supm)+\frac{{h_m}^{2}}{2} \int z^{2} K(z) g_{m,\bgamma}^{\prime \prime}(t) d z
\end{aligned}
\end{equation}
\end{small}
where $g_{m,\bgamma}$ is the density function of $\bgamma\trans\bX$. Here the second equality holds by letting $z:=(y - \bgamma\trans\bX_i\supm) / {h_m}$, and the third equality holds by Taylor expansion and $t \in \mathbb{R}$ is between $\bgamma\trans\bX_i\supm$ and $\bgamma\trans\bX_i\supm+h_m z$. Also,  Assumptions \ref{ass:k} and \ref{ass:fm}(b) imply that
$$
\small
\max_i\sup_{\bgamma \in \Gamma}\bigg|\frac{{h_m}^{2}}{2} \int z^{2} K(z) g_{m,\bgamma}^{\prime \prime}({t}) d z\bigg| \leq c_{0} {h_m}^{2},
$$
which gives
\begin{small}
\begin{equation}\label{eqn:B6}
\begin{aligned}
\max_i \sup_{\bgamma \in \Gamma} \Big|\Ebb A_{3}(\bgamma\trans\bX_i\supm \mid \bgamma) -  g_{m,\bgamma}(\bgamma\trans\bX_i\supm) \Big| 
\leq c_0 {h_m}^2. 
\end{aligned}    
\end{equation}
\end{small}
Combining \eqref{eqn:B4}, \eqref{eqn:B6} and Assumption \ref{ass:fm}(b), under which $\min_i \inf_{\bgamma \in \Gamma} g_{m,\bgamma}(\bgamma\trans\bX\supm_i) \geq c$, we achieve that
\begin{small}
\begin{equation}\label{eqn:B7-0}
\begin{aligned}
\max_i \sup_{\bgamma \in \Gamma} \frac{1}{A_{3}(\bgamma\trans\bX_i\supm \mid \bgamma)} \leq c'
\end{aligned}    
\end{equation}
\end{small}
for some constant $c'>0$.

Finally, combining  \eqref{eqn:B2}, \eqref{eqn:B4} and \eqref{eqn:B7-0}, we have
$$\small \max_{1\leq i \leq N_m} \sup_{\bgamma \in \Gamma} \Big|f_m(\bgamma\trans\bX_i\supm)- \fhat_m(\bX_i\supm;\bgamma) \Big| \leq  c_0 {h_m}^2$$
 with probability exceeding $1-\exp(-c_1\log (p\vee N_m))$.

\end{proof}

\end{lemma}

\begin{lemma}\label{lem:partial-fhat}
Suppose Assumptions \ref{ass:fm}--\ref{ass:x} hold and
$d_0s\log(p\vee N_m)\le N_mh_m^5$ for some constant $d_0$.
Let
\[
\mathbb V_c=
\left\{\bv\in\mathbb R^p:
 v_1=0,\ \|\bv\|_2\le1,\
 \|\bv_{S^c}\|_1\le3\|\bv_S\|_1
\right\}.
\]
There exist constants $c_0,c_1>0$ such that
\begin{align*}
&\max_{1\le i\le N_m}\sup_{\bgamma\in\Gamma}
\left\|
\partial_{\bgamma}\fhat_m(\bX_i\supm;\bgamma)
-f_m'(\bgamma\trans\bX_i\supm)
\left\{\bX_i\supm-
\Ebb(\bX\mid\bgamma\trans\bX_i\supm)\right\}
\right\|_\infty
\le c_0h_m,\\
&\max_{1\le i\le N_m}
\sup_{\bgamma\in\Gamma,\,\bv\in\mathbb V_c}
\left|
\left[
\partial_{\bgamma}\fhat_m(\bX_i\supm;\bgamma)
-f_m'(\bgamma\trans\bX_i\supm)
\left\{\bX_i\supm-
\Ebb(\bX\mid\bgamma\trans\bX_i\supm)\right\}
\right]\trans\bv
\right|
\le c_0h_m
\end{align*}
with probability at least $1-\exp\{-c_1\log(p\vee N_m)\}$.

\begin{proof}
Recall that the leave-one-out kernel estimator for $f_m'(\bgamma\trans\bX_i\supm)$  is
 \begin{footnotesize}
 \begin{equation}\nonumber
\begin{aligned}
 \partial_{\bgamma}\fhat_m(\bX_i\supm;\bgamma) 
= & 
\frac{\sum_{j=1, j\neq i}^{N_m} K_{h_m}'(\bgamma\trans\{\bX_j\supm-\bX_i\supm\})\{f_m(\bgamma\trans\bX\supm_j)+\epsilon\supm_{j}\}(\bX_j\supm-\bX_i\supm)}{\sum_{j=1, j\neq i}^{N_m} K_{h_m}(\bgamma\trans\{\bX_j\supm-\bX_i\supm\})}\\
&-\frac{\sum_{j=1, j\neq i}^{N_m} K_{h_m}(\bgamma\trans\{\bX_j\supm-\bX_i\supm\})\{f_m(\bgamma\trans\bX\supm_j)+\epsilon\supm_{j}\} \sum_{j=1, j\neq i}^{N_m} K_{h_m}'(\bgamma\trans\{\bX_j\supm-\bX_i\supm\})(\bX_j\supm-\bX_i\supm)}{\left[\sum_{j=1, j\neq i}^{N_m} K_{h_m}(\bgamma\trans\{\bX_j\supm-\bX_i\supm\})\right]^2} \\
= & 
\frac{\sum_{j=1, j\neq i}^{N_m} K_{h_m}'(\bgamma\trans\{\bX_j\supm-\bX_i\supm\})\epsilon\supm_{j}(\bX_j\supm-\bX_i\supm)}{\sum_{j=1, j\neq i}^{N_m} K_{h_m}(\bgamma\trans\{\bX_j\supm-\bX_i\supm\})} \\
& + 
\frac{\sum_{j=1, j\neq i}^{N_m} K_{h_m}'(\bgamma\trans\{\bX_j\supm-\bX_i\supm\})f_m(\bgamma\trans\bX\supm_j)(\bX_j\supm-\bX_i\supm)}{\sum_{j=1, j\neq i}^{N_m} K_{h_m}(\bgamma\trans\{\bX_j\supm-\bX_i\supm\})}\\
& -  \frac{\sum_{j=1, j\neq i}^{N_m} K_{h_m}(\bgamma\trans\{\bX_j\supm-\bX_i\supm\})\epsilon\supm_j\sum_{j=1, j\neq i}^{N_m} K_{h_m}'(\bgamma\trans\{\bX_j\supm-\bX_i\supm\})(\bX_j\supm-\bX_i\supm)}{\left[\sum_{j=1, j\neq i}^{N_m} K_{h_m}(\bgamma\trans\{\bX_j\supm-\bX_i\supm\})\right]^2}  \\
& -  \frac{\sum_{j=1, j\neq i}^{N_m} K_{h_m}(\bgamma\trans\{\bX_j\supm-\bX_i\supm\})f_m(\bgamma\trans\bX\supm_j)\sum_{j=1, j\neq i}^{N_m} K_{h_m}'(\bgamma\trans\{\bX_j\supm-\bX_i\supm\})(\bX_j\supm-\bX_i\supm)}{\left[\sum_{j=1, j\neq i}^{N_m} K_{h_m}(\bgamma\trans\{\bX_j\supm-\bX_i\supm\})\right]^2} \\
:= & 
\frac{D_{i2}-B_{i2}}{D_{i1}}
 + 
\frac{D_{i3}-B_{i3}}{D_{i1}}
 -  \frac{D_{i4}-B_{i4}}{D_{i1}^2} 
 -  \frac{D_{i5}-B_{i5}}{D_{i1}^2} 
\end{aligned}    
\end{equation}
\end{footnotesize}
where
\begin{footnotesize}
\begin{equation}\nonumber
\begin{aligned}
& D_{i1}=\frac{1}{N_m-1} \sum_{j=1, j\neq i}^{N_m} K_{h_m}(\bgamma\trans\{\bX_j\supm-\bX_i\supm\}), \\
& D_{i2}=\frac{1}{N_m-1} \sum_{j=1, j\neq i}^{N_m} K_{h_m}'(\bgamma\trans\{\bX_j\supm-\bX_i\supm\})\epsilon\supm_{j}\bX_j\supm,  \\
& 
B_{i2}=\frac{1}{N_m-1} \sum_{j=1, j\neq i}^{N_m} K_{h_m}'(\bgamma\trans\{\bX_j\supm-\bX_i\supm\})\epsilon\supm_{j}\bX_i\supm, \\
& D_{i3}=\frac{1}{N_m-1}\sum_{j=1, j\neq i}^{N_m} K_{h_m}'(\bgamma\trans\{\bX_j\supm-\bX_i\supm\})f_m(\bgamma\trans\bX\supm_j)\bX_j\supm,  \\
& 
B_{i3}=\frac{1}{N_m-1}\sum_{j=1, j\neq i}^{N_m} K_{h_m}'(\bgamma\trans\{\bX_j\supm-\bX_i\supm\})f_m(\bgamma\trans\bX\supm_j)\bX_i\supm,
\end{aligned}    
\end{equation}
\end{footnotesize}
\begin{footnotesize}
\begin{equation}\nonumber
\begin{aligned}
& D_{i4}= \frac{1}{N_m-1}\sum_{j=1, j\neq i}^{N_m} K_{h_m}(\bgamma\trans\{\bX_j\supm-\bX_i\supm\})\epsilon\supm_j 
\cdot \frac{1}{N_m-1}\sum_{j=1, j\neq i}^{N_m} K_{h_m}'(\bgamma\trans\{\bX_j\supm-\bX_i\supm\})\bX_j\supm,  \\
& B_{i4}=\frac{1}{N_m-1}\sum_{j=1, j\neq i}^{N_m} K_{h_m}(\bgamma\trans\{\bX_j\supm-\bX_i\supm\})\epsilon\supm_j \cdot  \frac{1}{N_m-1}\sum_{j=1, j\neq i}^{N_m} K_{h_m}'(\bgamma\trans\{\bX_j\supm-\bX_i\supm\})\bX_i\supm, \\
& D_{i5}=\frac{1}{N_m-1}\sum_{j=1, j\neq i}^{N_m} K_{h_m}(\bgamma\trans\{\bX_j\supm-\bX_i\supm\})f_m(\bgamma\trans\bX\supm_j)
\cdot \frac{1}{N_m-1}\sum_{j=1, j\neq i}^{N_m} K_{h_m}'(\bgamma\trans\{\bX_j\supm-\bX_i\supm\})\bX_j\supm := D_{i51}D_{i52}, \\
& B_{i5}=\frac{1}{N_m-1}\sum_{j=1, j\neq i}^{N_m} K_{h_m}(\bgamma\trans\{\bX_j\supm-\bX_i\supm\})f_m(\bgamma\trans\bX\supm_j) 
\cdot \frac{1}{N_m-1}\sum_{j=1, j\neq i}^{N_m} K_{h_m}'(\bgamma\trans\{\bX_j\supm-\bX_i\supm\})\bX_i\supm:= B_{i51}B_{i52}.
\end{aligned}    
\end{equation}
\end{footnotesize}
We analyze the term $D_{i52}=\frac{1}{N_m-1}\sum_{j=1, j\neq i}^{N_m} K_{h_m}'(\bgamma\trans\{\bX_j\supm-\bX_i\supm\})\bX_j\supm$  in Lemma \ref{lem:B3} that
\begin{small}\begin{equation}\nonumber
    \max_{1\leq i \leq N_m} \sup_{\bgamma \in \Gamma} \Big\| D_{i52} +
    \Big\{ \Ebb^{(1)}\big(\bX \biggiven \bgamma\trans\bX=\bgamma\trans\bX_i\supm \big) g_{m,\bgamma}(\bgamma\trans\bX_i\supm) + \Ebb\big(\bX \biggiven \bgamma\trans\bX=\bgamma\trans\bX_i\supm \big) g'_{m,\bgamma}(\bgamma\trans\bX_i\supm) \Big\} \Big\|_\infty 
    \leq  c_0 h_m.
\end{equation}
\end{small}
Other terms can be analyzed similarly to obtain
\begin{footnotesize}
\begin{equation}\nonumber
\begin{aligned}
& \max_i\sup_{\bgamma \in \Gamma}\Big|D_{i1}-g_{m,\bgamma}(\bgamma\trans\bX\supm_i)\Big| \leq c_0 h_m
\\
& 
\max_i\sup_{\bgamma \in \Gamma} \Big\|D_{i2}\Big\|_{\infty} \leq c_0 h_m
\\
& 
\max_i\sup_{\bgamma \in \Gamma} \Big\|B_{i2}\Big\|_{\infty} \leq c_0 h_m
\\
& 
\max_i\sup_{\bgamma \in \Gamma} \Big\| D_{i3}
+ 
\Big\{  \Ebb\big(\bX \biggiven \bgamma\trans\bX_i\supm \big) g_{m,\bgamma}(\bgamma\trans\bX_i\supm) f'_m(\bgamma\trans\bX_i\supm) 
 + \Ebb\big(\bX \biggiven \bgamma\trans\bX_i\supm \big) g'_{m,\bgamma}(\bgamma\trans\bX_i\supm)f_m(\bgamma\trans\bX_i\supm) \\
& \ \ \ \quad \quad \quad \quad \quad + \Ebb^{(1)}\big(\bX \biggiven \bgamma\trans\bX_i\supm \big) g_{m,\bgamma}(\bgamma\trans\bX_i\supm) f_m(\bgamma\trans\bX_i\supm)\Big\} \Big\|_\infty 
\leq c_0 h_m
\\
& 
\max_i\sup_{\bgamma \in \Gamma} \Big\|B_{i3}
+
\Big(f_m(\bgamma\trans\bX\supm_i)g'_{m,\bgamma}(\bgamma\trans\bX\supm_i)+f'(\bgamma\trans\bX\supm_i)g_{m,\bgamma}(\bgamma\trans\bX\supm_i)\Big) \bX\supm_i \Big\|_\infty 
\leq c_0 h_m 
\\
& 
\max_i\sup_{\bgamma \in \Gamma} \Big\| D_{i4} \Big\|_\infty \leq c_0 h_m
\\
& 
\max_i\sup_{\bgamma \in \Gamma} \Big\| B_{i4} \Big\|_\infty \leq c_0 h_m
\\
& 
\max_i \sup_{\bgamma \in \Gamma} \Big\|D_{i5} + f_m(\bgamma\trans\bX\supm_i)g_{m,\bgamma}(\bgamma\trans\bX\supm_i) \Big\{ \Ebb^{(1)}\big(\bgamma\trans\bX_i\supm \big) g_{m,\bgamma}(\bgamma\trans\bX_i\supm) + \Ebb\big(\bgamma\trans\bX_i\supm \big) g'_{m,\bgamma}(\bgamma\trans\bX_i\supm) \Big\} \Big\|_\infty \leq c_0 h_m
\\
& 
\max_i\sup_{\bgamma \in \Gamma} \big\| B_{i5} + f_m(\bgamma\trans\bX\supm_i)g_{m,\bgamma}(\bgamma\trans\bX\supm_i)g'_{m,\bgamma}(\bgamma\trans\bX\supm_i)\bX\supm_i \big\|_\infty
\leq c_0 h_m
\end{aligned}
\end{equation}
\end{footnotesize}

%

Thus, by combining the above bounds, we have
\begin{small}\begin{equation}\nonumber
\max_{1\leq i \leq N_m} \sup_{\bgamma \in \Gamma} \Big\| \partial_{\bgamma}\fhat_m(\bX_i\supm;\bgamma)  - f'_m(\bgamma\trans\bX\supm_i) \big(\bX\supm_i -\Ebb (\bX \biggiven \bgamma\trans\bX_i\supm )\big) \Big\|_\infty 
\leq  c_0 h_m.
\end{equation}
\end{small}
Similarly, we can show
$$ \small
\max_{1\leq i \leq N_m} \sup_{\bgamma \in \Gamma, \boldsymbol{v} \in \mathbb V_c} \Big| \Big\{\partial_{\bgamma}\fhat_m(\bX_i\supm;\bgamma)  - f'_m(\bgamma\trans\bX\supm_i) \big(\bX\supm_i -\Ebb (\bX \biggiven \bgamma\trans\bX_i\supm )\big) \Big\}\trans \boldsymbol{v} \Big|
\leq  c_0 h_m.
$$
\end{proof}
\end{lemma}

\begin{lemma}\label{lem:B3}
Let
\[
D_{i52}=\frac1{N_m-1}\sum_{j\ne i}
K_{h_m}'\{\bgamma\trans(\bX_j\supm-\bX_i\supm)\}\bX_j\supm.
\]
Under the assumptions of Lemma~\ref{lem:partial-fhat},
\begin{align*}
&\max_i\sup_{\bgamma\in\Gamma,\,\bv\in\mathbb V_c}
\left|
\left[
D_{i52}
+\Ebb^{(1)}(\bX\mid\bgamma\trans\bX_i\supm)
 g_{m,\bgamma}(\bgamma\trans\bX_i\supm)
+\Ebb(\bX\mid\bgamma\trans\bX_i\supm)
 g'_{m,\bgamma}(\bgamma\trans\bX_i\supm)
\right]\trans\bv
\right|
\le c_0h_m,\\
&\max_i\sup_{\bgamma\in\Gamma}
\left\|
D_{i52}
+\Ebb^{(1)}(\bX\mid\bgamma\trans\bX_i\supm)
 g_{m,\bgamma}(\bgamma\trans\bX_i\supm)
+\Ebb(\bX\mid\bgamma\trans\bX_i\supm)
 g'_{m,\bgamma}(\bgamma\trans\bX_i\supm)
\right\|_\infty
\le c_0h_m
\end{align*}
with probability at least $1-\exp\{-c_1\log(p\vee N_m)\}$.

\begin{proof}   
We prove the second displayed bound below; the proof of the first is analogous.
We first calculate $\Ebb D_{i52}\trans \bv$ and then bound $\big(D_{i52}-\Ebb D_{i52}\big)\trans \bv$.
We have

\begin{scriptsize}
\begin{equation}\label{eqn:B7}
\begin{aligned}
\Ebb D_{i52} 
& =\Ebb\big(K_{h_m}'(\bgamma\trans\{\bX_j\supm-\bX_i\supm\})\bX_j\supm\big)\\ &=\Ebb_{\bgamma\trans\bX_j\supm} \left\{K_{h_m}'(\bgamma\trans\{\bX_j\supm-\bX_i\supm\}) \Ebb\big(\bX \biggiven  \bgamma\trans\bX_j\supm \big)\right\} \\
&= \int  K_{h_m}'(\bgamma\trans\{\bX_j\supm-\bX_i\supm\})  \Ebb\big(\bX \biggiven  \bgamma\trans\bX_j\supm \big) g_{m,\bgamma}(\bgamma\trans\bX_j\supm) \mathrm{d}\bgamma\trans\bX_j\supm  \\
&= \int  \frac{1}{h_m} K'(z)  \Ebb\big(\bX \biggiven \bgamma\trans\bX_i\supm +hz \big) g_{m,\bgamma}(\bgamma\trans\bX_i\supm+hz) \mathrm{d} z \\
&=\int  \frac{1}{h_m} K'(z) \bigg\{ \Ebb\big(\bX \biggiven \bgamma\trans\bX_i\supm \big)+ h_m z \Ebb^{(1)}\big(\bX \biggiven \bgamma\trans\bX_i\supm  \big)  + \frac{h_m^{2} z^{2}}{2} \Ebb^{(2)}\big(\bX \biggiven t' \big) \bigg\}
\cdot \bigg\{g_{m,\bgamma}(\bgamma\trans\bX_i\supm) + h_m z g'_{m,\bgamma}(\bgamma\trans\bX_i\supm) + \frac{h_m^{2} z^{2}}{2} g''_{m,\bgamma}(t'') \bigg\} \d z
\\ 
& = \int  \frac{1}{h_m} K'(z) \bigg\{ \Ebb\big(\bX \biggiven \bgamma\trans\bX_i\supm \big) g_{m,\bgamma}(\bgamma\trans\bX_i\supm) \\
& \quad\quad\quad\quad\quad + 
h_m z \Big[\Ebb^{(1)}\big(\bX \biggiven \bgamma\trans\bX_i\supm \big) g_{m,\bgamma}(\bgamma\trans\bX_i\supm) + \Ebb\big(\bX \biggiven \bgamma\trans\bX_i\supm \big) g'_{m,\bgamma}(\bgamma\trans\bX_i\supm) \big)\Big] 
\\
& \quad\quad\quad\quad\quad  + h_m^2  z^2 
\Big[
\frac{1}{2}\Ebb\big(\bX \biggiven \bgamma\trans\bX_i\supm \big)g''_{m,\bgamma}(t'') + \Ebb^{(1)}\big(\bX \biggiven \bgamma\trans\bX_i\supm  \big)g'_{m,\bgamma}(\bgamma\trans\bX_i\supm)
+ \frac{1}{2}\Ebb^{(2)}\big(\bX \biggiven t' \big)g_{m,\bgamma}(\bgamma\trans\bX_i\supm)
\Big] 
\\
& \quad\quad\quad\quad\quad  + \frac{1}{2} h_m^3  z^3 
\Big[
\Ebb^{(1)}\big(\bX \biggiven \bgamma\trans\bX_i\supm \big)g''_{m,\bgamma}(t'')
+ \Ebb^{(2)}\big(\bX \biggiven t' \big)g'_{m,\bgamma}(\bgamma\trans\bX_i\supm)
\Big] 
 \\
& \quad\quad\quad\quad\quad  + \frac{1}{4} h_m^4  z^4 
\Big[
\Ebb^{(2)}\big(\bX \biggiven t' \big)g''_{m,\bgamma}(t'') 
\Big] 
\bigg\} \d z
\end{aligned}
\end{equation}
\end{scriptsize}
The fourth equality holds by a change of variable $z:= \big(\bgamma\trans\bX_j\supm-\bgamma\trans\bX_i\supm\big)/h_m$. The fifth equality holds by Taylor expansion where $ \Ebb^{(1)}\big(\bX \biggiven t)$  and $ \Ebb^{(2)}\big(\bX \biggiven t)$ are the first and second derivatives of $\Ebb\big(\bX \biggiven t)$ with respect to $t$, and $t'$ and $t''$ are both between $\bgamma\trans\bX_i\supm$ and $\bgamma\trans\bX_i\supm+ h_m z$. The sixth equality follows by re-organizing terms. 
Then Assumptions \ref{ass:k}, \ref{ass:fm}, and \ref{ass:5} 
imply that 
\begin{small}
\begin{equation}\label{eqn:EA}
\Big\| \Ebb D_{i52}\trans \boldsymbol{v} + \Big\{ \Ebb^{(1)}\big(\bX \biggiven \bgamma\trans\bX_i\supm \big) g_{m,\bgamma}(\bgamma\trans\bX_i\supm) + \Ebb\big(\bX \biggiven \bgamma\trans\bX_i\supm \big) g'_{m,\bgamma}(\bgamma\trans\bX_i\supm) \Big\} \trans \boldsymbol{v}\Big\|_\infty \leq c_0 h_m .
\end{equation}
\end{small}

Next we analyze $\big(D_{i52}-\Ebb D_{i52}\big)\trans \bv$.  Following similar argument in \eqref{eqn:B7}, we have

\begin{footnotesize}
\begin{equation}\nonumber
\begin{aligned}
& \ \ \ \, \Ebb ( D_{i52}\trans  \boldsymbol{v} )^2 
\\
& = \Ebb \big( K_{h_m}'^2(\bgamma\trans \bX_j\supm-\bgamma\trans\bX_i\supm) ({\bX_j\supm} \trans \boldsymbol{v} )^2 \big) 
\\
& = \Ebb_{\bgamma\trans\bX_j\supm} \left\{K_{h_m}'^2(\bgamma\trans \bX_j\supm-\bgamma\trans\bX_i\supm ) \Ebb\big((\bX \trans \boldsymbol{v}  )^2  \biggiven  \bgamma\trans\bX_j\supm \big)\right\} \\
& = \int  K_{h_m}'^2(\bgamma\trans \bX_j\supm-\bgamma\trans\bX_i\supm )  \Ebb\big((\bX \trans \boldsymbol{v}  )^2  \biggiven  \bgamma\trans\bX_j\supm \big) g_{m,\bgamma}(\bgamma\trans\bX_j\supm) \mathrm{d}\bgamma\trans\bX_j\supm  \\
& = \frac{1}{h_m^3} \int K'^2(z)  \Ebb\big((\bX \trans \boldsymbol{v} )^2  \biggiven \bgamma\trans\bX_i\supm +hz \big) g_{m,\bgamma}(\bgamma\trans\bX_i\supm+hz) \mathrm{d} z 
\\
& = \frac{1}{h_m^3} \int  K'^2(z) \bigg\{ \Ebb\big((\bX \trans \boldsymbol{v}  )^2 \biggiven \bgamma\trans\bX_i\supm \big)+ h_m z \Ebb^{(1)}\big((\bX \trans \boldsymbol{v}  )^2 \biggiven t' \big) \bigg\}
\cdot \bigg\{g_{m,\bgamma}(\bgamma\trans\bX_i\supm) + h_m z g'_{m,\bgamma}(t'')  \bigg\} \d z 
\\ 
& = \frac{1}{h_m^3} \int  K'^2(z) \bigg\{\Big[
\Ebb\big((\bX \trans \boldsymbol{v}  )^2 \biggiven \bgamma\trans\bX_i\supm \big) g_{m,\bgamma}(\bgamma\trans\bX_i\supm)\Big] \\
& \quad\quad\quad + h_m z \Big[\Ebb^{(1)}\big((\bX \trans \boldsymbol{v} )^2 \biggiven t' \big) g_{m,\bgamma}(\bgamma\trans\bX_i\supm) + \Ebb\big((\bX \trans \boldsymbol{v}  )^2 \biggiven \bgamma\trans\bX_i\supm \big) g'_{m,\bgamma}(t'' \big)\Big] \\
& \quad\quad\quad + h_m^2  z^2 \Big[  \Ebb^{(1)}\big((\bX \trans \boldsymbol{v}  )^2 \biggiven t')g'_{m,\bgamma}(t'') \Big] 
\bigg\} \d z, 
\end{aligned}    
\end{equation}
\end{footnotesize}
which gives 
$\Ebb ( D_{i52}\trans  \boldsymbol{v} )^2 \leq c \frac{1}{h_m^3}\|\boldsymbol{v}\|_2^2 $
following similar arguments as \eqref{eqn:EA}.
Then by similar argument in the proof of Lemma A7 in \cite{wu2021model} and {Hoeffding inequality of sub-Gaussian}, we have
\begin{small}
\begin{equation}\nonumber
\Pbb \Big( \big|\big(D_{i52} -\Ebb D_{i52} \big)\trans \boldsymbol{v}\big| \geq t \Big) \leq 2 \exp\Big(-\frac{(N_m-1)t^2}{2c^2\|\boldsymbol{v}\|_2^2/h_m^3} \Big),
\end{equation}
\end{small}
which further gives
\begin{equation}\label{eqn:B9}
\small
\big|\big(D_{i52} -\Ebb D_{i52} \big)\trans \boldsymbol{v}\big| \leq c_0 h_m 
\end{equation}
with probability exceeding $ 1-\exp(-c_2 \log (p \vee N_m))$ by taking $t=c_0 h_m$ and applying the assumption  $d_0s\log (p\vee N_m) \leq N_m h_m^5$.
Combining \eqref{eqn:EA} and \eqref{eqn:B9} together, we achieve
\begin{small}
\begin{equation}\nonumber
    \Big| D_{i52}\trans \boldsymbol{v} + \Big\{ \Ebb^{(1)}\big(\bX \biggiven \bgamma\trans\bX_i\supm \big) g_{m,\bgamma}(\bgamma\trans\bX_i\supm) + \Ebb\big(\bX \biggiven \bgamma\trans\bX_i\supm \big) g'_{m,\bgamma}(\bgamma\trans\bX_i\supm) \Big\} \trans \boldsymbol{v} \Big|
    \leq c_0 h_m.
\end{equation}
\end{small}

To obtain a uniform bound for $\bgamma \in \Gamma$ and $\bv \in \mathbb V_c$, we cover $\Gamma$ and $\mathbb V_c$ by $L_2$-balls with radius $\delta$. Following similar argument in the proof of Lemma A7 in \cite{wu2021model} (Page 102), we have
\begin{small}
\begin{equation}\nonumber
\sup_{\bgamma \in \Gamma, \bv \in \mathbb V_c} \Big| \Big\{ D_{i52} 
+ \Big( \Ebb^{(1)}\big(\bX \biggiven \bgamma\trans\bX_i\supm \big) g_{m,\bgamma}(\bgamma\trans\bX_i\supm) + \Ebb\big(\bX \biggiven \bgamma\trans\bX_i\supm \big) g'_{m,\bgamma}(\bgamma\trans\bX_i\supm) \Big) \Big\}\trans \bv \Big|
\leq  c_0 h_m
\end{equation}
\end{small}
with probability exceeding $1-\exp(-d_1 \log (p \vee N_m))$ for some constant $d_1$.
Finally to obtain a uniform bound for $1 \leq i \leq N_m$, we have 
\begin{footnotesize}
\begin{equation}\nonumber
\begin{aligned}
& \, \Pbb\Big( \max_{1 \leq i \leq N_m}\sup_{\bgamma \in \Gamma, \bv \in \mathbb V_c } \Big| \Big\{ D_{i52} + \Big( \Ebb^{(1)}\big(\bX \biggiven \bgamma\trans\bX_i\supm \big) g_{m,\bgamma}(\bgamma\trans\bX_i\supm) + \Ebb\big(\bX \biggiven \bgamma\trans\bX_i\supm \big) g'_{m,\bgamma}(\bgamma\trans\bX_i\supm) \Big) \Big\}\trans \bv \Big|
\geq   c_0 h_m \Big) \\
\leq & \, \sum_{i=1}^{N_m} \Pbb\Big( \sup_{\bgamma \in \Gamma, \bv \in \mathbb V_c } \Big| \Big\{ D_{i52} + \Big( \Ebb^{(1)}\big(\bX \biggiven \bgamma\trans\bX_i\supm \big) g_{m,\bgamma}(\bgamma\trans\bX_i\supm) + \Ebb\big(\bX \biggiven \bgamma\trans\bX_i\supm \big) g'_{m,\bgamma}(\bgamma\trans\bX_i\supm) \Big) \Big\}\trans \bv \Big| \geq   c_0 h_m \Big) \\
\leq & \, N_m \exp(-d_1 \log (p \vee N_m))
\leq \exp(-c_1 \log (p \vee N_m))
\end{aligned}    
\end{equation}
\end{footnotesize}
for some constant $c_1$.

Similarly, we can also prove \begin{equation}\nonumber
\small
    \max_{1 \leq i \leq N_m} \sup_{\bgamma \in \Gamma} \Big\| D_{i52} + \Big\{ \Ebb^{(1)}\big(\bX \biggiven \bgamma\trans\bX_i\supm \big) g_{m,\bgamma}(\bgamma\trans\bX_i\supm) + \Ebb\big(\bX \biggiven \bgamma\trans\bX_i\supm \big) g'_{m,\bgamma}(\bgamma\trans\bX_i\supm) \Big\} \Big\|_\infty 
    \leq  c_0 h_m
\end{equation}
with probability exceeding $ 1-\exp(-c_1 \log (p \vee N_m))$.

\end{proof}

\end{lemma}




{
\section{Justification of Assumption \ref{ass:4}}\label{sec:ass}

We verify Assumption \ref{ass:4} under a Gaussian design. Suppose that $\bX\sim N(0,I_p)$. Then
\[
\operatorname{Cov}(\bX\mid\bgamma_0\trans\bX)
=I_p-\frac{\bgamma_0\bgamma_0\trans}{\|\bgamma_0\|_2^2}.
\]
For every $\bv\in\mathbb V_0$, because $v_1=0$, $\|\bv\|_2=1$, and $\gamma_{01}=1$, define
\[
a_m^2=\mathbb E\left[\{f'_m(\bgamma_0\trans\bX)\}^2\right].
\]
Assumption \ref{ass:fm}(a) requires $a_m^2>0$. Since the Gaussian conditional covariance is constant in the index,
{\blue
\begin{small}
\begin{equation}\label{eqn:eigen-value-lb}
\begin{aligned}
&\bv\trans\Ebb\left[\{f'_m(\bgamma_0\trans\bX)\}^2
\operatorname{Cov}(\bX\mid\bgamma_0\trans\bX)\right]\bv\\
&\qquad=a_m^2\left\{1-\frac{(\bv\trans\bgamma_0)^2}{\|\bgamma_0\|_2^2}\right\}
\ge \frac{a_m^2}{\|\bgamma_0\|_2^2}
\ge \frac{a_m^2}{\kappa_\gamma^2}.
\end{aligned}
\end{equation}
\end{small}
Thus the derivative-weighted restricted lower bound holds on the normalized tangent space with $c=\min_m a_m^2\kappa_\gamma^{-2}>0$. The corresponding upper bound is at most $\max_m a_m^2$. }

Furthermore, we evaluate the squared maximum eigenvalues:
\begin{footnotesize}
\begin{equation}
\begin{aligned}
&\sup_{\bgamma \in \Gamma} \frac{1}{N_m} \sum_{i=1}^{N_m}\Big[\lambda_{\max }\big(\Ebb(\bX \bX\trans \mid \bX_i\trans \bgamma)\big)\Big]^2 
 = \sup_{\bgamma \in \Gamma} \frac{1}{N_m} \sum_{i=1}^{N_m} \sup_{\|\bv\|_2=1}\Big[ \Ebb((\bX\trans \bv)^2   \mid \bX_i\trans \bgamma) \Big]^2 
\\
= & \sup_{\bgamma \in \Gamma} \frac{1}{N_m} \sum_{i=1}^{N_m} \sup_{\|\bv\|_2=1}\Big[ \frac{(\bX_i\trans\bgamma)^2(\bv\trans\bgamma)^2}{\|\bgamma\|_2^4} + \|\bv\|_2^2 - \frac{(\bv\trans\bgamma)^2}{\|\bgamma\|_2^2}\Big]^2
 \leq \sup_{\bgamma \in \Gamma} \frac{1}{N_m} \sum_{i=1}^{N_m} \sup_{\|\bv\|_2=1}\Big[ \frac{2(\bX_i\trans\bgamma)^4(\bv\trans\bgamma)^4}{\|\bgamma\|_2^8}  + 2\Big]
\\
 \leq & \sup_{\bgamma \in \Gamma} \frac{1}{N_m} \sum_{i=1}^{N_m} \Big[ \frac{2(\bX_i\trans\bgamma)^4}{\|\bgamma\|_2^4}  + 2\Big] \leq \sup_{\bgamma \in \Gamma} \Big[ \frac{3\E(\bX_i\trans\bgamma)^4}{\|\bgamma\|_2^4}  + 2\Big] = 11.
\end{aligned}
\end{equation}
\end{footnotesize}
The last inequality follows from Lemma B3 in \cite{wu2021model}.

Lastly,
\vspace{-10pt}
\begin{small}
\begin{equation}
\begin{aligned}
& \max_{1 \leq i \leq N_m} \sup_{\bgamma \in \Gamma} \lambda_{\max }\Big(\Ebb(\bX \bX\trans \mid \bX_i\trans \bgamma)\Big)
 = \max_{1 \leq i \leq N_m} \sup_{\bgamma \in \Gamma} \sup_{\|\bv\|_2=1} \Ebb((\bX\trans \bv)^2   \mid \bX_i\trans \bgamma)  
\\
 = & \max_{1 \leq i \leq N_m} \sup _{\bgamma \in \Gamma} \sup_{\|\bv\|_2=1}\Big[ \frac{(\bX_i\trans\bgamma)^2(\bv\trans\bgamma)^2}{\|\bgamma\|_2^4} + \|\bv\|_2^2 - \frac{(\bv\trans\bgamma)^2}{\|\bgamma\|_2^2}\Big]
 \leq \max_{1 \leq i \leq N_m} \sup_{\bgamma \in \Gamma} \Big[ \frac{(\bX_i\trans\bgamma)^2}{\|\bgamma\|_2^2} + 1 \Big] \leq C \log (p \vee N_m),
\end{aligned}
\end{equation}
\end{small}
with high probability.
}

\section{Additional implementation details and results}\label{sec:app:num:detail}

\subsection{Additional details of the method}\label{sec:app:method}

First, we suggest and comment on how to specify the weighting function $\phi_m$ for different types of $S$, including binary, count, and continuous surrogates. In general, our strategy is to estimate ${\rm Var}(S\mid\bX)$ and use inverse-variance weighting. 
\begin{itemize}
    \item[(a)] Take $\widehat\phi_m(\bX;\bgamma)=\fhat_m^{-1}(\bX;\bgamma)$ for Poisson (count) $S$ satisfying ${\rm Var}(S\mid\bX)=\Ebb(S\mid\bX)$;
    
	\item[(b)] Take $\widehat\phi_m(\bX;\bgamma)=\left[\fhat_m(\bX;\bgamma)\{1-\fhat_m(\bX;\bgamma)\}\right]^{-1}$ for Bernoulli (binary) $S$ satisfying ${\rm Var}(S\mid\bX)=\Ebb(S\mid\bX)[1-\Ebb(S\mid\bX)]$;
	
	\item[(c)] Take $\widehat\phi_m(\bX;\bgamma)=1$ for Gaussian (continuous) $S$ satisfying that ${\rm Var}(S\mid\bX)$ is some constant. For $S$ believed to have heteroscedastic ${\rm Var}(S\mid\bX)$, we could use kernel smoothing on $\{S-\fhat_m(\bX;\bgamma)\}^2$ against $\bX\trans\bgamma$ to estimate ${\rm Var}(S\mid\bX)$, and take $\widehat\phi_m$ as the inverse of this estimator.
 
\end{itemize}

Second, we present the specific form of $\partial_{\bgamma}\fhat_m(\bX_i\supm;\bgamma)$ used to construct the main Step II introduced in Section \ref{sec:method:main} 
\begin{footnotesize}
\begin{equation}
\begin{aligned}
\partial_{\bgamma}\fhat_m(\bX_i\supm;\bgamma) 
= & 
\frac{\sum_{j=1, j\neq i}^{N_m} K_{h_m}'(\bgamma\trans\{\bX_j\supm-\bX_i\supm\})S\supm_j(\bX_j\supm-\bX_i\supm)}{\sum_{j=1, j\neq i}^{N_m} K_{h_m}(\bgamma\trans\{\bX_j\supm-\bX_i\supm\})}\\
&-\frac{\sum_{j=1, j\neq i}^{N_m} K_{h_m}(\bgamma\trans\{\bX_j\supm-\bX_i\supm\})S\supm_j \sum_{j=1, j\neq i}^{N_m} K_{h_m}'(\bgamma\trans\{\bX_j\supm-\bX_i\supm\})(\bX_j\supm-\bX_i\supm)}{\left[\sum_{j=1, j\neq i}^{N_m} K_{h_m}(\bgamma\trans\{\bX_j\supm-\bX_i\supm\})\right]^2}. 
\end{aligned}   
\label{equ:form:pf}
\end{equation}
\end{footnotesize}

\subsection{Tuning strategies}\label{sec:app:tune:method}
{

\begin{itemize}

\item For SL, the regularization parameter $\lambda\subsup$ is tuned on the labeled data set $\Lscr\supone$ using $10$-fold cross-validation (CV), specifically using the default \texttt{cv.glmnet} function in R-package \texttt{glmnet}. The candidate range for $\lambda\subsup$ is set as the default range of \texttt{cv.glmnet}.

\item For SASH and IPD, we select $\lambda\supm_t$ at each iteration $t$ by minimizing the BIC:
\[ \small
{\rm BIC}\supm_t(\lambda)=\frac{Q\supm(\widehat\bgamma\supm_{[t]}(\lambda);\widehat\bgamma_{[t-1]}\supm)}{\{\widehat{\sigma}_{[t-1]}\supm\}^2} +\df\{\widehat\bgamma_{[t]}\supm(\lambda)\}\log(N_m),
\]
with respect to $\lambda$, where $\{\widehat\sigma_{[t-1]}\supm\}^2=L\supm(\widehat\bgamma_{[t-1]}\supm)$ and $\widehat\bgamma\supm_{[t]}(\lambda)$ is the estimator obtained by (\ref{eqn:step2}) with the penalty parameter $\lambda\supm_t$ set as $\lambda$. The candidate range for $\lambda\supm_t$ is $0.005\times \{1,2,\ldots,600\}\times(\log p/N_m)^{1/2}$. As suggested in Section \ref{sec:tuning}, we fix $h_m= \{{\log (p \vee N_m)}/{N_m} \}^{1/5}$, where the unknown $s^{1/5}$ is set directly as $1$. Noting that $30^{1/5}<2$, this choice $s^{1/5}=1$ is reasonable. As described in Section \ref{sec:tuning}, $\lambda$ and $\lambda^\dagger$ used in Step III are also selected using the BICs. Their candidate range is set as $0.003\times \{1,2,\ldots,2000\}\times(\log p/N)^{1/2}$. For IPD, the tuning strategies and candidate sets are the same as those of SASH, except for the BIC in its Step III, which is naturally set as the IPD loss of the SIM without the privacy constraint.

\item For pLasso and PASS, the parameters are tuned in the same way as \cite{zhang2022prior}. For the adaptive Lasso version of L1LS implemented in these two methods, they use the BIC estimated by 10-fold CV to select the penalty coefficient, with the candidate range set $[\zeta_m/N_m,1.5 \zeta_m]$, where $\zeta_m$ is the maximum absolute correlation with $S\supm$ among all predictors. For the penalty parameters $\lambda_1$ and $\lambda_2$ used in the prior-adaptive regression of PASS \citep{zhang2022prior}, they use 10-fold CV with the range of $\lambda_1$ set as $[\nu/n,1.5 \nu]$, where $n$ is the labeled size and $\nu$ is the correlation between $S\supone$ and $Y\supone$, and the candidate range of $\kappa=\lambda_1/\lambda_2$ set as \{1,2,\ldots,8\}.

\item For SS-uLasso, the $\ell_1$-penalty parameter of uLasso is selected via CV using \texttt{cv.glmnet} in R-package \texttt{glmnet}, with candidate ranges determined by its default settings.

\item  For SS-RMRCE, we use the R-package \texttt{RMRCE}, which performs cross-validation to select the optimal tuning parameters for RMRCE. For the candidate sets of the tuning parameters in the function \texttt{RMRCE\_cv}, their $\ell_1$-penalty coefficient $\lambda$ is chosen from $\{0.001, 0.01, 0.1\}$ and the smoothness parameter $\alpha$ is chosen from $\{1, 2, 3, 4, 5\}$. This choice is the same as their running example provided in \url{https://rdrr.io/github/zji90/RMRCE/}.

\end{itemize}

}

\subsection{Simulation on SASH+}\label{sec:app:sashplus}

See Figure \ref{fig:error:sash} for the results.

\begin{figure}[htb!]
\centering
\begin{tabular}{cccc}
{\small (A) Weak Surrogates - $\bbeta$} &
{\small (B) Weak Surrogates - $\bgamma$} &
{\small (C) Strong Surrogates - $\bbeta$} &
{\small (D) Strong Surrogates - $\bgamma$} \\
[2pt]
\includegraphics[width=.22\textwidth]{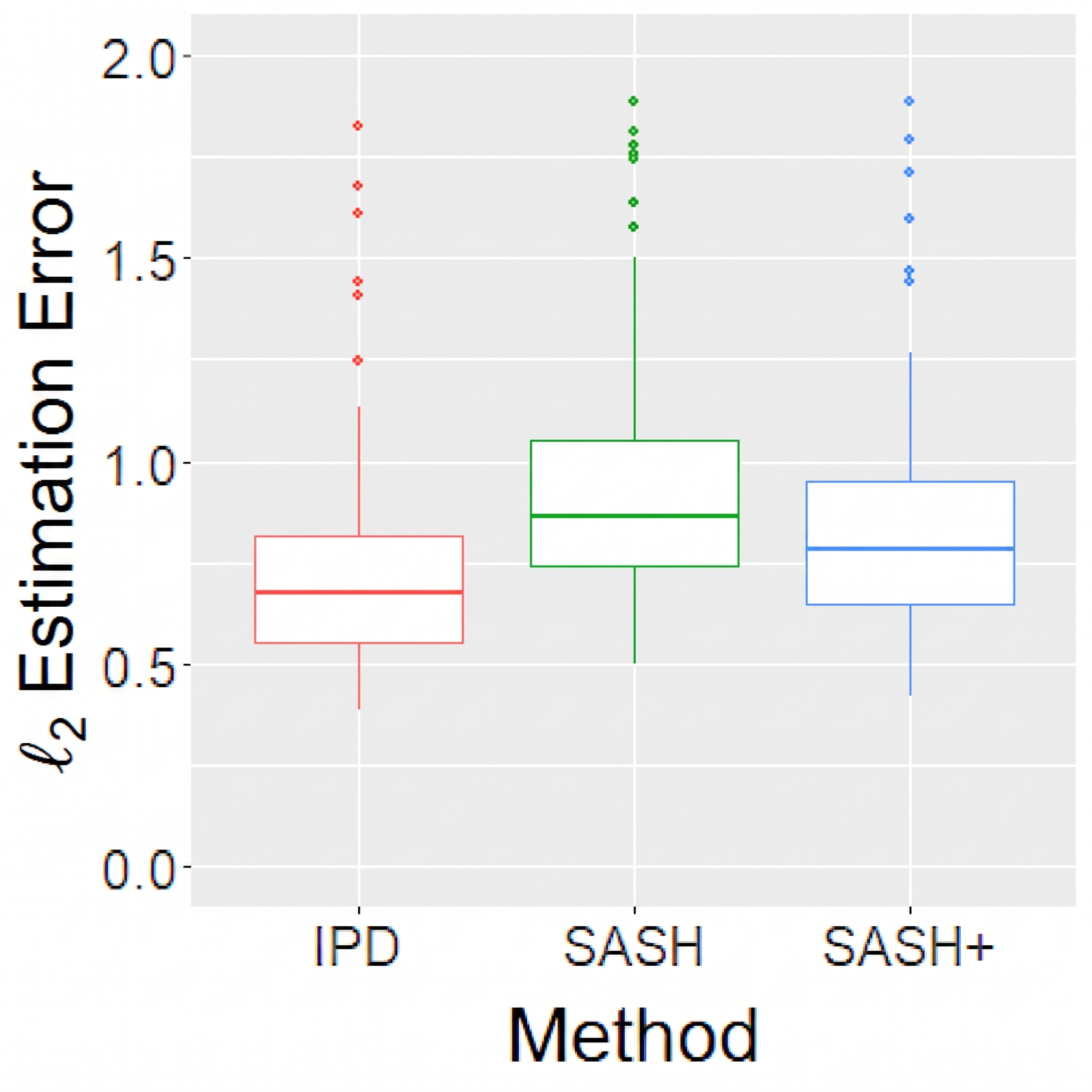} &
\includegraphics[width=.22\textwidth]{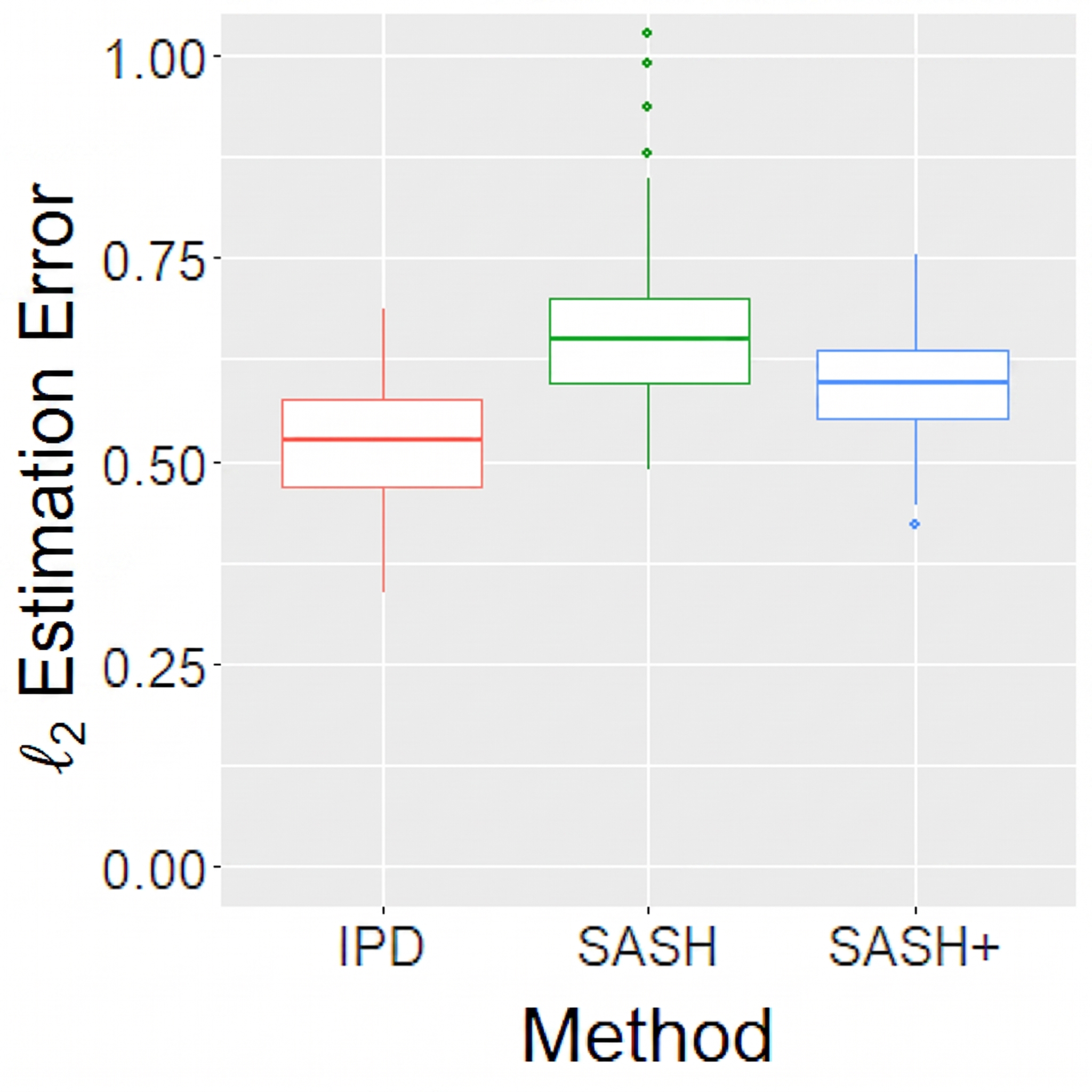} &
\includegraphics[width=.22\textwidth]{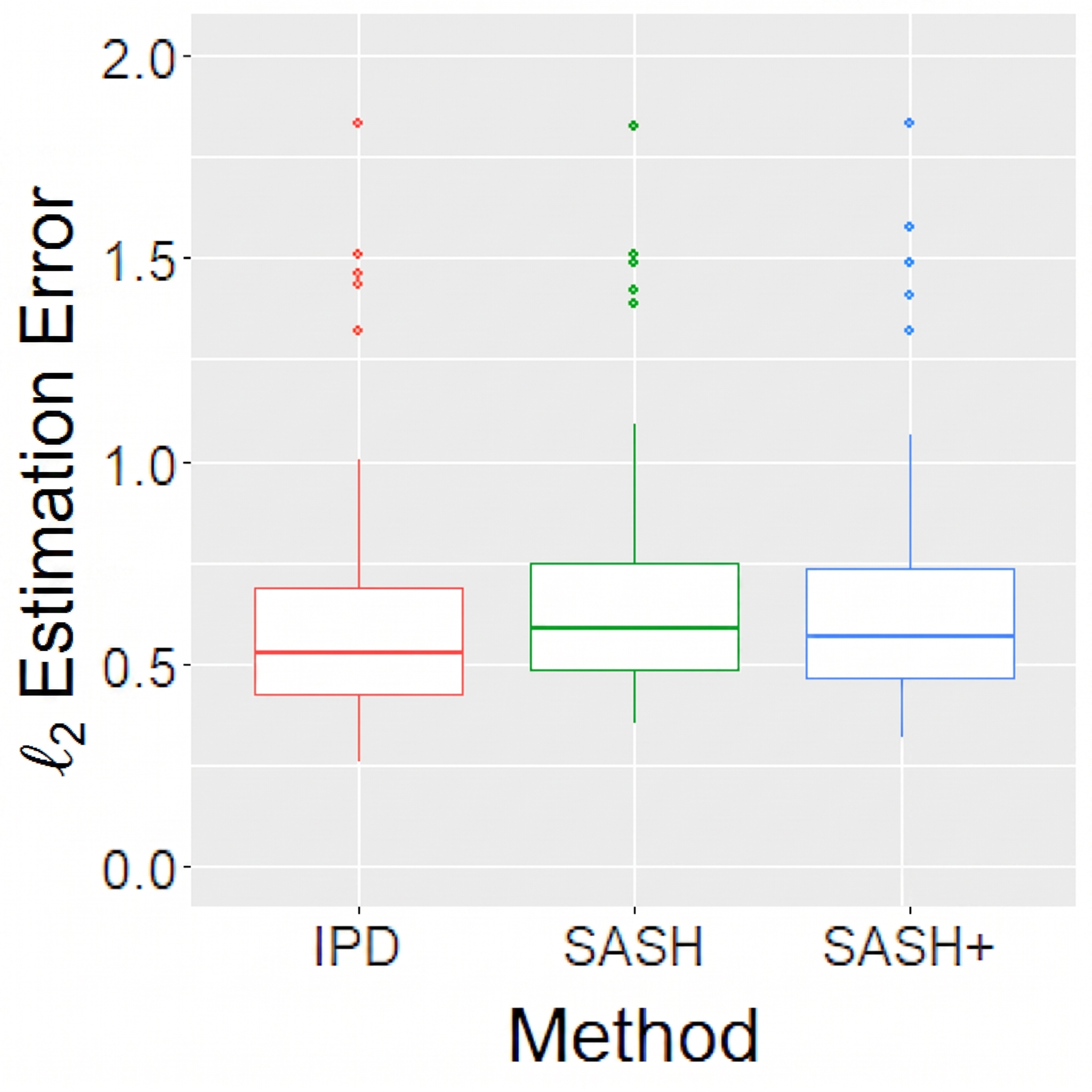} &
\includegraphics[width=.22\textwidth]{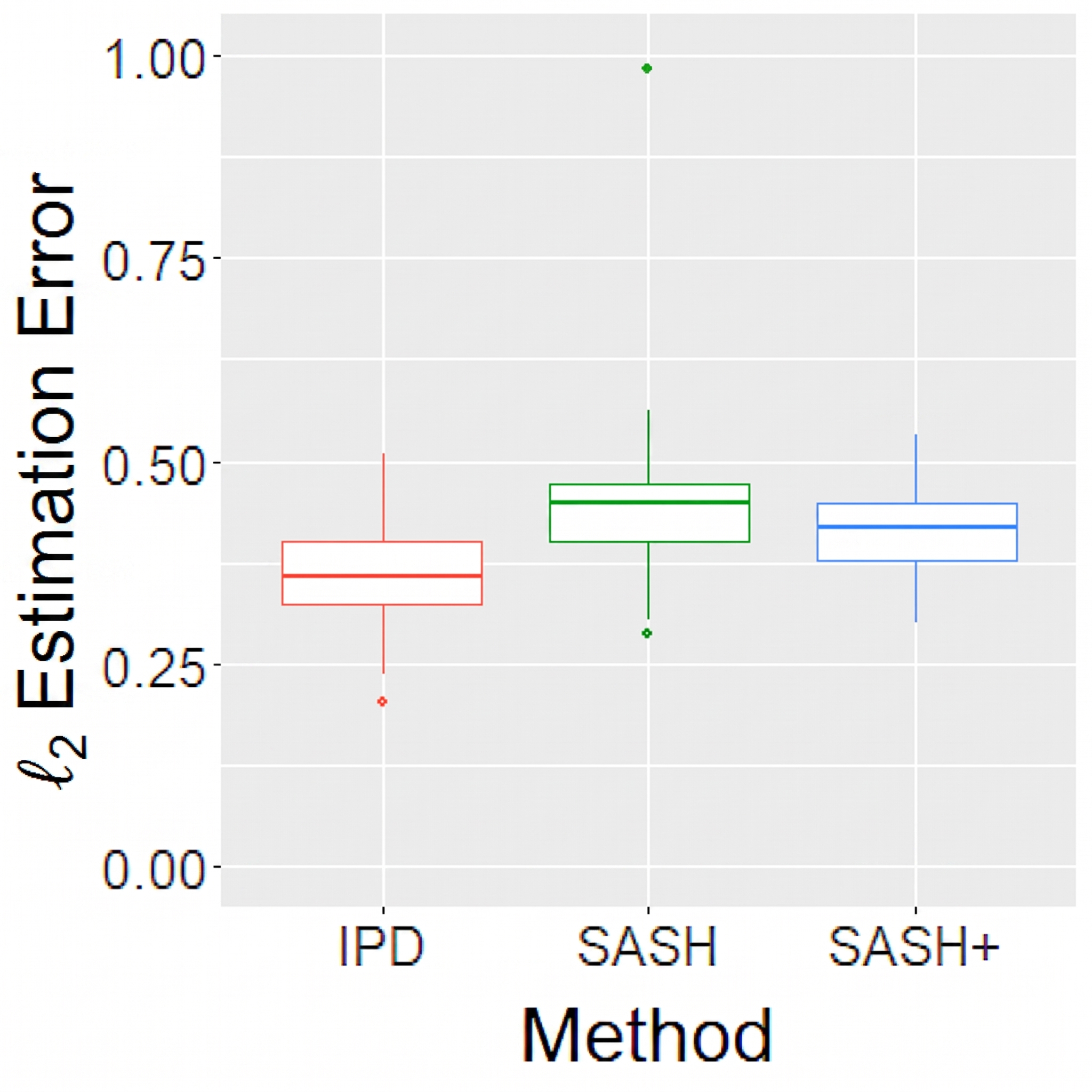}
\end{tabular}
 \vspace{-10pt}
\caption{Boxplots of the $\ell_2$ estimation errors of $\bbeta$ and $\bgamma$ under the strong surrogate and weak surrogate settings with $N=8000$, $M=4$, $n=200$, and $p=300$. The results are based on 200 simulation replications.}
\label{fig:error:sash}
\end{figure}

{

\subsection{Sensitivity analysis on tuning parameter $h$}\label{sec:analysis:h}

In this section, we perform a sensitivity analysis on the tuning parameter $h$. This tuning parameter plays a crucial role in our method and its selection is essential for obtaining reliable results. We compare the results obtained using three different values of $h$ taken $0.5h_0$, $h_0$, and $2h_0$, where $h_0:= \{{s \log (p \vee N_m)}/{N_m} \}^{1/5}$ is a fixed value of the tuning parameter chosen based on Theorem \ref{thm:local}. This is a wide enough range as the largest value is four times the smallest.

We generate the data following the strong surrogate setting introduced in Section \ref{sec:sim}, and implement our method with $h = 0.5h_0, h_0$ and $ 2h_0$. The resulting estimation errors are presented in Figure \ref{fig:h}. It indicates that there is no significant difference in the overall performance of the method under these different values of $h$. For example, the mean estimation error with $h=2h_0$ is $4.1\%$ smaller than that with $h=h_0$. In comparison, the mean error of IPD is $9.2\%$ smaller than that of SASH with $h=h_0$. Recall that the estimation error of all other methods (SS-uLasso, PASS, SS-RMRCE, pLasso, SL) is over $60\%$ larger than the estimation error of SASH as shown in Section \ref{sec:sim}. Therefore, the performance discrepancy due to different choices of $h$ (among a wide range) is much smaller compared to the discrepancy between our methods and other benchmark methods.

In conclusion, this sensitivity analysis shows that SASH is not sensitive to the choice of $h$. Although the choice of $h$ could affect the variation of estimation errors across different repetitions, it does not change the mean (overall) performance substantially. Therefore, for the main body of this paper, we use $h = h_0$ for simplicity and to ensure consistency across our analyses.


\begin{figure}[htb!]
	\begin{center}
		\begin{tabular}{c}
\includegraphics[width=0.42\textwidth,angle=0]{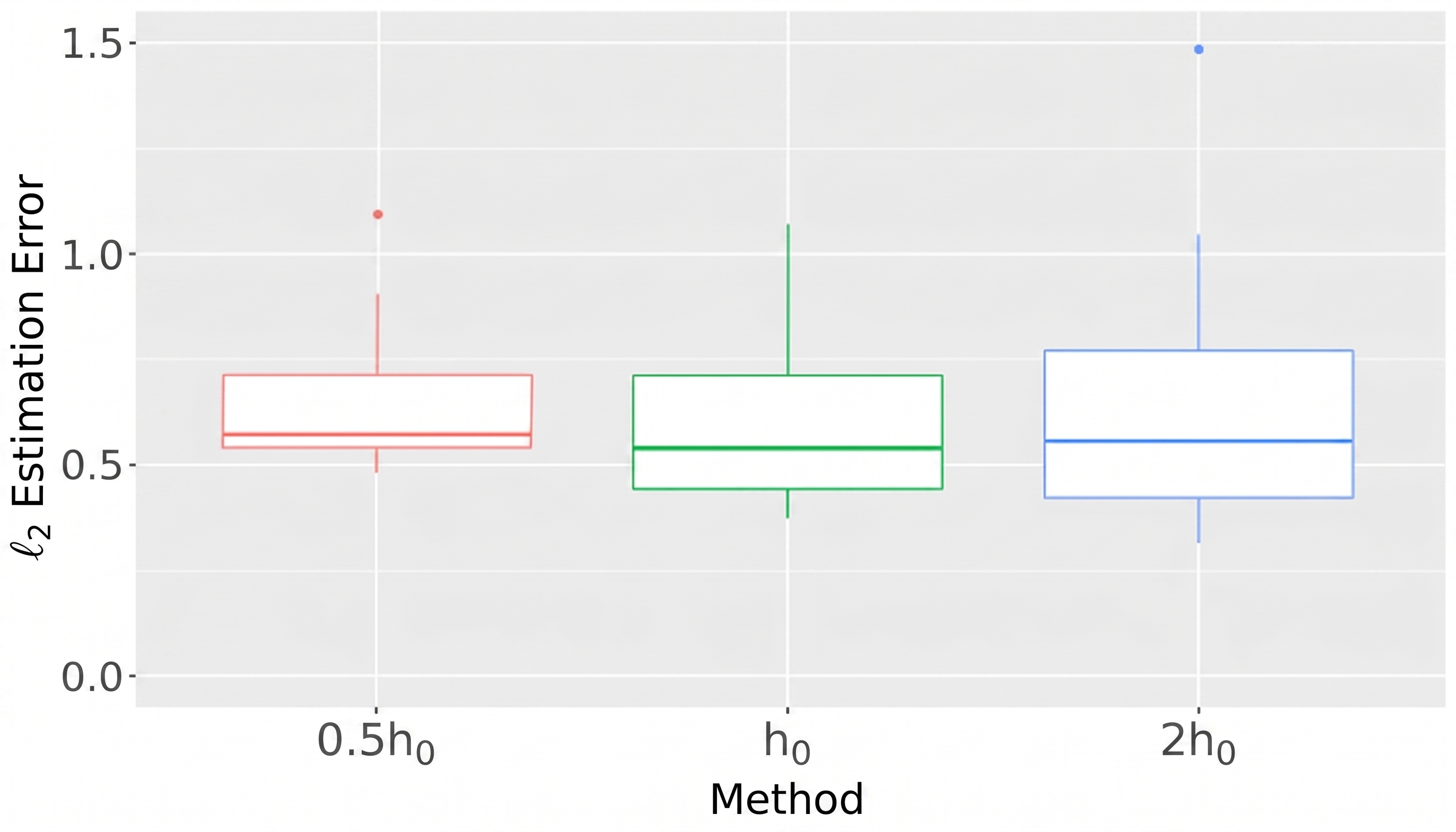} 
		\end{tabular}
	\end{center}
    \vspace{-20pt}
	\caption{Estimation error of SASH with different choices of $h$ under the strong surrogate setting with $N=8000$, $M=4$, $n=200$, and $p=300$, where $h_0$ is as defined in Appendix \ref{sec:analysis:h}.  It indicates that there is no significant difference in the overall performance of the method under these different values of $h$.} 
	\label{fig:h}
\end{figure}



}

\newpage

\subsection{Summary table for the selected features in our real-world study}\label{sec:table}
\renewcommand{\arraystretch}{0.95}

\begin{table}[htb!]
\centering
\resizebox{0.85\textwidth}{!}
{
\begin{tabular}{llllllll}
  \hline
 & SASH & SASH-CI & PASS & pLasso & SL & SS-RMRCE & SS-uLasso \\ 
  \hline
Ethnicity (EUR) & -1.26 &  (-2.13, -0.39) & -1.72 & 0.00 & -1.00 & 0.21 & -1.25 \\ 
Gender (Female) & -0.50 &  (-0.85, -0.15) & -0.99 & -0.05 & 0.00 & 0.00 & 0.00 \\ 
rs35011184\_A & 0.29 &  (0.09, 0.49) & 0.67 & 0.00 & 0.00 & 0.00 & 0.50 \\ 
rs10811662\_A & -0.20 &  (-0.33, -0.06) & -2.58 & -0.11 & 0.00 & 0.00 & -0.30 \\ 
rs1564348\_C & 0.11 &  (0.03, 0.19) & -0.52 & -0.08 & 0.00 & 0.00 & 0.00 \\ 
rs7756992\_G & 0.11 &  (0.03, 0.19) & 0.34 & 0.00 & 0.00 & 0.00 & 0.00 \\ 
rs11649653\_G & -0.11 &  (-0.18, -0.03) & -0.23 & 0.00 & 0.00 & 0.00 & -0.12 \\ 
rs2902940\_G & 0.09 &  (0.03, 0.16) & 0.00 & -0.01 & 0.00 & 0.00 & 0.03 \\ 
rs1077514\_T & -0.09 &  (-0.15, -0.03) & -0.25 & -0.02 & 0.00 & 0.00 & 0.00 \\ 
rs10128711\_C & -0.08 &  (-0.13, -0.02) & 0.00 & 0.00 & 0.00 & -0.19 & 0.00 \\ 
rs3136441\_C & -0.07 &  (-0.12, -0.02) & 0.00 & 0.00 & 0.00 & 0.00 & 0.00 \\ 
rs2972146\_T & 0.07 &  (0.02, 0.12) & 0.29 & 0.00 & 0.00 & 0.00 & 0.00 \\ 
rs11869286\_C & -0.06 &  (-0.11, -0.02) & 0.00 & 0.00 & 0.00 & 0.00 & 0.00 \\ 
rs2131925\_T & -0.06 &  (-0.10, -0.02) & 0.00 & 0.02 & 0.00 & 0.00 & 0.00 \\ 
rs12328675\_C & -0.06 &  (-0.10, -0.02) & 0.00 & 0.01 & 0.00 & 0.00 & 0.00 \\ 
rs38855\_G & -0.06 &  (-0.09, -0.02) & 0.00 & 0.00 & 0.00 & 0.00 & 0.00 \\ 
rs2328223\_C & 0.05 &  (0.02, 0.09) & 0.00 & 0.00 & 0.00 & 0.00 & 0.07 \\ 
rs11151789\_C & -0.05 &  (-0.09, -0.02) & 0.00 & 0.03 & 0.00 & 0.00 & 0.00 \\ 
rs17299838\_G & -0.05 &  (-0.09, -0.02) & 0.00 & -0.07 & 0.00 & 0.00 & -0.18 \\ 
rs4846914\_A & -0.04 &  (-0.07, -0.01) & -0.25 & -0.04 & 0.00 & 0.00 & 0.00 \\ 
rs2871865\_G & 0.04 &  (0.01, 0.07) & 0.20 & 0.00 & 0.00 & 0.00 & 0.22 \\ 
rs10150332\_C & 0.04 &  (0.01, 0.07) & 0.00 & 0.02 & 0.00 & 0.00 & 0.00 \\ 
rs571312\_A & 0.04 &  (0.01, 0.07) & 0.00 & 0.00 & 0.00 & 0.00 & 0.19 \\ 
rs4771122\_A & 0.04 &  (0.01, 0.06) & 0.00 & 0.00 & 0.00 & 0.00 & -0.05 \\ 
rs10187654\_T & 0.04 &  (0.01, 0.06) & 0.07 & 0.00 & 0.00 & 0.00 & 0.26 \\ 
rs622418\_A & -0.03 &  (-0.06, -0.01) & 0.00 & 0.00 & 0.00 & 0.00 & 0.00 \\ 
rs2606736\_T & -0.03 &  (-0.06, -0.01) & 0.00 & 0.00 & 0.00 & 0.00 & 0.00 \\ 
rs2890652\_C & 0.03 &  (0.01, 0.06) & 0.00 & 0.00 & 0.00 & 0.00 & 0.00 \\ 
rs4142995\_T & 0.03 &  (0.01, 0.05) & 0.00 & 0.00 & 0.00 & 0.00 & 0.00 \\ 
rs604109\_T & 0.03 &  (0.01, 0.05) & 0.00 & 0.04 & 0.00 & 0.00 & 0.00 \\ 
rs1260326\_C & 0.03 &  (0.01, 0.05) & -1.68 & -0.07 & 0.00 & 0.00 & 0.05 \\ 
rs2867125\_C & 0.03 &  (0.01, 0.05) & 0.00 & 0.00 & 0.00 & 0.00 & 0.05 \\ 
rs2922236\_C & -0.03 &  (-0.05, -0.01) & 0.00 & -0.04 & 0.00 & 0.00 & 0.00 \\ 
rs364585\_G & 0.03 &  (0.01, 0.05) & 0.00 & 0.00 & 0.00 & 0.00 & 0.00 \\ 
rs174546\_T & -0.02 &  (-0.04, -0.01) & -0.22 & 0.00 & 0.00 & 0.00 & 0.00 \\ 
rs2081687\_C & 0.02 &  (0, 0.03) & -0.13 & -0.02 & 0.00 & 0.00 & 0.00 \\ 
rs4650994\_A & -0.02 &  (-0.03, 0) & 0.00 & 0.02 & 0.00 & 0.00 & 0.00 \\ 
rs12444979\_T & -0.02 &  (-0.03, 0) & 0.00 & 0.00 & 0.00 & 0.00 & -0.08 \\ 
rs10901371\_A & -0.02 &  (-0.03, 0) & 0.00 & -0.02 & 0.00 & 0.00 & -0.15 \\ 
rs3184504\_C & 0.01 &  (0, 0.02) & 0.00 & 0.00 & 0.00 & 0.00 & 0.10 \\ 
rs698842\_T & 0.01 &  (0, 0.02) & 0.00 & 0.03 & 0.00 & 0.00 & 0.08 \\ 
rs7570971\_A & 0.01 &  (0, 0.02) & 0.09 & 0.00 & 0.00 & 0.00 & 0.00 \\ 
rs7134375\_A & -0.01 &  (-0.02, 0) & 0.00 & 0.00 & 0.00 & 0.00 & 0.00 \\ 
rs217386\_A & 0.01 &  (0, 0.02) & 0.00 & 0.00 & 0.00 & 0.00 & 0.00 \\ 
rs6864049\_G & 0.01 &  (0, 0.02) & -0.93 & -0.11 & 0.00 & 0.00 & 0.02 \\ 
rs10904908\_G & -0.01 &  (-0.01, 0) & 0.00 & 0.00 & 0.00 & 0.00 & -0.13 \\ 
  \hline
\end{tabular}
}
\caption{\label{tab:EHR2} Features assigned nonzero estimated coefficients by SASH, PASS, pLasso, SL, SS-RMRCE, or SS-uLasso. We also display the confidence intervals of SASH constructed using the procedure described in Section \ref{sec:method:ci}.}
\end{table}

\end{document}